\newtheorem{proposition}{Proposition}
\newtheorem{theorem}{Theorem}
\newtheorem{lemma}{Lemma}
\theoremstyle{definition}
\definecolor{gray}{rgb}{0.5,0.5,0.5}
\newcommand{\ip}[2]{\langle\,#1\,\vert\,#2\,\rangle} 
\newcommand{\kb}[2]{\vert\,#1\,\rangle\langle\,#2\,\vert} 
\newcommand{\no}[1]{\left\|\,#1\,\right\|} 
\newcommand{\tr}[1]{{\rm tr}\left[#1\right]} 
\newcommand{\ptr}[2]{{\rm tr}_{#1} #2} 
\newcommand{\id}{\mathds{1}} 
\newcommand{\Mo}{\mathsf{M}}
\newcommand{\Qo}{\mathsf{Q}}
\newcommand{\Po}{\mathsf{P}}
\newcommand{\No}{\mathsf{N}}
\newcommand{\Go}{\mathsf{G}}
\newcommand{\cond}{\,|\,}
\newcommand{\Lcorner}{{\mathchoice{\mathrel{\makebox[7pt][c]{\rule{.4pt}{7.5pt}\rule{5pt}{.4pt}}}}{\mathrel{\makebox[7pt][c]{\rule{.4pt}{7.5pt}\rule{5pt}{.4pt}}}}{\mathrel{\makebox[5.5pt][c]{\rule{.4pt}{5.25pt}\rule{3.5pt}{.4pt}}}}{\mathrel{\makebox[4pt][c]{\rule{.4pt}{3.75pt}\rule{2.5pt}{.4pt}}}}}}
\definecolor{lightblue}{RGB}{179,179,255}
\definecolor{darkblue}{RGB}{126,126,255}
\definecolor{darkorange}{RGB}{255,135,0}
\definecolor{lightgreen}{RGB}{0,233,72}
\definecolor{darkgreen}{RGB}{0,153,51}
\newtheorem{result}{Result}
\newcommand{\rhoo}{\varrho} 
\newcommand{\phii}{\varphi} 
\newcommand{\Cb}{\mathbb{C}} 
\newcommand{\Fb}{\mathbb{F}} 
\newcommand{\Rb}{\mathbb{R}} 
\newcommand{\Zb}{\mathbb{Z}} 
\newcommand{\ca}[1]{\mathcal{#1}} 
\newcommand{\lc}[1]{\mathcal{L}(\Cb^{#1})} 
\newcommand{\elle}[1]{\mathcal{L}(#1)} 
\newcommand{\ff}{\mathcal{F}}
\newcommand{\ii}{\mathcal{J}}
\newcommand{\CR}[1]{\textrm{CR} #1} 
\newcommand{\ECR}[1]{\textrm{ECR} #1} 
\def\opaco{0.25}\def\evid{red}
\tikzset{declare function = {
ratan(\x) = rad(atan(\x)) ;
rsin(\x) = sin(deg(\x)) ;
rcos(\x) = cos(deg(\x)) ;
}}
\tikzset{pics/figQPII/.style args={#1/#2/#3/#4}{code={

\begin{scope}[scale={#4*(1-1/#1)}]

\begin{scope}[opacity={\ifnum#2=0 1 \else \opaco \fi}, transparency group]

\foreach \k/\area in {1/lightgreen,2/darkgreen}{

\fill[color=\area]
({-1/(#1^\k-1)},{-1/(#1^\k-1)}) -- ({-1/(#1^\k-1)},1) -- (1,1) -- (1,{-1/(#1^\k-1)}) -- ({-1/(#1^\k-1)},{-1/(#1^\k-1)}) ;

}

\foreach \k/\area in {1/lightblue,2/darkblue}{

\pgfmathparse{\k==1 && #1==2 ? 1 : 0}

\ifnum\pgfmathresult=1

\fill[color=\area] (0,0)circle(1);

\else

\pgfmathsetmacro{\thetaz}{pi-ratan(sqrt(#1^\k-1))}

\fill[domain={-\thetaz}:{\thetaz}, samples=40, color=\area]
({-1/(#1^\k-1)},{-1/(#1^\k-1)}) -- plot ({(#1^\k * rcos(\x + \thetaz) + 2 - #1^\k)/(2 * (1 - #1^\k))}, {(#1^\k * rcos(\x - \thetaz) + 2 - #1^\k)/(2*(1 - #1^\k))}) -- ({-1/(#1^\k-1)},{-1/(#1^\k-1)}) ;

\fi

}

\foreach \k/\curve in {1/darkorange,2/red}{

\pgfmathsetmacro{\cond}{\k==1 && #1==2 ? 1 : 0}

\ifnum\cond=1

\draw[color=\curve, thick] (0,0)circle(1);

\else

\pgfmathsetmacro{\thetaz}{pi-ratan(sqrt(#1^\k-1))}

\draw[thick, domain={-\thetaz}:{\thetaz}, samples=40, color=\curve]
plot ({(#1^\k * rcos(\x + \thetaz) + 2 - #1^\k)/(2 * (1 - #1^\k))}, {(#1^\k * rcos(\x - \thetaz) + 2 - #1^\k)/(2*(1 - #1^\k))}) ;

\draw[dashed, domain={\thetaz}:{2*pi-\thetaz}, samples=40]
plot ({(#1^\k * rcos(\x + \thetaz) + 2 - #1^\k)/(2 * (1 - #1^\k))}, {(#1^\k * rcos(\x - \thetaz) + 2 - #1^\k)/(2*(1 - #1^\k))}) ;

\fi

\draw[thick, dashed]
\ifnum\k=1 ({-1/(#1^2-1)},1) \else (1,1) \fi -- ({-1/(#1^\k-1)},1) -- ({-1/(#1^\k-1)},{-1/(#1^\k-1)}) -- (1,{-1/(#1^\k-1)}) -- \ifnum\k=1 (1,{-1/(#1^2-1)}) \else (1,1) \fi ;

\ifnum\cond=1

\draw (-1,0)node[anchor=north east]{\tiny $-1$};
\draw (0,-1)node[anchor=north east]{\tiny $-1$}node[anchor=north east]{\tiny $\phantom{-\frac{1}{\pgfmathparse{int(#1-1)}\pgfmathresult}}$};

\else

\fill[color=\curve] ({1/(1-#1^\k)},{1/(1-#1^\k)}) circle({0.4pt*(#1/(#1-1))}) ;

\draw ({-1/(#1^\k-1)},0)node[anchor=north east]{\tiny $-\frac{1}{\pgfmathparse{int(#1^\k-1)}\pgfmathresult}$};
\draw (0,{-1/(#1^\k-1)})node[anchor=north east]{\tiny $-\frac{1}{\pgfmathparse{int(#1^\k-1)}\pgfmathresult}$};

\fi

}

\pgfmathsetmacro{\offs}{0.13*#1/(#1-1)}

\draw[->]
({-\offs-1/(#1-1)},0) -- ({1+\offs},0)node[anchor=north]{\small $s$} ;
\draw[->]
(0,{-\offs-1/(#1-1)}) -- (0,{1+\offs})node[anchor=east]{\small $t$} ;

\draw (1,0)node[anchor=north west]{\tiny $1$};
\draw (0,1)node[anchor=south east]{\tiny $1$};

\draw
({-\offs-1/(#1-1)},{\offs+1/(#1-1)+(#1-3)/(#1-1)}) -- ({\offs+1/(#1-1)+(#1-3)/(#1-1)},{-\offs-1/(#1-1)})
;
\draw
({-(1+\offs)+(#1^2-3)/(#1^2-1)},{1+\offs}) -- ({1+\offs},{-(1+\offs)+(#1^2-3)/(#1^2-1)})
;

\end{scope}

\ifnum#2>0

\pgfmathparse{#1==2 && #3==1 ? 1 : 0}

\ifnum\pgfmathresult=1

\draw[ultra thick, color=\evid]
([shift=({deg(3 * pi / 2 - #2 * (pi / 2))}:1)]0,0) arc ({deg(3 * pi / 2 - #2 * (pi / 2))} : {deg(pi / 2 - #2 * (pi / 2))} : 1);

\else

\ifnum#2=3

\fill[color=\evid] ({1/(1-#1^#3)},{1/(1-#1^#3)}) circle({0.6pt*(#1/(#1-1))}) ;

\else

\pgfmathsetmacro{\thetaz}{pi-ratan(sqrt(#1^#3-1))}
\pgfmathsetmacro{\thetau}{(-1)^#2 * \thetaz + int(#2/2) * pi}

\draw[ultra thick, domain={\thetau}:{\thetau+pi}, samples=40, color=\evid]
plot ({(#1^#3 * rcos(\x + \thetaz) + 2 - #1^#3)/(2 * (1 - #1^#3))}, {(#1^#3 * rcos(\x - \thetaz) + 2 - #1^#3)/(2*(1 - #1^#3))}) ;

\fi

\fi

\fi

\end{scope}

}}}
\tikzset{pics/figQI/.style args={#1/#2/#3}{code={

\begin{scope}[scale={#3*(1-1/(#1^2))}]

\pgfmathsetmacro{\thetaz}{pi-ratan(sqrt(#1^2-1))}

\begin{scope}[opacity={\ifnum#2=0 1 \else \opaco \fi}, transparency group]

\foreach \k/\area in {#1/lightgreen,{#1^2}/darkgreen}{

\fill[color=\area]
({-1/(\k-1)},{-1/(#1^2-1)}) -- ({-1/(\k-1)},1) -- (1,1) -- (1,{-1/(#1^2-1)}) -- ({-1/(\k-1)},{-1/(#1^2-1)}) ;

}

\foreach \z/\area in {-1/lightblue,1/darkblue}{

\pgfmathparse{\z==-1 && #1>2 ? 1 : 0}

\ifnum\pgfmathresult=1

\fill[domain={(#1 - 2) / (2 * (#1 - 1))}:1, samples=100, smooth, color=lightblue]
({- 1 / (#1 - 1)},{- 1 / (#1^2 - 1)}) -- plot ( {(1 / #1^2) * (- 2 * (1 - \x) - 2 * sqrt((1 - #1^2) * (\x)^2 + (#1^2 - 2) * \x + 1))}, \x ) -- (0,1) -- (0,{- 1 / (#1^2 - 1)}) -- cycle ;

\else

\fill[domain={-\thetaz}:{\thetaz}, samples=40, color=\area]
({\z * (- 1 / (#1^2 -1 ))},{- 1 / (#1^2 - 1)}) -- plot ({\z * (#1^2 * rcos(\x + \thetaz) + 2 - #1^2) / (2 * (1 - #1^2))}, {(#1^2 * rcos(\x - \thetaz) + 2 - #1^2) / (2 * (1 - #1^2))}) -- ({\z * (- 1 / (#1^2 - 1))},{- 1 / (#1^2 - 1)}) ;

\fi

}

\foreach \z in {-1,1}{

\pgfmathparse{\z==-1 && #1>2 ? 1 : 0}

\ifnum\pgfmathresult=1

\draw[thick, domain={(#1 - 2) / (2 * (#1 - 1))}:1, samples=100, smooth, red]
({- 1 / (#1 - 1)},{- 1 / (#1^2 - 1)}) -- plot ( {(1 / #1^2) * (- 2 * (1 - \x) - 2 * sqrt((1 - #1^2) * (\x)^2 + (#1^2 - 2) * \x + 1))}, \x ) -- (0,1) ;

\draw[domain={- 1 / (#1^2 - 1)}:1, samples=100, smooth, dashed]
plot ( {(1 / #1^2) * (- 2 * (1 - \x) + 2 * sqrt((1 - #1^2) * (\x)^2 + (#1^2 - 2) * \x + 1))}, \x ) -- (0,1) ;

\draw[domain={- 1 / (#1^2 - 1)}:{(#1 - 2) / (2 * (#1 - 1))}, samples=100, smooth, dashed]
plot ( {(1 / #1^2) * (- 2 * (1 - \x) - 2 * sqrt((1 - #1^2) * (\x)^2 + (#1^2 - 2) * \x + 1))}, \x ) ;

\else

\draw[thick, domain={\thetaz-pi}:{\thetaz}, samples=40, color=red]
plot ({\z * (#1^2 * rcos(\x + \thetaz) + 2 - #1^2) / (2 * (1 - #1^2))}, {(#1^2 * rcos(\x - \thetaz) + 2 - #1^2) / (2 * (1 - #1^2))}) ;

\draw[dashed, domain={\thetaz}:{\thetaz + pi}, samples=40]
plot ({\z * (#1^2 * rcos(\x + \thetaz) + 2 - #1^2) / (2 * (1 - #1^2))}, {(#1^2 * rcos(\x - \thetaz) + 2 - #1^2) / (2 * (1 - #1^2))}) ;

\fi

}

\foreach \k in {1,2}{

\draw[thick, dashed]
\ifnum\k=1 ({-1/(#1^2-1)},1) \else (1,1) \fi -- ({-1/(#1^\k-1)},1) -- ({-1/(#1^\k-1)},{-1/(#1^2-1)}) -- \ifnum\k=1 ({-1/(#1^2-1)},{-1/(#1^2-1)}) \else (1,{-1/(#1^2-1)}) -- (1,1) \fi ;

\draw ({-1/(#1^\k-1)},0)node[anchor=north east]{\tiny $- \pgfmathparse{\k==1 && #1==2 ? 1 : 0} \ifnum\pgfmathresult=1 1 \else \frac{1}{\pgfmathparse{int(#1^\k-1)}\pgfmathresult} \fi$};

\ifnum\k=2

\draw (0,{-1/(#1^2-1)})node[anchor=north east]{\tiny $-\frac{1}{\pgfmathparse{int(#1^2-1)}\pgfmathresult}$};

\fi

}

\pgfmathsetmacro{\offs}{0.18*(#1^2)/(#1^2-1)}

\draw[->]
({-\offs-1/(#1-1)},0) -- ({1+\offs},0)node[anchor=north]{\small $s$} ;
\draw[->]
(0,{-\offs-1/(#1^2-1)}) -- (0,{1+\offs})node[anchor=east]{\small $t$} ;

\draw (1,0)node[anchor=north west]{\tiny $1$};
\draw (0,1)node[anchor=south east]{\tiny $1$};

\draw
({-(1+\offs)+(#1^2-3)/(#1^2-1)},{1+\offs}) -- ({\offs+1/(#1^2-1)+(#1^2-3)/(#1^2-1)},{-\offs-1/(#1^2-1)});

\ifnum#1=2

\draw
({(2/#1^2)*(-\offs-1/(#1^2-1)-1)},{-\offs-1/(#1^2-1)}) -- ({(2/#1^2)*((1+\offs)-1)},{1+\offs});

\else

\draw
({-\offs-1/(#1-1)},{(#1/2)*(-\offs-1/(#1-1))+1}) -- ({(2/#1)*((1+\offs)-1)},{1+\offs});

\fill[color=red] ({1/(1-#1)},{1/(1-#1^2)}) circle({0.4pt*(#1/(#1-1))}) ;

\fi

\end{scope}

\pgfmathparse{#1>2 && #2==3 ? 1 : 0}

\ifnum\pgfmathresult=1

\draw[ultra thick, domain={- 1 / (#1^2 - 1)}:1, samples=100, smooth, color=\evid]
plot ( {(1 / #1^2) * (- 2 * (1 - \x) - 2 * sqrt((1 - #1^2) * (\x)^2 + (#1^2 - 2) * \x + 1))}, \x ) -- (0,1) ;

\else

\ifnum#2=4

\fill[color=\evid] ({1/(1-#1)},{1/(1-#1^2)}) circle({0.6pt*(#1/(#1-1))}) ;

\else

\ifnum#2>0

\def\z{\ifnum#2=3 -1 \else 1\fi}
\def\thetau{\ifnum#2=1 -\thetaz \else \thetaz+pi\fi}

\draw[ultra thick, domain={\thetau}:{\thetau+pi}, samples=40, color=\evid]
plot ({\z * (#1^2 * rcos(\x + \thetaz) + 2 - #1^2) / (2 * (1 - #1^2))}, {(#1^2 * rcos(\x - \thetaz) + 2 - #1^2) / (2 * (1 - #1^2))}) ;

\fi

\fi

\fi

\end{scope}

}}}
\begin{document}

\title[]{Postponing the Choice: Advantage of deferred measurements in quantum information processing}

\author[C.Carmeli]{Claudio Carmeli}
\address{Claudio Carmeli, DIME , Universit\`a di Genova, via Balbi 5, 16126 Genova, Italy\\
INFN Sezione di Genova, Via Dodecaneso 33, 16146 Genova, Italy}
\email[C.Carmeli]{claudio.carmeli@unige.it}
\author[T.Heinosaari]{Teiko Heinosaari}
\address{Teiko Heinosaari, Faculty of Information Technology, University of Jyv\"askyl\" a, Finland}
\email[T.Heinosaari]{teiko.heinosaari@jyu}
\author[A.Toigo]{Alessandro Toigo}
\address{Alessandro Toigo, Dipartimento di Matematica, Politecnico di Milano, Piazza Leonardo da Vinci 32, 20133 Milano, Italy \\
Istituto Nazionale di Fisica Nucleare, Sezione di Milano, Via Celoria 16, 20133 Milano, Italy}
\email[A.Toigo]{alessandro.toigo@polimi.it}

\begin{abstract}
Simultaneously implementing two arbitrary quantum measurements on the same system is impossible. The consequence of this limitation is that selecting one measurement actively excludes other possibilities. Two incompatible choices can then be forced together only at the cost of adding enough noise to the measurements.
An intriguing alternative is to postpone the choice, or part of it, until a later stage. We explore the advantages of this deferred decision-making and discover that the benefits critically depends on the assumptions about the forthcoming choice. In certain scenarios postponing the choice introduces no additional cost, while in others partial postponement can be effectively the same as full postponement.
\end{abstract}

\maketitle

\section{Introduction}\label{sec:intro}

The domain of quantum information revolves around the utilization of quantum systems to encode and convey information.
The necessary step in reading out the information written in the state of a quantum system is to carry out a measurement, which in the usual case is a sharp measurement corresponding to a Hilbert space orthonormal basis.
Quantum theory forbids simultaneous measurements of more than one basis, and for this reason one is forced to select a particular basis for readout.
Nonetheless, it is possible to measure more than one basis simultaneously by allowing noise in the measurement outcome distribution, which means that the uncertainty in measurement outcomes increases \cite{Busch86}.
These kind of approximate joint measurements have been extensively studied and the trade-off in noise levels has been characterized in several important cases, including the case of two mutually unbiased bases \cite{CaHeTo12,UoLuMoHe16,CaHeTo19,CaCaTo19,DeSkFrBr19}.

When performing a joint measurement, one has to decide beforehand which measurements will be done simultaneously, and then devise the joint measurement setup accordingly.
An alternative approach is to duplicate the initial state of the system and perform measurements on the duplicates.
Since quantum theory forbids perfect cloning of an unknown state, one is left with two approximate copies \cite{ScIbGiAc05}. 
The resulting noise in measurements is more than in optimal approximate joint measurements \cite{DaMaSa01,HeMiZi16}. 
That is not a surprise, as approximate cloning allows any choice of measurements, whereas a joint measurement is tailored for a specific pair of measurements only.
The advantage of approximate cloning is that the target measurements (e.g.~two bases) can be decided later.

There is a scenario between these two approaches in which one of the measurements is fixed beforehand but the choice of the other measurement is postponed. 
The first measurement necessarily disturbs the state of the system in some way, nevertheless, it can be aimed to perform in a manner that is not totally destructive and that is neutral with respect to all possible choices of the second measurement.
This approach gives an advantage over fixed joint measurements, as the other measurement can be decided later, for instance, after one has obtained information by means of the first measurement. 
However, it is not as universal as the approach based on approximate cloning, since one of the measurements is fixed and only the second measurement can be chosen freely.
In this way, the third scenario lies between the two earlier approaches, and one expects that there is a trade-off between freedom of choice and optimal noise levels.
The scenario has natural variations, as we can limit the choice of the second measurement from all possible choices to some more restricted class of choices.
The question is then to find the optimal way to perform the first measurement so that the noise trade-off is as favorable as possible.

\begin{figure}
\hfil\fbox{\begin{minipage}{\dimexpr \textwidth-2\fboxsep-2\fboxrule}
\vspace{0.3cm}
\centering
\subfigure[]{%
    \includegraphics[width=0.48\textwidth]{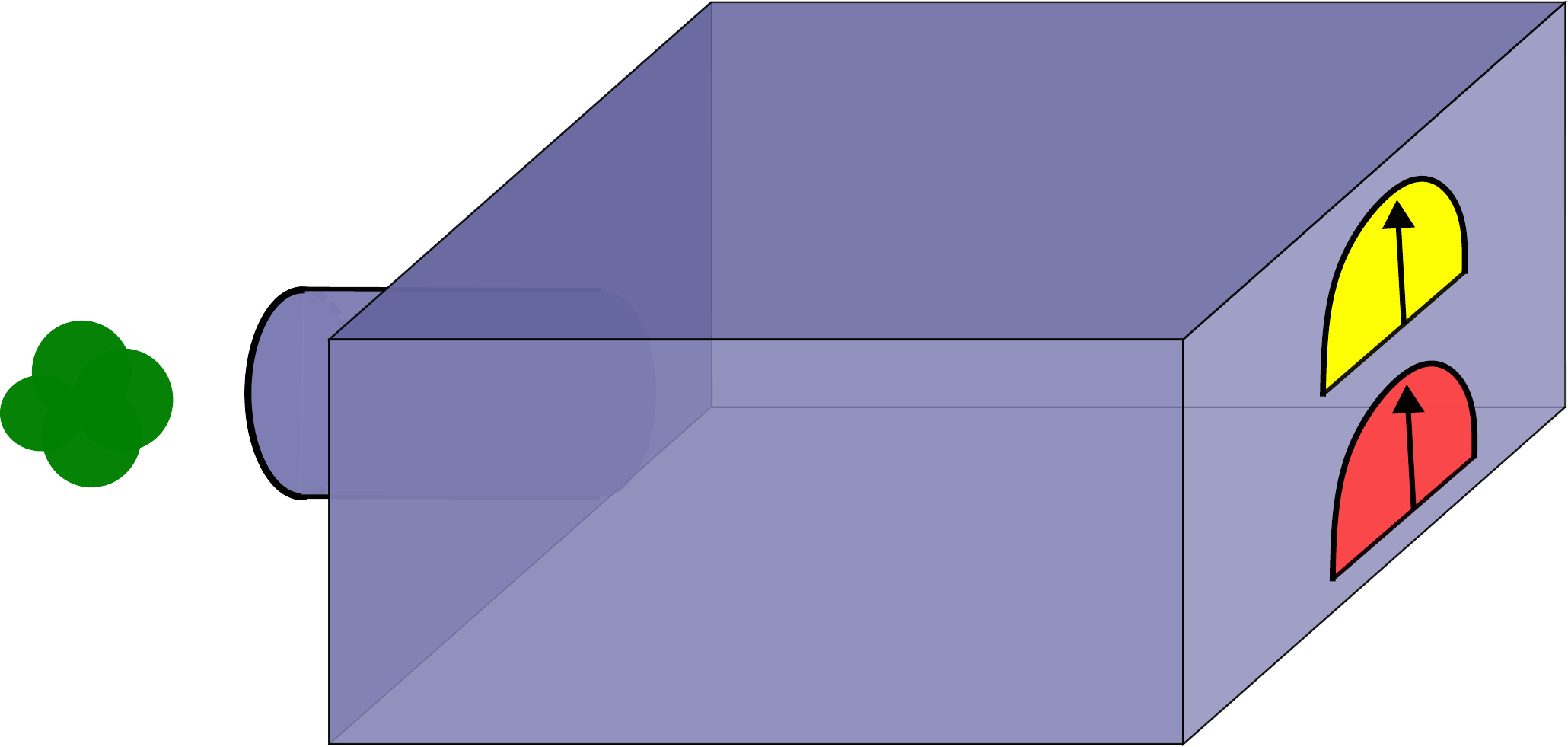}}\\ 
\subfigure[]{%
    \includegraphics[width=0.8\textwidth]{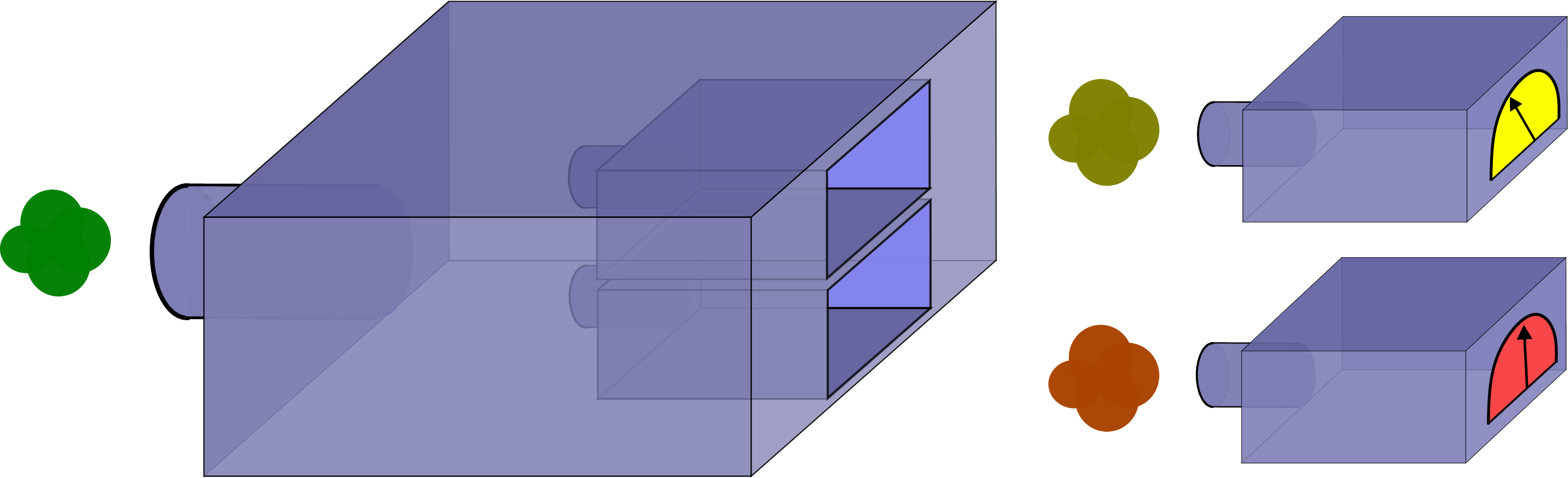}}
    \subfigure[]{%
    \includegraphics[width=0.85\textwidth]{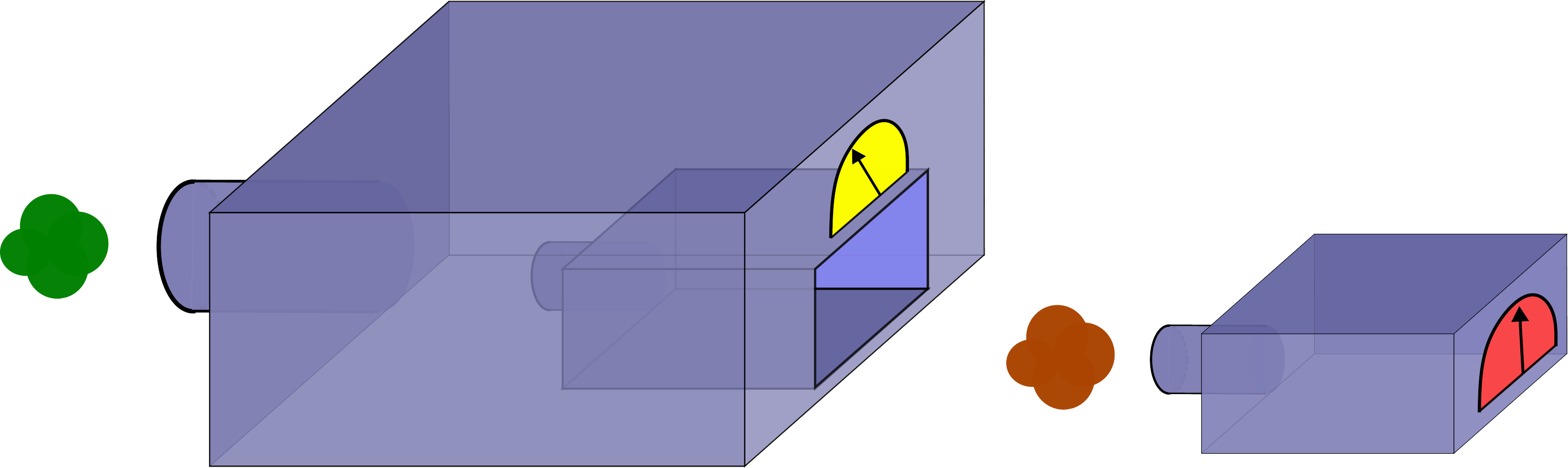}}
\end{minipage}}
\caption{The three different scenarios compared in the present investigation: (a) the target meters are fixed in advance and their approximate joint meter is tailored accordingly; (b) neither of the target meters is fixed in advance, and each of them is measured in a different approximate clone of the original state; (c) only the first target meter is fixed in advance, while the second one is decided after a measurement of the first has been performed.}\label{fig:3scenarios}
\end{figure}

The comparison of the mentioned three scenarios is not completely straightforward, as we need to decide what quantities we are comparing.
In the current investigation we concentrate on measurements related to orthonormal bases and we assume that the noise is standard uniform noise, i.e., simple coin tossing noise. Within our framework this is equivalent to depolarizing noise. 
When it comes to the third scenario of one fixed and one free measurement, we pay particular attention to two alternative cases. In both cases the fixed measurement is related to the same basis. Its optimization however differs according to our knowledge or assumption on the free measurement. Namely, in the first case the free measurement is related to a completely arbitrary basis, while in the second case it is related to a basis that is unbiased with respect to the fixed one.
As one would expect, the noise trade-off is worse in the case with one fixed and one completely arbitrary basis. 
We prove that in this case it is exactly as bad as in the approximate cloning scenario, where both bases -- and not only the second one -- are not fixed in advance. Thus, there is no advantage in knowing the first basis and then arranging the measurement setup by enforcing this additional information. The result is surprising, as one would expect that postponing the choice of both bases is in fact more penalizing than postponing the choice of only one of them. No less surprisingly, we further prove that the case where the free basis is unbiased with respect to the fixed basis actually generates the same noise trade-off as the joint measurement scenario. Therefore, knowing that the two bases are mutually unbiased is enough to achieve the best possible noise trade-off even when the choice of one basis is postponed after the measurement of the other one. This means that the unbiasedness constraint allows to extract as much amount of information on the first basis as if both bases were fixed in advance, while it still enables to defer the choice of the second basis after the measurement of the first one. In conclusion, the two cases in which the second basis is free or it is unbiased with respect to the first basis are just the opposite extremes in the scenario where one of the bases is fixed in advance and the other basis is decided later.

An interesting additional viewpoint can be obtained by looking at overnoisy measurements, i.e., measurements in which the intensity of the noise is allowed to exceed unity. Indeed, mixing a quantum measurement with uniform noise formally makes sense even if the noise intensity is slightly greater than $1$. When the noise intensity takes its largest possible value, one obtains a measurement -- called the reverse version of the original measurement in \cite{FiHeLe17} -- which corresponds to discard the outcome of the original measurement and then output any other outcome chosen at random with the same probability. Remarkably, when overnoisy measurements are allowed, the scenario with one fixed and one arbitrary basis does actually improve the noise trade-off with respect to the approximate cloning scenario. Thus, one of the two results proved for the standard noise model fails for overnoisy measurements, namely, postponing the choice of both bases turns out to be more penalizing than postponing the choice of only one of them. The result holding for the standard noise model is then bolstered by the fact that it does not extend to overnoisy measurements, because as we already remarked its failure is just what one would expect in general. This should be compared with the other result proved for the standard noise model, which in fact is true also if we consider overnoisy measurements. Indeed, if the second basis is known to be unbiased with respect to the first one, we show that even in the overnoisy case postponing the choice of one basis still yields the same noise trade-off as in the joint measurement scenario. As an interesting byproduct of our analysis, we also find that, when the noise intensity exceeds unity, a striking difference emerges between joint measurement setups for qubit systems and systems having dimension higher than $2$.

We now provide an overview of the paper's structure. 
In Section \ref{sec:joint} we recall the concepts of approximate joint measurement and approximate cloning, describe their interplay and review some earlier known results. 
We call to mind that approximate joint measurements and approximate cloning constitute two different instances of compatibility for quantum devices, i.e., compatibility for quantum meters and compatibility for quantum channels, respectively. Section \ref{sec:sequential} contains the two main results of the paper, namely, it describes the two extreme cases in which postponing the choice of one measurement does not (Section \ref{sec:s1}) or does (Section \ref{sec:s2}) provide an improvement over the quantum cloning strategy. Overnoisy measurements are introduced in Section \ref{sec:exceptional}, where we show that not all the results of the previous section hold if the noise intensity exceeds unity. In Section \ref{sec:proof} we provide the proofs of the main results stated in Sections \ref{sec:sequential} and \ref{sec:exceptional}. Finally, Section \ref{sec:conc} contains our conclusions and some perspective on future works. We end this part by introducing the basic terminology and symbols that will be used throughout the paper.

\subsection*{Notations}
In this paper, we work in qudit systems, i.e., quantum systems whose associated Hilbert space is $\Cb^d$, where $d$ is the dimension of the system. We denote by $\lc{d}$ the $C^*$-algebra of all linear endomorphisms of $\Cb^d$, and we write $\id$ for the identity element and ${\rm tr}$ for the linear trace of $\lc{d}$. In a $d$-dimensional qudit system, a {\em state} is thus described by an element $\rhoo\in\lc{d}$ such that $\tr{\rhoo} = 1$ and $\rhoo\geq 0$, i.e., $\rhoo$ is positive semidefinite. 
For any finite set $X$, a \emph{meter} with outcomes in $X$ is a positive operator valued measure (POVM), i.e. a map $\Mo:X\to\lc{d}$ such that $\sum_x \Mo(x) = \id$ and $\Mo(x)\geq 0$ for all $x$. 
The probability of obtaining the outcome $x$ by measuring the meter $\Mo$ in the state $\rhoo$ is given by the Born rule $\tr{\rhoo\Mo(x)}$. 
We also consider transformations of one qudit system into another one, with possibly different dimensions $d_1$ and $d_2$ of the input and output systems. A transformation of this kind is described by a {\em channel}, i.e., a linear and completely positive (CP) map $\Phi:\lc{d_1}\to\lc{d_2}$ such that $\tr{\,\Phi(\rhoo)} = 1$ for all states $\rhoo\in\lc{d_1}$. 
Finally, by combining the notions of meter and channel, we obtain the definition of an {\em instrument}, that is, a map $\ii:X\times\lc{d_1}\to\lc{d_2}$ such that $\ii(x,\cdot)$ is linear and CP for all $x\in X$ and $\sum_x \ii(x,\cdot)$ is a channel. We remark that this is the Schr\"odinger picture for channels and instruments. 
The Heisenberg picture just amounts to switch from $\Phi$ and $\ii$ to the adjoint channel $\Phi^\dag:\lc{d_2}\to\lc{d_1}$ and the adjoint instrument $\ii^\dag:X\times\lc{d_2}\to\lc{d_1}$, which are defined by $\tr{\rhoo\Phi^\dag(A)} = \tr{\Phi(\rhoo)A}$ and $\tr{\rhoo\ii^\dag(x,A)} = \tr{\ii(x,\rhoo)A}$ for all $\rhoo\in\lc{d_1}$, $A\in\lc{d_2}$ and $x\in X$.
\section{Approximate joint meters and cloners}\label{sec:joint}

Suppose $\Mo$ and $\No$ are two meters on a qudit system, with the respective outcome sets $X$ and $Y$. Then, $\Mo$ and $\No$ can be measured in a single experimental run exactly when there exists a third meter $\Go$ that has the Cartesian product $X\times Y$ as its outcome set and satisfies the relations
\begin{equation} 
\Mo(x) = \sum_y\Go(x,y)\,,\qquad \No(y) = \sum_x\Go(x,y)
\end{equation}
for all $x\in X$ and $y\in Y$. If this is the case, we say that the meters $\Mo$ and $\No$ are \emph{compatible}, and $\Go$ is a \emph{joint meter} for them. 
On the other hand, if $\Mo$ and $\No$ are incompatible (i.e., not compatible), we can add noise to both and observe that at a certain threshold value the so obtained noisy meters become compatible. The threshold value depends on $\Mo$ and $\No$ and also on the type of noise \cite{DeFaKa19}. 
In our following investigation we focus on the simplest type of noise, that is, the uniform noise.

To provide a concrete example, we fix the target meters $\Qo$ and $\Po$ associated with two different orthonormal bases $\{\phii_0,\ldots,\phii_{d-1}\}$ and $\{\psi_0,\ldots,\psi_{d-1}\}$, i.e.,
\begin{equation}\label{eq:QP}
\Qo(x)=\kb{\varphi_x}{\varphi_x} \, , \qquad \Po(y)=\kb{\psi_y}{\psi_y}
\end{equation}
for all $x,y$ belonging to the outcome sets $X=Y=\{0,1,\ldots,d-1\}$. 
Since $\Qo$ and $\Po$ are {\em sharp meters} (i.e., consist of rank-$1$ projections) and they do not commute, they are incompatible. 
However, we can add uniform noise to both and thus obtain the {\em noisy versions}
\begin{equation}\label{eq:QsPt}
\Qo_s(x) = s\,\kb{\varphi_x}{\varphi_x} + (1-s)\tfrac{1}{d}\id \,,\qquad \Po_t(y) = t\,\kb{\psi_y}{\psi_y} + (1-t)\tfrac{1}{d}\id\,,
\end{equation}
where the amount of noise is characterized by parameters $s,t\in [0,1]$. For sufficiently small values of $s$ and $t$, the meters $\Qo_s$ and $\Po_t$ are indeed compatible, and any joint meter for them is therefore an \emph{approximate joint meter} of $\Qo$ and $\Po$. 
The obvious question is how large $s$ an $t$ can be for a joint meter to exist. 
To answer this question, the essential point is that the choice of the two bases is made at once in the present scenario, and the approximate joint meter can then be optimized depending on both bases.

In our previous example the noise parameters $s$ and $t$ can be different, thus accounting for the trade-off between the accuracies in the approximation of $\Qo$ and the approximation of $\Po$ that are achievable by means of any approximate joint meter. This trade-off is always weaker than the relation $s+t \leq 1$, and thus the triangle with vertices $(0,0)$, $(1,0)$ and $(0,1)$ is always contained in the \emph{compatibility region} of the pair $(\Qo,\Po)$, i.e., the subset
\begin{equation*}
\CR{(\Qo,\Po)} = \{(s,t) \in [0,1]\times [0,1] \mid \Qo_s \text{ and } \Po_t \text{ are compatible}\}
\end{equation*}
of the unit square $[0,1]\times [0,1]$ (see \cite{BuHeScSt13}).

For the most part of this paper, we consider the case in which the sharp meters $\Qo$ and $\Po$ are \emph{mutually unbiased}, that is, the equality $\tr{\Qo(x)\Po(y)} = \frac{1}{d}$ is satisfied for all $x,y$. This amounts to the requirement that the two orthonormal bases $\{\phii_0,\ldots,\phii_{d-1}\}$ and $\{\psi_0,\ldots,\psi_{d-1}\}$ of \eqref{eq:QP} are mutually unbiased in the usual sense \cite{WoFi89}. Under the unbiasedness assumption, the compatibility region of the pair $(\Qo,\Po)$ was determined in \cite{CaHeTo12,CaHeTo19,CaCaTo19} (see \cite[Eq.~(5) and Fig.~1]{CaCaTo19}). Theorem \ref{thm:QP+II} below and the subsequent discussion summarize this earlier result.

An obvious method to measure two incompatible meters on the same qudit system is performing it by means of cloning. Actually, if perfect cloning of quantum states would be possible, then clearly all meters would be compatible, as we could make copies and perform measurements on them \cite{QI01Werner}. 
Even if perfect cloning is impossible, we may try to use an imperfect but realizable cloning device as a way to jointly measure approximations of incompatible meters \cite{HeScToZi14}.
Namely, we use a channel $\Gamma:\lc{d}\to\elle{\Cb^d\otimes\Cb^d}$ to produce two approximate clones $\rhoo_1 = \ptr{\bar{1}}{\Gamma(\rhoo)$ and $\rhoo_2 = \ptr{\bar{2}}{\Gamma(\rhoo)}$ of the unknown initial state $\rhoo$, where $\bar{1}=2$, $\bar{2}=1$ and ${\rm tr}_{i}$ is the partial trace over the $i$-th subsystem of $\Cb^d\otimes\Cb^d$.} 
Then, we measure two meters $\Mo$ and $\No$ separately on these approximate clones. The resulting meter $\Go^\Gamma_{\Mo,\No}$ is thus given by
\begin{equation} 
\Go^\Gamma_{\Mo,\No}(x,y) = \Gamma^\dag(\Mo(x)\otimes\No(y)) \, .
\end{equation}
The essential point is that $\Go^\Gamma_{\Mo,\No}$ is not a joint meter for $\Mo$ and $\No$, but for two meters $\Mo'$ and $\No'$ that are close to $\Mo$ and $\No$, in some sense.

If we apply the procedure described above to the sharp meters $\Qo$ and $\Po$ defined in \eqref{eq:QP} and we further make some hypotheses on the cloning device $\Gamma$, the resulting meter $\Go^\Gamma_{\Qo,\Po}$ turns out to be an approximate joint meter of $\Qo$ and $\Po$. 
It is then easy to compare the original pair $(\Qo,\Po)$ with the pair $(\Qo',\Po')=(\Qo_s,\Po_t)$ obtained by means of approximate cloning, as it is simply a matter of evaluating the noise parameters $s$ and $t$. In particular, we can optimize $\Gamma$ over all possible choices of the sharp meters $\Qo$ and $\Po$, thus obtaining a universal device to make two arbitrary sharp meters noisy enough to be compatible.

If we require that $\Go^\Gamma_{\Qo,\Po}$ is an approximate joint meter of $\Qo$ and $\Po$ for any possible choice of the sharp meters $\Qo$ and $\Po$, then the approximate clones must have the form $\rhoo_1 = s\rhoo + (1-s)\tfrac{1}{d}\id$ and $\rhoo_2 = t\rhoo + (1-t)\tfrac{1}{d}\id$ for $s,t\in [0,1]$ (see Appendix \ref{app:noise} for the details).
Equivalently, the approximate cloning device $\Gamma$ needs to satisfy the relations
\begin{equation}\label{eq:joint_Gamma}
\ptr{\bar{1}}{\circ\Gamma} = s I + (1-s)\tfrac{1}{d}\id{\rm tr}\,,\qquad \ptr{\bar{2}}{\circ\Gamma} = t I + (1-t)\tfrac{1}{d}\id{\rm tr}\,,
\end{equation}
in which $I:\lc{d}\to\lc{d}$ is the identity channel. 
The right-hand sides of the last two expressions are instances of the (partially) {\em depolarizing channel}
\begin{equation}\label{eq:Ir}
I_r = r I + (1-r)\tfrac{1}{d}\id{\rm tr}
\end{equation}
for different values of the noise parameter $r\in [0,1]$. A channel $\Gamma$ which satisfies \eqref{eq:joint_Gamma} is a {\em joint channel} for $I_s$ and $I_t$ and its existence means that $I_s$ and $I_t$ are {\em compatible channels} in the sense that they can be implemented with a single device \cite{HeMi17}. 
The compatibility region of the pair $(I,I)$ is then the set
\begin{equation*}
\CR{(I,I)} = \{(s,t) \in [0,1]\times [0,1] \mid I_s \text{ and } I_t \text{ are compatible}\}\,.
\end{equation*}
We observe that the inclusion $\CR{(I,I)}\subseteq \CR{(\Qo,\Po)}$ holds, since any joint channel $\Gamma$ for $I_s$ and $I_t$ yields a joint meter $\Go^\Gamma_{\Qo,\Po}$ for $\Qo_s$ and $\Po_t$. 

The compatibility region of the pair $(I,I)$ was derived in \cite{Hashagen17,Haapasalo19} (see \cite[Eq.~(24) and Fig.~B.1]{Hashagen17} and \cite[Eq.~(5.12) and Fig.~1]{Haapasalo19}). The following theorem reports this result, along with the characterization of $\CR{(\Qo,\Po)}$ for two mutually unbiased sharp meters $\Qo$ and $\Po$.

\begin{theorem}\label{thm:QP+II}
Suppose $(s,t)\in [0,1]\times [0,1]$.
\begin{enumerate}[(a)]
\item For any two mutually unbiased sharp meters $\Qo$ and $\Po$, the respective noisy versions $\Qo_s$ and $\Po_t$ are compatible if and only if
\begin{equation}\label{eq:QP_comp}
s+t-1\leq \frac{2}{\sqrt{d}}\sqrt{(1-s)(1-t)}\, .
\end{equation}
\item The depolarizing channels $I_s$ and $I_t$ are compatible if and only if
\begin{equation}\label{eq:II_comp}
s+t-1\leq \frac{2}{d}\sqrt{(1-s)(1-t)}\, .
\end{equation}
\end{enumerate}
\end{theorem}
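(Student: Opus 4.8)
The statement consists of two ``if and only if'' claims, and in both I would handle the two implications by different means: an explicit construction witnessing the existence of a joint device (sufficiency), and an operator inequality obstructing it (necessity). A preliminary remark streamlines everything. When $s+t\le 1$ the right-hand sides of \eqref{eq:QP_comp} and \eqref{eq:II_comp} are nonnegative while the left-hand side is $\le 0$, so both inequalities hold automatically; correspondingly, compatibility is immediate because the triangle with vertices $(0,0),(1,0),(0,1)$ always lies in the compatibility region. Hence all the content sits in the regime $s+t>1$, and this is where the work---and the obstacle---lies.

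For part (a) the unifying tool is covariance. Taking $\Qo,\Po$ to be the Fourier-related (clock/shift) mutually unbiased bases, the noisy meters $\Qo_s,\Po_t$ are covariant under the Weyl--Heisenberg group on $\Zb_d\times\Zb_d$. Given any joint meter $\Go$ of $\Qo_s,\Po_t$, averaging it over this group with the covariant action produces another joint meter with the \emph{same} marginals (covariance of the marginals is exactly what makes the average close up), so I may assume $\Go$ itself is covariant. A covariant POVM on $\Zb_d\times\Zb_d$ is generated by a single seed effect $\Go(0,0)\ge 0$, the two marginal conditions become linear constraints on this seed, and positivity of the seed reduces to a rank-two determinant condition on the $\mathrm{span}\{\varphi_0,\psi_0\}$ block, where the only relevant datum is the overlap amplitude $|\ip{\varphi_x}{\psi_y}|=1/\sqrt d$; squaring this condition should reproduce \eqref{eq:QP_comp}. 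It is instructive to see why the naive guess fails: the product-noise meter $\Go(x,y)=\tfrac1d\big(s\kb{\varphi_x}{\varphi_x}+t\kb{\psi_y}{\psi_y}+(1-s-t)\tfrac1d\id\big)$ has the correct marginals but, on the orthogonal complement of $\mathrm{span}\{\varphi_x,\psi_y\}$, equals $\tfrac{1-s-t}{d^2}\id$, hence is positive only for $s+t\le 1$. The covariant seed is precisely the refinement that survives into $s+t>1$.

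Part (b) runs on the same principle with a larger symmetry. The depolarizing channels $I_s,I_t$ of \eqref{eq:Ir} are covariant under the whole unitary group, so twirling a joint channel $\Gamma$ by the Haar measure (acting by $U$ on the input and $U\otimes U$ on the output) yields a covariant joint channel with unchanged marginals. Passing to the Choi operator, covariance forces it into the commutant of $U\otimes U\otimes\bar U$ on $(\Cb^d\otimes\Cb^d)\otimes\overline{\Cb^d}$; decomposing $\Cb^d\otimes\Cb^d$ into its symmetric and antisymmetric parts collapses the commutant to a low-dimensional family. The marginal relations \eqref{eq:joint_Gamma} fix the linear-in-$\rhoo$ data, and complete positivity becomes positivity of a small symmetric/antisymmetric block matrix, i.e.\ a determinant inequality that I expect to integrate to \eqref{eq:II_comp}. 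The constant $2/d$ here, versus $2/\sqrt d$ in part (a), mirrors the inclusion $\CR{(I,I)}\subseteq\CR{(\Qo,\Po)}$ already observed, the difference tracing back to an overlap \emph{probability} $1/d$ for channels against an overlap \emph{amplitude} $1/\sqrt d$ for meters.

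The main obstacle in both parts is necessity (tightness), and the key enabling step is the covariance reduction: one must argue carefully that symmetrizing a joint device can only preserve, never enlarge, the set of achievable $(s,t)$---transparent at the level of marginals, but it must be checked to interact correctly with positivity---and that the reduced finite-dimensional positivity problem is saturated on its exact boundary rather than merely bounded from one side. Two further points demand care: the anti-correlated regime $s+t>1$ (the entire nontrivial region), where extreme joint devices are no longer of product form, and the dimensional split between $d=2$, where $\mathrm{span}\{\varphi_x,\psi_y\}$ exhausts the space and the complement constraint is vacuous, and $d\ge 3$, where it is active. I expect this same qubit-versus-higher-dimension distinction to resurface for the channel problem and, as the paper flags, once overnoisy measurements are allowed.
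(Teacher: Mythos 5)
Your strategy is sound and, in substance, coincides with the derivations this paper actually leans on: Theorem \ref{thm:QP+II} is reported here from the literature rather than proved, and the cited proofs are precisely your two reductions --- Weyl--Heisenberg symmetrization to a single covariant seed effect for the meter pair \cite[Proposition 5]{CaHeTo12} (the same trick the paper redeploys in Lemma \ref{lem:trick} for the mixed pair $(\Qo_s,I_t)$), and a Haar twirl followed by a Choi-operator/commutant analysis for the channel pair \cite{Hashagen17,Haapasalo19}. Your aside about the product-noise joint meter being positive only for $s+t\leq 1$ is correct, and the boundary is indeed attained from within the covariant family: the saturating devices are exactly the paper's $\Go^{(s)}$ of \eqref{eq:optGo} and $\Gamma^{(s)}$ of \eqref{eq:optGamma}, with margins \eqref{eq:optGo_margins} and \eqref{eq:optGamma_margins}.

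There is, however, one genuine gap in part (a): the opening move ``take $\Qo,\Po$ to be the Fourier-related (clock/shift) bases'' is not a harmless normalization. Up to unitary equivalence, a pair of MUBs is the standard basis together with the rescaled columns of a complex Hadamard matrix, and for $d\geq 4$ there exist complex Hadamard matrices inequivalent to the Fourier matrix (already in $d=4$ there is a one-parameter family), so not every mutually unbiased pair is unitarily conjugate to the Fourier pair. Your symmetrization is tied to the Weyl--Heisenberg group, which cycles \emph{both} bases simultaneously only in the Fourier-conjugate case; for a generic MUB pair there is no group acting transitively on outcomes under which $\Qo_s$ and $\Po_t$ are jointly covariant, so the averaging step --- and with it the reduction to a seed and the rank-two determinant condition --- is simply unavailable. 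Sufficiency is easy to repair, since the construction \eqref{eq:optGo} uses only $\tr{\Qo(x)\Po(y)}=1/d$; but necessity for an arbitrary unbiased pair requires an argument that sees only the overlap data, for instance the incompatibility-witness/operator-norm bound of \cite{CaHeTo19}, which rests on $\no{\Qo(x)+\Po(y)}=1+1/\sqrt{d}$ for any two rank-one projections with overlap $1/d$; this is how the general case is settled in the sources the paper cites. As written, your argument establishes \eqref{eq:QP_comp} only for Fourier-conjugate pairs, hence for all MUB pairs only in those dimensions (such as $d=2,3,5$) where every complex Hadamard matrix is Fourier-equivalent. Part (b) has no analogous defect, since the depolarizing channels are covariant under the full unitary group independently of any basis choice, so your twirl there is legitimate and the plan matches \cite{Hashagen17}.
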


As a consequence of Theorem \ref{thm:QP+II}, the compatibility regions of any pair of mutually unbiased sharp meters $(\Qo,\Po)$ and of the pair of identity channels $(I,I)$ are symmetric along the $s=t$ line. 
We also observe that $\CR{(I,I)}$ in dimension $d$ is the same as $\CR{(\Qo,\Po)}$ in dimension $d^2$.

The most important elements of the compatibility region are its extreme points, as those correspond to the optimal ways of making an approximate joint implementation of the two incompatible devices. Apart from the trivial point $(0,0)$, all extreme points $(s,t)$ are located where the equality is attained in \eqref{eq:QP_comp} (respectively, in \eqref{eq:II_comp}), i.e., on the arc segment obtained by intersecting the half-plane $s+t\geq 1$ and the ellipse
\begin{equation}\label{eq:poly}
d^k (s^2 + t^2) + 2(d^k -2)(1-s)(1-t) = d^k
\end{equation}
with $k=1$ (resp., with $k=2$). See Figure \ref{fig:comp}.

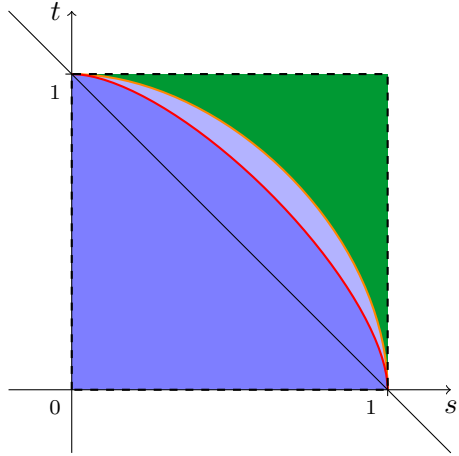
\begin{figure}[h!]\def\d{3}
\centering
\begin{tikzpicture}[scale=4.2]

\fill[color=darkgreen]
(0,0) -- (0,1) -- (1,1) -- (1,0) -- (0,0) ;

\foreach \k/\area/\curve in {\d/lightblue/darkorange,{\d^2}/darkblue/red}{

\def\th0{(pi-ratan(sqrt(\k-1)))}

\fill[domain={-ratan(sqrt(\k-1))}:{ratan(sqrt(\k-1))}, samples=40, color=\area]
(0,0) -- plot ({(\k * rcos(\x + \th0) + 2 - \k)/(2 * (1 - \k))}, {(\k * rcos(\x - \th0) + 2 - \k)/(2*(1 - \k))}) -- (0,0) ;

\draw[thick, domain={-ratan(sqrt(\k-1))}:{ratan(sqrt(\k-1))}, samples=40, color=\curve]
plot ({(\k * rcos(\x + \th0) + 2 - \k)/(2 * (1 - \k))}, {(\k * rcos(\x - \th0) + 2 - \k)/(2*(1 - \k))}) ;
}

\draw[thick, dashed]
(0,0) -- (0,1) -- (1,1) -- (1,0) -- (0,0) ;

\draw[->]
(-0.2,0) -- (0,0)node[anchor=north east]{\tiny $0$} -- (1.2,0)node[anchor=north]{\small $s$} ;

\draw[->]
(0,-0.2) -- (0,1.2)node[anchor=east]{\small $t$} ;

\draw (1,0)node[anchor=north east]{\tiny $1$};
\draw[thin,-] (1,-0.02)--(1,0.02);

\draw (0,1)node[anchor=north east]{\tiny $1$};
\draw[thin,-] (-0.02,1)--(0.02,1);

\draw
(1.2,-0.2) -- (-0.2,1.2) ;

\end{tikzpicture}
\caption{The set of points $(s,t)\in [0,1]\times [0,1]$ (green square) and the compatibility regions $\CR{(\Qo,\Po)}$ and $\CR{(I,I)}$ given by Theorem \ref{thm:QP+II} (blue areas) in dimension $d=3$. Lighter blue denotes the set difference $\CR{(\Qo,\Po)}\setminus\CR{(I,I)}$. Points of the orange (respectively, red) curve attain the equalities in \eqref{eq:QP_comp} (resp., \eqref{eq:II_comp}).\label{fig:comp}}
\end{figure}

We now describe the joint meters and channels which correspond to the extreme points of the compatibility regions written in Theorem \ref{thm:QP+II}. 
They will be our benchmarks in comparison with the sequential measurement scenario described in the next section.
To do it, for $k=1,2$ and all $r\in[0,1]$, we introduce the quantities $a_k(r),b(r)\in [0,1]$ defined as
\begin{equation}\label{eq:ab_bis}
a_k(r) = \frac{1}{\sqrt{d^k}}\left[\sqrt{r(d^k-1)+1} - \sqrt{1-r}\right]\,,\qquad b(r) = \sqrt{1-r}\,.
\end{equation}
These quantities satisfy the identity
\begin{equation}\label{eq:ab_ellipse}
a_k(r)^2 + b(r)^2 + \frac{2}{\sqrt{d^k}} a_k(r)b(r) = 1
\end{equation}
and, for $s,t\in [0,1]$, they fulfill the equivalences
\begin{equation}\label{eq:a=b_equiv}
\begin{aligned}
& a_k(s) = b(t) \quad\Leftrightarrow\quad a_k(t) = b(s) \\
& \qquad\Leftrightarrow\quad s+t-1 = \frac{2}{\sqrt{d^k}}\sqrt{(1-s)(1-t)}\,.
\end{aligned}
\end{equation}
For $s\in [0,1]$, we define the meter $\Go^{(s)}$ with the outcome set $\{0,1,\ldots,d-1\}\times\{0,1,\ldots,d-1\}$ as done in \cite[Eq.~(11)]{CaCaTo19}, i.e.,
\begin{equation}\label{eq:optGo}
\begin{aligned}
\Go^{(s)}(x,y) = {} & \Qo_s(x)^{\frac{1}{2}} \Po(y) \Qo_s(x)^{\frac{1}{2}} \\
= {} & a_1(s)^2\tr{(\Qo(x)\Po(y))} \Qo(x) + \frac{1}{d} b(s)^2 \Po(y) \\
& + \frac{1}{\sqrt{d}} a_1(s) b(s) \big(\Qo(x)\Po(y) + \Po(y)\Qo(x)\big) \,,
\end{aligned}
\end{equation}
where $\Qo$ and $\Po$ are the sharp meters \eqref{eq:QP} and
\begin{equation}\label{eq:Q1/2}
\Qo_s(x)^{\frac{1}{2}} = a_1(s)\Qo(x) + \frac{1}{\sqrt{d}} b(s) \id\,.
\end{equation}
If $\Qo$ and $\Po$ are mutually unbiased, then \eqref{eq:ab_ellipse} and \eqref{eq:optGo} imply
\begin{equation}\label{eq:optGo_margins}
\sum_y\Go^{(s)}(x,y) = \Qo_{1-b(s)^2}(x)\,,\qquad \sum_x\Go^{(s)}(x,y) = \Po_{1-a_1(s)^2}(y)\,.
\end{equation}
Therefore, if the equality is attained in \eqref{eq:QP_comp}, the definition of $b$ and the equivalences \eqref{eq:a=b_equiv} entail that $\Go^{(s)}$ is a joint meter for $\Qo_s$ and $\Po_t$.

The optimal quantum cloning device of \cite[Eq.~(21)]{Hashagen17} is given by
\begin{equation}\label{eq:optGamma}
\Gamma^{(s)}(\rho) = \big(a_2(s)\id\otimes\id + b(s)F\big)\left(\rhoo\otimes\tfrac{1}{d}\id\right)\big(a_2(s)\id\otimes\id + b(s)F\big)
\end{equation}
for all states $\rhoo$, where $s\in [0,1]$ is a parameter and $F\in\elle{\Cb^d\otimes\Cb^d}$ is the flip operator satisfying
\begin{equation} 
F(\xi\otimes\zeta) = \zeta\otimes\xi
\end{equation}
for all $\xi,\zeta\in\Cb^d$. The following analogue of \eqref{eq:optGo_margins} holds for the channel $\Gamma^{(s)}$
\begin{equation}\label{eq:optGamma_margins}
\ptr{\bar{1}}{\circ\Gamma^{(s)}} = I_{1-b(s)^2}\,,\qquad \ptr{\bar{2}}{\circ\Gamma^{(s)}} = I_{1-a_2(s)^2}\,.
\end{equation}
Therefore, if the equality is attained in \eqref{eq:II_comp}, then $\Gamma^{(s)}$ is a joint channel for $I_s$ and $I_t$ again as a consequence of the definition of $b$ and the equivalences \eqref{eq:a=b_equiv}.

To focus on the case of symmetric noise, we remark that the equality $a_k(s) = b(s)$ holds if and only if $s=(\sqrt{d^k}+2)/[2(\sqrt{d^k}+1)]$. By using this fact in \eqref{eq:optGo_margins}, we recover that $\Qo_s$ and $\Po_s$ are compatible if and only if $s=(\sqrt{d}+2)/[2(\sqrt{d}+1)]$ \cite{DeSkFrBr19}. By using the same fact in \eqref{eq:optGamma_margins}, we recover the optimal symmetric universal quantum cloning device
\begin{equation*}
\Gamma^{\left(\frac{d+2}{2(d+1)}\right)}(\varrho) = \frac{2}{d+1} S (\varrho \otimes \id) S \,, 
\end{equation*}
where $S = \frac{1}{2}(\id\otimes\id + F)$ is the projection onto the symmetric subspace of $\Cb^d\otimes\Cb^d$ \cite{KeWe99}.

\section{Sequential measurements}\label{sec:sequential}

The third scenario between the earlier two is the one where first of the target meters, say $\Mo$, is fixed from the beginning, but the second target meter $\No$ is decided only later. 
In this scenario we first perform a measurement of an approximation $\Mo'$ of $\Mo$, then in the resulting state we subsequently measure $\No$. 
A measurement of $\Mo'$ necessarily disturbs the state of the qudit system in some way, unless $\Mo'$ is totally noninformative. 
However, measuring the approximation $\Mo'$ can be made less destructive than a direct measurement of $\Mo$ \cite{HeMi13}.

The sequential measurement scheme gives an advantage over fixed joint measurements of two meters, because the second meter $\No$ may as well be decided only after one has obtained information on the system based on the outcome of $\Mo'$. 
However, it is not as universal as the approximate cloning setup, since one of the meters is now fixed and the first measurement can be tailored for that specific meter.

The statistics of a meter $\Mo'$ as well as the statistics of any meter $\No$ measured on the qudit system after $\Mo'$ are both described by an instrument $\ii:X\times\lc{d}\to\lc{d}$. Namely, for any prior state $\rhoo$ of the system, the outcome $x$ of $\Mo'$ is obtained with probability $\tr{\ii(x,\rhoo)}$, and the posterior state conditioned to $x$ is $\ii(x,\rhoo)/\tr{\ii(x,\rhoo)}$. Thus, if after $\Mo'$ we subsequently perform a measurement of $\No$, the probability of obtaining the outcomes $x$ and $y$ in succession is $\tr{\ii(x,\rhoo)\No(y)} = \tr{\rhoo\Go^\ii_\No(x,y)}$, where $\Go^\ii_\No(x,y) = \ii^\dag(x,\No(y))$. The overall resulting meter $\Go^\ii_\No$ is then a joint meter for $\Mo'$ and a disturbed version $\No'$ of $\No$, i.e., $\No'(y) = \sum_x\ii^\dag(x,\No(y))$.

\subsection{The second choice is arbitrary}\label{sec:s1}

If in the sequential measurement scheme described above the target meters are the sharp meters $\Qo$ and $\Po$ of \eqref{eq:QP}, we can require that the resulting meter $\Go^\ii_\Po$ is an approximate joint meter of $\Qo$ and $\Po$. Indeed, as it happens with approximate cloning, it is then easy to compare the original pair $(\Qo,\Po)$ with its approximation $(\Qo',\Po') = (\Qo_s,\Po_t)$. 
The essential difference from the cloning scenario is that in the present setting $\Qo$ is fixed, and only $\Po$ is allowed to vary among all sharp meters. Requiring that $\Go^\ii_\Po$ is an approximate joint meter of $\Qo$ and $\Po$ for any possible choice of the second meter $\Po$ then amounts to impose that the instrument $\ii$ satisfies the two conditions
\begin{equation}\label{eq:joint_J}
\ii^\dag(\cdot,\id) = \Qo_s\,,\qquad \sum_x \ii(x,\cdot) = I_t
\end{equation}
for some $s,t\in [0,1]$ (see Appendix \ref{app:noise}).
An instrument $\ii$ which satisfies \eqref{eq:joint_J} is a {\em joint instrument} for $\Qo_s$ and $I_t$, and the meter $\Qo_s$ and the channel $I_t$ are {\em compatible} whenever they admit a joint instrument \cite{HeMi13,HeMiRe14}.

As we did for the pairs $(\Qo,\Po)$ and $(I,I)$, we introduce the compatibility region of the pair $(\Qo,I)$, i.e., the set
\begin{equation*}
\CR{(\Qo,I)} = \{(s,t) \in [0,1]\times [0,1] \mid \Qo_s \text{ and } I_t \text{ are compatible}\}\,.
\end{equation*}
We then observe that the relation $\CR{(I,I)}\subseteq\CR{(\Qo,\Po)}$ can be refined to the chain of inclusions
\begin{equation}\label{eq:chain}
\CR{(I,I)}\subseteq\CR{(\Qo,I)}\subseteq\CR{(\Qo,\Po)}\,.
\end{equation}
Indeed, the second inclusion is implied by the sequential measurement scheme, since the resulting meter $\Go^\ii_\Po$ is a joint meter for $\Qo_s$ and $\Po_t$ whenever $\ii$ is a joint instrument for $\Qo_s$ and $I_t$. On the other hand, to prove the first inclusion, we consider a further measurement scheme in which we produce two approximate clones of the initial state $\rhoo$ and we subsequently perform a measurement of $\Qo$ only on the first clone. Thus, there are two outputs, i.e., the outcome $x$ of $\Qo$ and the second clone conditioned to $x$. The scheme is then described by an instrument $\ii^\Gamma_\Qo$, where $\Gamma$ denotes the cloning device and $\ii^\Gamma_\Qo(x,\rhoo) = \ptr{\bar{2}}{\left[(\Qo(x)\otimes\id)\Gamma(\rhoo)\right]}$. In particular, if $\Gamma$ is a joint channel for $I_s$ and $I_t$, the resulting instrument $\ii^\Gamma_\Qo$ is a joint instrument for $\Qo_s$ and $I_t$, hence $\CR{(I,I)}\subseteq\CR{(\Qo,I)}$ as claimed.

Cloning $\rhoo$ and then performing a measurement of $\Qo$ on the first clone is only one out of the many ways to construct a joint instrument for $\Qo_s$ and $I_t$. For this reason, one would expect the first inclusion of \eqref{eq:chain} to be strict. Surprisingly, the next theorem shows that this is not the case. The
proof relies on some basic applications of finite Fourier transform and Weyl-Heisenberg group, and we defer it to Section \ref{sec:proof}.

\begin{theorem}\label{thm:main}
Suppose $(s,t)\in [0,1]\times [0,1]$. For any sharp meter $\Qo$, the noisy version $\Qo_s$ and the depolarizing channel $I_t$ are compatible if and only if
\begin{equation}\label{eq:QI_comp}
s+t-1\leq \frac{2}{d}\sqrt{(1-s)(1-t)}\, .
\end{equation}
\end{theorem}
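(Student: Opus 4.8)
The inequality \eqref{eq:QI_comp} to be proved coincides verbatim with the channel compatibility condition \eqref{eq:II_comp} of Theorem \ref{thm:QP+II}(b); the statement is therefore equivalent to the identity $\CR{(\Qo,I)} = \CR{(I,I)}$. One inclusion is already available: the chain \eqref{eq:chain} gives $\CR{(I,I)}\subseteq\CR{(\Qo,I)}$, and combining it with Theorem \ref{thm:QP+II}(b) disposes of the ``if'' direction immediately, since whenever $(s,t)$ obeys \eqref{eq:QI_comp} the depolarizing channels $I_s$ and $I_t$ are compatible, hence so are $\Qo_s$ and $I_t$. All the content thus lies in the reverse inclusion $\CR{(\Qo,I)}\subseteq\CR{(I,I)}$, i.e. in showing that the mere existence of a joint instrument for $\Qo_s$ and $I_t$ already forces the sharp bound. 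The other available inclusion $\CR{(\Qo,I)}\subseteq\CR{(\Qo,\Po)}$ only yields the looser constant $2/\sqrt d$ coming from \eqref{eq:QP_comp}, so a genuinely new argument is required to reach $2/d$.

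The plan for the reverse inclusion is a covariance reduction. First I would use that all sharp meters are unitarily equivalent and that $I_t$ is unitarily invariant to assume without loss of generality that $\Qo(x) = \kb{x}{x}$ is the computational-basis meter. Let $X\ket{j}=\ket{j+1}$ and $Z\ket{j}=\omega^j\ket{j}$ with $\omega=e^{2\pi i/d}$ be the shift and clock operators generating the Weyl--Heisenberg group $\{W_{a,b}=Z^aX^b\}$. The two margin constraints of \eqref{eq:joint_J} respect this symmetry: the clock operators $Z^a$ fix every effect $\Qo_s(x)$, the shifts $X^b$ merely permute outcomes via $x\mapsto x+b$, and $I_t$ is Weyl-covariant because the depolarizing channel is the full Weyl twirl, $\tfrac{1}{d^2}\sum_{a,b}W_{a,b}(\cdot)W_{a,b}^\dag=\tfrac1d\id\,\tr{\cdot}$. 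Consequently, averaging any joint instrument $\ii$ for $\Qo_s$ and $I_t$ over the group, replacing $\ii(x,\cdot)$ by $\tfrac1{d^2}\sum_{a,b}W_{a,b}^\dag\,\ii\big(x+b,\,W_{a,b}(\cdot)W_{a,b}^\dag\big)\,W_{a,b}$, produces a Weyl-covariant joint instrument for the same pair $(\Qo_s,I_t)$. It therefore suffices to decide for which $(s,t)$ a \emph{covariant} joint instrument exists.

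Covariance collapses the problem to a single CP map. By shift-covariance the block $\ii(x,\cdot)=X^x\,\ii\big(0,\,X^{-x}(\cdot)X^x\big)\,X^{-x}$ is determined by $\ii(0,\cdot)$, which the stabilizer of the outcome $0$ forces to be clock-covariant, $\ii(0,Z^a(\cdot)Z^{-a})=Z^a\,\ii(0,\cdot)\,Z^{-a}$. The meter margin in \eqref{eq:joint_J} then reads $\ii(0,\cdot)^\dag(\id)=s\,\kb{0}{0}+\tfrac{1-s}{d}\id$, and the channel margin reads $\sum_x X^x\,\ii\big(0,\,X^{-x}(\cdot)X^x\big)\,X^{-x}=I_t$. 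This is where the finite Fourier transform enters: conjugation by the Fourier operator $\mathcal F$ interchanges the roles of $X$ and $Z$, so it turns the shift sum of the channel margin into a phase twirl and brings the clock-covariant Choi operator of $\ii(0,\cdot)$ into a sparse, diagonal-in-phase normal form governed by only a few real parameters. I would then read off the two margin conditions as affine constraints on these parameters and impose complete positivity of $\ii(0,\cdot)$, i.e. positivity of its (now small) Choi matrix.

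The crux of the theorem is this last positivity step: one must show that a covariant CP map $\ii(0,\cdot)$ compatible with both margins exists exactly on the region \eqref{eq:QI_comp}. It is worth seeing why no shortcut through channel compatibility is available. Realizing the instrument as the channel $\rhoo\mapsto\sum_x\ii(x,\rhoo)\otimes\kb{x}{x}$ gives an output whose register marginal is the \emph{dephased} depolarizing channel $\Delta\circ I_s$, where $\Delta(\cdot)=\sum_x\kb{x}{x}(\cdot)\kb{x}{x}=\tfrac1d\sum_a Z^a(\cdot)Z^{-a}$, rather than $I_s$ itself; the missing off-diagonal coherence cannot be restored by any Weyl post-processing, since $\Delta$ is invariant under conjugation by every $W_{a,b}$. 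Thus the reverse inclusion genuinely cannot be obtained by converting a joint instrument into a joint channel, and the sharp constant $2/d$ of \eqref{eq:QI_comp} must instead be extracted directly from the complete positivity of the covariant instrument through the Fourier normal form. I expect the main technical effort to be precisely this extraction of the quadratic bound from the positivity condition on the reduced Choi matrix.
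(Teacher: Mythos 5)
Your overall strategy is exactly the paper's: the ``if'' direction via the inclusion $\CR{(I,I)}\subseteq\CR{(\Qo,I)}$ from \eqref{eq:chain} together with Theorem \ref{thm:QP+II}(b) is correct and complete, and your covariance reduction is precisely the paper's symmetrization trick (Lemma \ref{lem:trick}): averaging a joint instrument over the Weyl--Heisenberg group preserves both margin conditions \eqref{eq:joint_J}, so one may restrict to $W$-covariant joint instruments. Your further reduction to the single clock-covariant CP map $\ii(0,\cdot)$ matches the paper's use of the structure theorem of \cite{CaHeTo11}, which parametrizes $W$-covariant instruments by positive $\lc{d}$-valued vector measures $S$ on $\Zb_d$; your ``affine constraints on a Fourier normal form'' are exactly the paper's conditions $\tr{\Qo(z)\sum_x S(x)} = q_s(z)$ and $\tr{\Po(z)S(x)} = p_t(x,z)$. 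Your aside explaining why the instrument cannot simply be converted into a joint channel (the register marginal is the dephased channel $\Delta\circ I_s$, not $I_s$) is a sound and worthwhile sanity check.

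However, the proposal stops at a plan precisely where the theorem lives. The decisive step --- showing that complete positivity of the covariant object, subject to the two margin constraints, forces $s+t-1\leq \tfrac{2}{d}\sqrt{(1-s)(1-t)}$ with the constant $2/d$ rather than the $2/\sqrt{d}$ already available from \eqref{eq:QP_comp} --- is deferred (``I expect the main technical effort to be precisely this extraction''), and nothing in the proposal substitutes for it. In the paper this is done by purifying each $S(x)$ as a vector $\xi(x)\in\Cb^d\otimes\Cb^d$, expanding $\xi(x)=\sum_z\psi_z\otimes\xi_z(x)$ in the Fourier-conjugate basis, observing that the channel margin pins down the norms $\no{\xi_z(x)}$ exactly, and then applying the triangle inequality to the meter-margin equation at $z=0$ to obtain the sharp bound $s_{\rm max}(t)$; attainability is verified by the explicit instrument $\tilde{\ii}^{(t)}$. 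Your expectation that the reduced problem is ``governed by only a few real parameters'' also understates the difficulty: the vector measure consists of $d$ positive operators on $\Cb^d$, and the positivity analysis is a genuine optimization over unit vectors $\eta_z(x)$, not a small diagonal Choi check. Without this extraction, the ``only if'' direction --- which is the entire content of the surprising equality $\CR{(\Qo,I)}=\CR{(I,I)}$ --- remains unproved.
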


The comparison of Theorems \ref{thm:QP+II} and \ref{thm:main} yields a further refinement of the chain of inclusions \eqref{eq:chain}, namely,
\begin{equation} 
\CR{(I, I)} = \CR{(\Qo, I)} \subset \CR{(\Qo,\Po)} \, .
\end{equation}
In particular, the equality $\CR{(I, I)} = \CR{(\Qo, I)}$ leads to the first main result of the paper.

\begin{result}\label{res:1}
Suppose the target pairs $(\Qo,\Po)$ are chosen among all possible pairs of sharp meters. Then, the two strategies consisting in
\begin{enumerate}[(i)]
\item optimally cloning the qudit state for the purpose of jointly measuring any pair $(\Qo,\Po)$, and
\item performing a sequential measurement that is optimized for those pairs $(\Qo,\Po)$ in which $\Qo$ is fixed
\end{enumerate}
actually gives the same trade-off between the achievable approximations of $\Qo$ and $\Po$. When $\Po$ is arbitrary, there is hence no advantage in knowing $\Qo$ in advance, and the cloning strategy is still optimal even in the scenario where the choice of $\Po$ is postponed until an approximate measurement of $\Qo$ has been performed.
\end{result}

In order to describe the joint instruments which correspond to the extreme points of the set $\CR{(\Qo,I)}$, we implement the measurement scheme in which we first apply the optimal quantum cloning device $\Gamma^{(s)}$ of \eqref{eq:optGamma} and we subsequently perform a measurement of $\Qo$ on the first approximate clone. The resulting instrument $\ii^{(s)} \equiv \ii^{\Gamma^{(s)}}_\Qo$ is then
\begin{equation}\label{eq:optJ}
\begin{aligned}
& \ii^{(s)}(x,\rhoo) = \ptr{\bar{2}}{\left[(\Qo(x)\otimes\id)\Gamma^{(s)}(\rhoo)\right]} \\
& \quad = \frac{1}{d} \big\{ a_2(s)^2 \tr{(\rhoo\Qo(x))} \id + a_2(s)b(s) (\rhoo\Qo(x) + \Qo(x)\rhoo) + b(s)^2 \rhoo \big\} \,.
\end{aligned}
\end{equation}
It satisfies the relations
\begin{equation} 
\ii^{(s)\,\dag}(\cdot,\id) = \Qo_{1-b(s)^2}\,,\qquad \sum_x \ii^{(s)}(x,\cdot) = I_{1-a_2(s)^2}\,.
\end{equation}
Therefore, if the equality is attained in \eqref{eq:QI_comp}, then $\ii^{(s)}$ is a joint instrument for $\Qo_s$ and $I_t$ by an argument similar to those following \eqref{eq:optGo_margins} and \eqref{eq:optGamma_margins}.


\subsection{The second choice is unbiased with respect to the first choice}\label{sec:s2}

In the sequential measurement scheme considered before, the target meters $\Qo$ and $\Po$ were arbitrary sharp meters. However, if the second meter $\Po$ is not completely arbitrary, but it is rather chosen among a restricted set of possibilities, we may be able to optimize the measurement of $\Qo_s$ by enforcing any available additional constraint on $\Po$. 
Obviously, the condition $\sum_x \ii(x,\cdot) = I_t$ does no longer apply in this case, as the choice of $\Po$ privileges some orthonormal bases. For example, if we know in advance that $\Po$ is unbiased with respect to $\Qo$, we can employ the following {\em L\"uders instrument}
\begin{equation}\label{eq:Luders}
\ii_L^{(s)}(x,\rhoo) = \Qo_s(x)^{\frac{1}{2}} \rhoo \Qo_s(x)^{\frac{1}{2}}
\end{equation}
to perform a measurement of $\Qo_s$. In this way, if we subsequently measure the unbiased meter $\Po$, the resulting joint meter is 
\begin{equation}\label{eq:Luders+P}
\ii_L^{(s)\,\dag}(x,\Po(y)) = \Go^{(s)}(x,y)
\end{equation}
for all $x,y$, i.e., the optimal approximate joint meter \eqref{eq:optGo} of $\Qo$ and $\Po$. 
Therefore, when the unbiasedness condition is satisfied, the optimal approximate joint meter of $\Qo$ and $\Po$ can be implemented in a sequential measurement scheme, which is then the best possible sequential measurement scheme for the target meters $\Qo$ and $\Po$ \cite{CaCaTo19}. Due to the strict inclusion $\CR{(\Qo,I)}\subset\CR{(\Qo,\Po)}$, this scheme allows a better approximation of the pair $(\Qo,\Po)$ compared with the case in which $\Po$ is arbitrary, even in the scenario where the choice of $\Po$ is postponed after the measurement of $\Qo_s$. We thus achieve the second main result of the paper.
\begin{result}\label{res:2}
Suppose the target pairs $(\Qo,\Po)$ are chosen among all pairs of mutually unbiased sharp meters. Then, the two strategies consisting in
\begin{enumerate}[(i)]
\item measuring an optimal approximate joint meter of a specific pair $(\Qo,\Po)$, and
\item performing a sequential measurement that is optimized for those pairs $(\Qo,\Po)$ in which only $\Qo$ is fixed
\end{enumerate}
yield the same trade-off between the achievable approximations of $\Qo$ and $\Po$. When $\Po$ is unbiased with respect to $\Qo$, there is hence an advantage in knowing $\Qo$ in advance, and the cloning strategy is no longer optimal in the scenario where the choice of $\Po$ is postponed until an approximate measurement of $\Qo$ has been performed.
\end{result}

Remarkably, if $(s,t)$ attains the equality in \eqref{eq:QI_comp}, the L\"uders instrument \eqref{eq:Luders} is a joint instrument for the meter $\Qo_s$ and the channel $I'_t$ given by
\begin{equation} 
I'_t(\rhoo) = t\rhoo + (1-t) \sum_x \tr{\rhoo\Qo(x)} \Qo(x)\,.
\end{equation}
This channel differs from a depolarizing channel $I_t$ in that the depolarizing noise $\tfrac{1}{d}\id {\rm tr}$ of \eqref{eq:Ir} is replaced by the {\em measure and prepare channel} $\Phi_\Qo$ defined as
\begin{equation} 
\Phi_\Qo(\rhoo) = \sum_x \tr{\rhoo\Qo(x)} \Qo(x) \,.
\end{equation}
If the second meter $\Po$ is mutually unbiased with respect to $\Qo$, then the measure-and-prepare channel $\Phi_\Qo$ has the same effect on it as the depolaraizing channel.

\section{Overnoisy devices and extended compatibility regions}\label{sec:exceptional}

There are intriguing differences when we go out of the previously discussed noise model and we allow the noise parameters $s,t$ to take also negative values. 
Indeed, for any two sharp meters $\Qo$ and $\Po$ as in \eqref{eq:QP}, the noisy versions $\Qo_s$ and $\Po_t$ defined in \eqref{eq:QsPt} not only make sense for $s,t\in [0,1]$, but also for sufficiently small negative values of $s$ and $t$. 
In fact, $\Qo_s$ and $\Po_t$ are valid meters also for $s,t$ in the half-open interval $[m_1,0)$, where $m_1 = -1/(d-1)$ is the minimal value of $s$ and $t$ such that $\Qo_s(x)\geq 0$ and $\Po_t(y)\geq 0$ for all $x,y$. In the extreme case $s=t=m_1$ we have
$$
\Qo_{m_1}(x) = \frac{1}{d-1} \left( \id - \Qo(x) \right)\,,\qquad \Po_{m_1}(y) = \frac{1}{d-1} \left( \id - \Po(y) \right)\,.
$$
These are called the \emph{reverse versions} of $\Qo$ and $\Po$, respectively \cite{FiHeLe17}. For $s,t$ in the open interval $(m_1,1)$, the pair $(\Qo_s,\Po_t)$ is then a convex mixture of the original pair $(\Qo,\Po)$ and the reverse pair $(\Qo_{m_1},\Po_{m_1})$.

Similarly as with meters, the depolarizing channel $I_r$ defined in \eqref{eq:Ir} not only makes sense for $r\in [0,1]$, but it is still a valid channel also for $r\in [m_2,0)$, where $m_2 = -1/(d^2-1)$. Indeed, $I_r$ is trace-preserving for all real $r$'s, and $r=m_2$ is the minimal value such that $I_r$ is a CP map \cite{King03}. The channel $I_{m_2}$ is known to be the {\em structural physical approximation} of the {\em reduction map} of $\lc{d}$, i.e., it is the closest channel to the positive trace preserving but not CP map $I_{m_1}$ that sends any pure state $\kb{\xi}{\xi}$ into the orthogonal state $[1/(d-1)](\id-\kb{\xi}{\xi})$ (see \cite{Bae17} and references therein). In dimension $d=2$ the channel $I_{m_2}$ is therefore the best realizable approximation of the (hypothetical) universal quantum NOT gate, as described in \cite{BuHiWe99}.

As we are mainly interested in pairs of quantum devices, we say that the product sets $[m_1,1]\times[m_1,1]$ and $[m_2,1]\times[m_2,1]$ constitute the squares of {\em admissible points} for the pair $(\Qo,\Po)$ and the pair $(I,I)$, respectively (see Figure \ref{fig:squares_1}).

\subsection{Compatibility regions and Result \ref{res:1} for negative noise parameters}\label{sec:ext1}

Within all admissible points $(s,t)$, we can search for the possibly negative values of $s,t$ such that either the meters $\Qo_s$ and $\Po_t$ or the channels $I_s$ and $I_t$ are compatible. This leads us to introduce the {\em extended compatibility regions}
\begin{align*}
\ECR{(\Qo,\Po)} & = \left\{(s,t) \in [m_1,1]\times [m_1,1] \mid \Qo_s \text{ and } \Po_t \text{ are compatible}\right\}\,,\\
\ECR{(I,I)} & = \left\{(s,t) \in [m_2,1]\times [m_2,1] \mid I_s \text{ and } I_t \text{ are compatible}\right\}\,.
\end{align*}
If instead we consider the mixed type devices $\Qo$ and $I$, the adimissible points for the pair $(\Qo,I)$ form the rectangle $[m_1,1]\times[m_2,1]$ (see Figure \ref{fig:squares_2}). Inside this rectangle the extended compatibility region of $(\Qo,I)$ is defined in a similar way as before, i.e.,
\begin{equation*}
\ECR{(\Qo,I)} = \left\{(s,t) \in [m_1,1]\times [m_2,1] \mid \Qo_s \text{ and } I_t \text{ are compatible}\right\}\,.
\end{equation*}

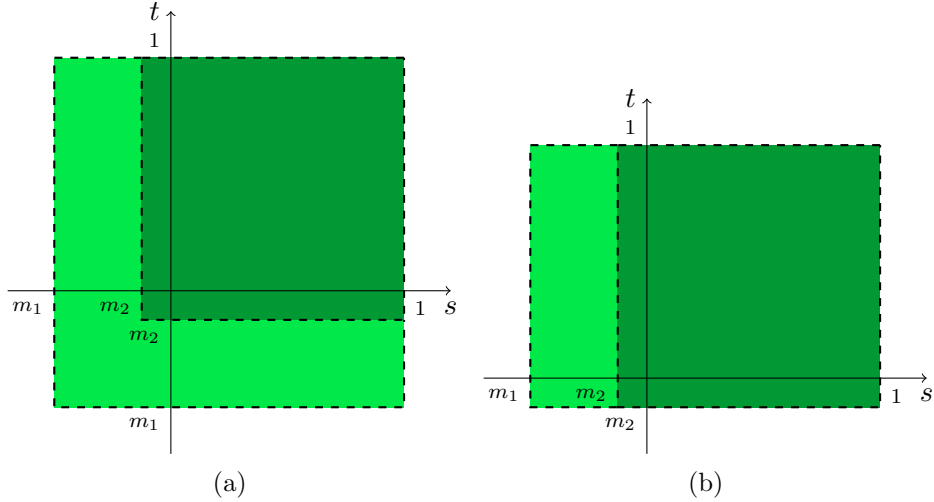
\begin{figure}[h!]\def\d{3}
\centering
\subfigure[\label{fig:squares_1}]{\begin{tikzpicture}[scale=3.1]

\foreach \k/\area in {1/lightgreen,2/darkgreen}{

\fill[color=\area]
({-1/(\d^\k-1)},{-1/(\d^\k-1)}) -- ({-1/(\d^\k-1)},1) -- (1,1) -- (1,{-1/(\d^\k-1)}) -- ({-1/(\d^\k-1)},{-1/(\d^\k-1)}) ;

\draw[thick, dashed]
\ifnum\k=1 ({-1/(\d^2-1)},1) \else (1,1) \fi -- ({-1/(\d^\k-1)},1) -- ({-1/(\d^\k-1)},{-1/(\d^\k-1)}) -- (1,{-1/(\d^\k-1)}) -- \ifnum\k=1 (1,{-1/(\d^2-1)}) \else (1,1) \fi ;

\draw ({-1/(\d^\k-1)},0)node[anchor=north east]{\tiny $m_\k$};
\draw (0,{-1/(\d^\k-1)})node[anchor=north east]{\tiny $m_\k$};

}

\draw[->]
({-0.2-1/(\d-1)},0) -- (1.2,0)node[anchor=north]{\small $s$} ;
\draw[->]
(0,{-0.2-1/(\d-1)}) -- (0,1.2)node[anchor=east]{\small $t$} ;

\draw (1,0)node[anchor=north west]{\tiny $1$};
\draw (0,1)node[anchor=south east]{\tiny $1$};

\end{tikzpicture}}
\subfigure[\label{fig:squares_2}]{\begin{tikzpicture}[scale=3.1]

\foreach \k/\area in {1/lightgreen,2/darkgreen}{

\fill[color=\area]
({-1/(\d^\k-1)},{-1/(\d^2-1)}) -- ({-1/(\d^\k-1)},1) -- (1,1) -- (1,{-1/(\d^2-1)}) -- ({-1/(\d^\k-1)},{-1/(\d^2-1)}) ;

\draw[thick, dashed]
\ifnum\k=1 ({-1/(\d^2-1)},1) \else (1,1) \fi -- ({-1/(\d^\k-1)},1) -- ({-1/(\d^\k-1)},{-1/(\d^2-1)}) -- \ifnum\k=1 ({-1/(\d^2-1)},{-1/(\d^2-1)}) \else (1,{-1/(\d^2-1)}) -- (1,1) \fi ;

\draw ({-1/(\d^\k-1)},0)node[anchor=north east]{\tiny $m_\k$};
\ifnum\k=2 \draw (0,{-1/(\d^2-1)})node[anchor=north east]{\tiny $m_2$}; \fi

}

\draw[->]
({-0.2-1/(\d-1)},0) -- (1.2,0)node[anchor=north]{\small $s$} ;
\draw[->]
(0,{-0.2-1/(\d^2-1)}) -- (0,1.2)node[anchor=east]{\small $t$} ;

\draw (1,0)node[anchor=north west]{\tiny $1$};
\draw (0,1)node[anchor=south east]{\tiny $1$};

\end{tikzpicture}}
\caption{The set of all admissible points for the pair of identity channels $(I,I)$ (dark green square) compared with the analogue set for \subref{fig:squares_1} a pair of sharp meters $(\Qo,\Po)$; \subref{fig:squares_2} a pair $(\Qo,I)$ in which $\Qo$ is a sharp meter and $I$ is the identity channel. 
The light green areas are the differences between the two compared sets.\label{fig:squares}}
\end{figure}

Clearly, the compatibility regions $\CR{(\Qo,\Po)}$, $\CR{(I,I)}$ and $\CR{(\Qo,I)}$ described in Sections \ref{sec:joint} and \ref{sec:sequential} are the intersections of their extended versions with the product set $[0,1]\times [0,1]$. Moreover, the chain of inclusions
\begin{equation} 
\ECR{(I,I)}\subseteq\ECR{(\Qo,I)}\subseteq\ECR{(\Qo,\Po)}\,.
\end{equation}
holds for the same reason as in the standard case. 
However, while all compatibility regions always contain the extreme points $(0,0)$, $(1,0)$ and $(0,1)$ of the square $[0,1]\times [0,1]$, their extended versions may not contain any extreme point of the respective sets of admissible points.

Generalizing the statement of Theorem \ref{thm:QP+II} to the case of negative noise parameters is straightforward for the pair of identity channels $(I,I)$. However, it requires some care for the pairs of mutually unbiased sharp meters $(\Qo,\Po)$, as the case with $d=2$ is special.

\begin{theorem}\label{thm:QP+II_neg}
For $k=1$ or $k=2$, suppose $(s,t)\in [m_k,1]\times [m_k,1]$ and denote by $s\wedge t$ the minimum between $s$ and $t$.
\begin{enumerate}[(a)]
\item If $k=1$, for any two mutually unbiased sharp meters $\Qo$ and $\Po$, the meters $\Qo_s$ and $\Po_t$ are compatible if and only if either $d=2$ and\label{it:QP+II_neg_1}
\begin{equation}\label{eq:QP_comp_neg_1}
s^2+t^2\leq 1
\end{equation}
or $d\geq 3$ and
\begin{equation}\label{eq:QP_comp_neg_2}
s+t-1\leq \frac{2}{d}\left[s\wedge t - 1 + \sqrt{(s\wedge t)^2(1-d) + (s\wedge t)(d-2) + 1}\right]\,.
\end{equation}
\item If $k=2$, the channels $I_s$ and $I_t$ are compatible if and only if\label{it:QP+II_neg_2}
\begin{equation}\label{eq:II_comp_neg}
s+t-1\leq \frac{2}{d^2}\left[s\wedge t - 1 + \sqrt{(s\wedge t)^2(1-d^2) + (s\wedge t)(d^2-2) + 1}\right]\,.
\end{equation}
\end{enumerate}
\end{theorem}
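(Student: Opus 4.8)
The plan is to prove both items together, treating meters ($k=1$) and channels ($k=2$) in parallel, by reducing the existence of a joint device to a finite-dimensional positivity problem through covariance, and then showing that this problem is solvable exactly on the claimed set of admissible $(s,t)$. The qubit meter case $d=2$ will be isolated at the end, since the governing conic degenerates there.

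First I would symmetrize. If $\Qo_s,\Po_t$ admit a joint meter (resp.\ $I_s,I_t$ a joint channel), averaging that device over the Weyl--Heisenberg group generated by the cyclic shifts of the two mutually unbiased bases (resp.\ over the conjugation action $X\mapsto UXU^*$ on the input and $X\mapsto(U\otimes U)X(U^*\otimes U^*)$ on the output of the channel) yields a covariant joint device with unchanged margins, because $\Qo_s,\Po_t$ and $I_s,I_t$ are themselves covariant. By Schur's lemma the covariant candidates form a low-dimensional family: a covariant joint meter is fixed, via the finite Fourier transform, by a single Weyl-operator seed, while the Choi operator of a covariant joint channel is confined to the low-dimensional commutant spanned by $\id\otimes\id$ and the flip $F$ together with the intertwiner attached to the adjoint representation. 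The two margin relations then express the candidate in terms of $(s,t)$, and the requirement that it be a legitimate device --- nonnegativity of the POVM effects in case (a), positive semidefiniteness of the Choi operator in case (b) --- collapses to the positivity of an explicit operator depending on $(s,t)$. Its feasibility settles both directions at once, and in particular it automatically covers extreme configurations such as the doubly-reverse corner $(m_k,m_k)$, which no single branch of the closed-form families $\Go^{(s)}$, $\Gamma^{(s)}$ reaches.

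To identify the feasible set with \eqref{eq:QP_comp_neg_2} and \eqref{eq:II_comp_neg} rather than with the naive extension of Theorem \ref{thm:QP+II}, I would use the factorization $(s\wedge t)^2(1-d^k)+(s\wedge t)(d^k-2)+1=(1-u)(1+u(d^k-1))$ with $u=s\wedge t$, together with the identity $u-1+\sqrt{(1-u)(1+u(d^k-1))}=\sqrt{d^k}\,a_k(u)b(u)$, which rewrites the right-hand sides as $\tfrac{2}{\sqrt{d^k}}a_k(u)b(u)$. Using \eqref{eq:ab_ellipse} one then checks that along this boundary $a_k(s)^2=1-t$ for $s\le t$, so the point lies on the conic \eqref{eq:poly}; the role of the minimum is that, once one noise level falls below $0$, it is precisely that smaller level which saturates the binding positivity constraint, so the active branch is the continuation of the standard arc and not the opposite branch one would spuriously obtain by squaring $s+t-1\le\tfrac{2}{\sqrt{d^k}}\sqrt{(1-s)(1-t)}$. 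As an independent check on sufficiency I would verify that the closed-form devices already recorded survive down to $s=m_k$: the operator $\Qo_s(x)^{1/2}=a_1(s)\Qo(x)+\tfrac{1}{\sqrt d}b(s)\id$ stays positive because its eigenvalues are $\tfrac{1}{\sqrt d}\sqrt{s(d-1)+1}$ and $\tfrac{1}{\sqrt d}\sqrt{1-s}$, so \eqref{eq:optGo} is a genuine POVM; and $\Gamma^{(s)}(\rho)=A(\rho\otimes\tfrac1d\id)A$ with $A=a_2(s)\id\otimes\id+b(s)F$ is completely positive for every real $s$, with trace preservation being exactly \eqref{eq:ab_ellipse}. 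Their margins \eqref{eq:optGo_margins} and \eqref{eq:optGamma_margins} then sweep out the boundary arc as $s$ runs over $[m_k,1]$.

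The qubit meter case is genuinely different and I would argue it separately. For $d=2$ the cross term $2(d-2)(1-s)(1-t)$ in \eqref{eq:poly} vanishes, so the conic collapses to the circle $s^2+t^2=1$; equivalently the reverse meter $\Qo_{m_1}(x)=\id-\Qo(x)=\Qo(1-x)$ is itself sharp, which reinstates full invariance under $s\mapsto-s$, $t\mapsto-t$ and makes all four corners of $[-1,1]\times[-1,1]$ symmetric, in contrast with $d\ge 3$, where only the corner $(1,1)$ is excluded while $(m_1,m_1)$ is retained. Here the Bloch representation is cleanest: two mutually unbiased noisy sharp qubit meters are orthogonal axes scaled by $s$ and $t$, and \eqref{eq:QP_comp_neg_1}, i.e.\ $s^2+t^2\le 1$, is their joint-measurability criterion verbatim for all $s,t\in[-1,1]$. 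The step I expect to be hardest is the necessity half for strictly negative parameters in case (a) with $d\ge3$ and in case (b): one must confirm that the single scalar positivity condition coming from the covariant Ansatz remains the binding one all the way to the tangency points $(m_k,(d^k-2)/(d^k-1))$ on the edge of the admissible rectangle, with no additional constraint switching on as a POVM effect or a Choi eigenvalue passes through zero.
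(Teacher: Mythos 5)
The paper does not actually reprove this theorem: item (a) is quoted from \cite{CaHeTo19,CaCaTo19} and item (b) from \cite{Hashagen17}, the latter completed by the paper's Appendix \ref{app:Gamma_J}. Measured against those proofs, your plan for item (b) is in substance the same route as \cite{Hashagen17} --- twirl the joint channel over $U\otimes U$, confine its Choi operator to the commutant of $\bar{U}\otimes U\otimes U$, and read the region off positivity --- and it is sound, with two caveats. First, that commutant is the six-dimensional walled Brauer algebra on three tensor legs, not merely the span of $\id\otimes\id$ and $F$ your sketch suggests. Second, the corner $(m_2,m_2)$ is not ``automatically covered'': the paper's Appendix \ref{app:Gamma_J} exists precisely because exhibiting a completely positive joint channel at that point ($\Gamma_\Lcorner$, Eq.~\eqref{eq:Gammacorner}) is the delicate endpoint that \cite{Hashagen17} left unverified, so the positivity analysis of the covariant ansatz must actually be carried to that boundary point rather than asserted.

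For item (a), however, your necessity argument has a genuine gap in the stated generality. You average ``over the Weyl--Heisenberg group generated by the cyclic shifts of the two mutually unbiased bases'', which presupposes unitaries $W$ satisfying $W\Qo(x)W^* = \Qo(x+a)$ and simultaneously $W\Po(y)W^* = \Po(y+b)$. Such a joint action exists when the pair is Fourier-conjugate, but the theorem is asserted for \emph{any} two mutually unbiased sharp meters, i.e., for any pair of bases related by an arbitrary complex Hadamard matrix; already for $d=4$ there is a continuous family of mutually inequivalent complex Hadamard matrices (and non-Fourier examples occur even in prime dimension $7$), and for a generic pair there is no group acting transitively on the outcomes while preserving both margins, so the covariant reduction never gets started. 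This is exactly why the proof the paper relies on goes through incompatibility witnesses \cite{CaHeTo19}, which use only the scalar relation $\tr{\Qo(x)\Po(y)}=1/d$ and are therefore insensitive to the Hadamard structure; note also that the paper's own covariance proof in Section \ref{sec:proof} concerns the pair $(\Qo,I)$, where the symmetrization trick is legitimate because $I_t$ is invariant under \emph{all} unitaries and only one basis needs to be shifted. Your sufficiency side is fine in full generality, since $\Go^{(s)}$, $\tilde{\Go}^{(t)}$, $\Go^{(s)}_-$ and $\Go_\Lcorner$ use only unbiasedness; your algebraic identifications, namely $u-1+\sqrt{(1-u)(1+u(d^k-1))}=\sqrt{d^k}\,a_k(u)b(u)$ and the boundary relation $a_k(s)^2=1-t$ for $s\leq t$, check out; and the $d=2$ Bloch-sphere argument for \eqref{eq:QP_comp_neg_1} is correct for all $s,t\in[-1,1]$. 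But as written, your proof of the necessity of \eqref{eq:QP_comp_neg_2} covers only covariant (Fourier-type) MUB pairs; to close the gap you must either replace the symmetrization by a witness-type dual bound or restrict the claim.
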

The proof of item \eqref{it:QP+II_neg_1} was provided in \cite{CaHeTo19,CaCaTo19}, while that of item \eqref{it:QP+II_neg_2} easily follows by the results of \cite{Hashagen17}. In particular, item \eqref{it:QP+II_neg_1} implies that the reverse versions $\Qo_{m_1}$ and $\Po_{m_1}$ are compatible for all $d\geq 3$, a fact that was already observed in \cite{FiHeLe17} under the additional assumption that the meters $\Qo$ and $\Po$ are Fourier conjugate. The peculiarity of $d=2$ is explained by observing that, for a meter with only two outcomes, the reversing operation only swaps the outcomes but does not add any noise.

If $(s,t)\in [0,1]\times[0,1]$, replacing the minimum $s\wedge t$ by $s$ or $t$ does not affect the resulting conditions in \eqref{eq:QP_comp_neg_2} and \eqref{eq:II_comp_neg}. This substitution and some further manipulation show that, for nonnegative $s$ and $t$, the inequalities appearing in items \eqref{it:QP+II_neg_1} and \eqref{it:QP+II_neg_2} of Theorem \ref{thm:QP+II_neg} are consistent with those in the analogue items of Theorem \ref{thm:QP+II}.

To illustrate the differences arising for negative noise parameters, the extended compatibility regions characterized by Theorem \ref{thm:QP+II_neg} are depicted in Figure \ref{fig:comp_neg_QP+II}. As in the case with $s,t\geq 0$, also the regions given by \eqref{eq:QP_comp_neg_1} - \eqref{eq:II_comp_neg} are symmetric along the $s=t$ line. Moreover, in \eqref{eq:QP_comp_neg_2} (respectively, in \eqref{eq:II_comp_neg}) the equality is still attained if and only if the point $(s,t)$ lies on an arc segment of the ellipse \eqref{eq:poly}, namely the arc segment obtained by intersecting the half-plane $s+t\geq (d^k-3)/(d^k-1)$ and the ellipse \eqref{eq:poly} with $k=1$ (resp., with $k=2$). Despite these similarities, in the extended case a remarkable difference emerges between dimensions $d=2$ and $d\geq 3$.
Indeed, as we already observed, while in dimension $d\geq 3$ the point $(m_1,m_1)$ belongs to $\ECR{(\Qo,\Po)}$, this is no longer true in dimension $d=2$. Geometrically, the reason is that, when $d=2$, the set $\ECR{(\Qo,\Po)}$ is the unit disk centered at $(0,0)$, hence it is symmetric not only along the $s=t$ line, but also along any other line passing through the origin. 
In particular, $(m_1,m_1) = -(1,1)$ is outside $\ECR{(\Qo,\Po)}$ for $d=2$. Interestingly, no exception occurs if instead we consider the set $\ECR{(I,I)}$ in place of $\ECR{(\Qo,\Po)}$. In fact, $\ECR{(I,I)}$ contains the point $(m_2,m_2)$ for all $d\geq 2$ and it does not possess any additional symmetry for $d=2$.

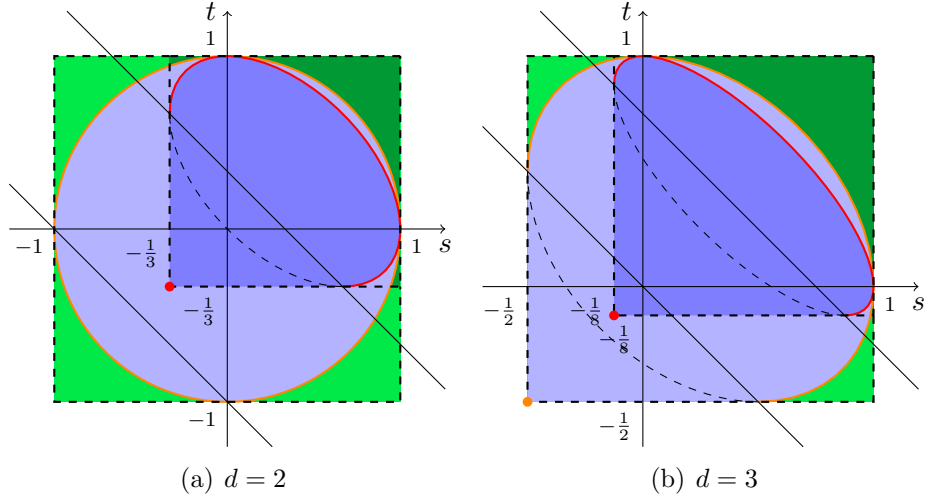
\begin{figure}[h!]
\centering
\subfigure[$d=2$]{\begin{tikzpicture}
\pic at (0,0) {figQPII=2/0/1/4.6};
\end{tikzpicture}}
\subfigure[$d=3$]{\begin{tikzpicture}
\pic at (0,0) {figQPII=3/0/1/4.6};
\end{tikzpicture}}
\caption{The extended compatibility regions $\ECR{(\Qo,\Po)}$ and $\ECR{(I,I)}$ given by Theorem \ref{thm:QP+II_neg} (blue areas) inside the respective sets of admissible points (green squares) in dimensions $d=2,3$.
The lighter blue area is the set difference $\ECR{(\Qo,\Po)}\setminus\ECR{(I,I)}$. Points of the orange (respectively, red) curve attain the equalities in \eqref{eq:QP_comp_neg_1}, \eqref{eq:QP_comp_neg_2} (resp., \eqref{eq:II_comp_neg}). The isolated colored points still belong to the boundary of $\ECR{(\Qo,\Po)}$ or $\ECR{(I,I)}$. For each $k=1,2$, we also plotted the line $s+t=(d^k-3)/(d^k-1)$ and the arc of the ellipse \eqref{eq:poly} which lies below it.\label{fig:comp_neg_QP+II}}
\end{figure}

It is natural to ask if the case with $d=2$ is special also for the pair $(\Qo,I)$ in which $\Qo$ is any sharp meter. The answer is affirmative, as illustrated by the next result.
\begin{theorem}\label{thm:main_neg}
Suppose $(s,t)\in [m_1,1]\times [m_2,1]$. For any sharp meter $\Qo$, the meter $\Qo_s$ and the channel $I_t$ are compatible if and only if
\begin{equation}\label{eq:QI_comp_neg}
\begin{aligned}
& - 2\left[1 - t_d + \sqrt{(1 - d^2) \smash[b]{t_d}^2 + (d^2 - 2) \smash[b]{t_d} + 1}\right] \leq d^2 s \\
& \qquad\qquad \leq (d^2 - 2) (1 - t) + 2\sqrt{(1 - d^2) t^2 + (d^2 - 2) t + 1} \,,
\end{aligned}
\end{equation}
where
\begin{equation} 
t_d = \begin{cases}
\displaystyle \frac{d - 2}{2(d - 1)} & \displaystyle \text{ if } d \geq 3 \text{ and } t\leq\frac{d - 2}{2(d - 1)}\,, \\[0.3cm]
t & \text{ otherwise}\,.
\end{cases}
\end{equation}
\end{theorem}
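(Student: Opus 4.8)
The plan is to follow the same route used to prove Theorem \ref{thm:main}, but to carry the analysis into the regime where positivity of the devices is no longer automatic. The first step is a symmetrization. Since $\Qo_s$ is invariant under conjugation by the Weyl-Heisenberg unitaries attached to the basis of $\Qo$, while the depolarizing channel $I_t$ is invariant under all unitaries, any joint instrument for $\Qo_s$ and $I_t$ can be twirled over the Weyl-Heisenberg group without altering its two marginals and without spoiling complete positivity. Hence $\Qo_s$ and $I_t$ are compatible if and only if there is a \emph{covariant} joint instrument, and by the finite Fourier transform such an instrument is described by a small number of real parameters playing the role of $a_2$ and $b$ in \eqref{eq:ab_bis}. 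The marginal conditions $\ii^\dag(\cdot,\id)=\Qo_s$ and $\sum_x\ii(x,\cdot)=I_t$ fix the relation between these parameters and $(s,t)$ exactly as in \eqref{eq:optGamma_margins}, while complete positivity of the seed component $\ii(0,\cdot)$ reduces to the positive semidefiniteness of a small Hermitian matrix whose entries are affine in $s$ and $t$.

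The key observation is that this positive semidefiniteness splits into a determinant condition and the nonnegativity of the diagonal entries. The determinant condition is an elliptic inequality, and it is precisely the one producing the ellipse \eqref{eq:poly} with $k=2$; solving it for $s$ yields the upper bound in \eqref{eq:QI_comp_neg}, which a direct computation shows to coincide with the boundary of $\ECR{(I,I)}$ on the region $t\le s$ (this is the manifestation of the inclusion $\ECR{(I,I)}\subseteq\ECR{(\Qo,I)}$ and of the fact that, on the high-$s$ side, cloning remains optimal). For $s,t\ge 0$ the diagonal entries are automatically nonnegative, so only the determinant binds and one recovers Theorem \ref{thm:main}. For negative parameters, however, one of the diagonal entries can vanish; its nonnegativity is a constraint linear in $s$ and gives the lower bound in \eqref{eq:QI_comp_neg}. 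Comparing the determinant boundary with this diagonal boundary shows that the diagonal constraint takes over for $t\le (d-2)/(2(d-1))$, where it becomes independent of $t$; this crossover is exactly what the piecewise definition of $t_d$ records, and it is also where, for $d\ge 3$, the lower boundary of $\ECR{(\Qo,I)}$ strictly overshoots that of $\ECR{(I,I)}$.

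For the converse (achievability) direction I would exhibit covariant joint instruments realizing every point of the region \eqref{eq:QI_comp_neg}. On the high-$s$ boundary these are the instruments $\ii^{(s)}$ of \eqref{eq:optJ}, now evaluated at parameters for which $s$ may drop below zero; on the low-$s$ boundary one needs the overnoisy analogue, built from the antisymmetric component (the flip $F$ entering \eqref{eq:optGamma} contributing with the opposite sign), whose channel marginal approaches the structural physical approximation $I_{m_2}$ of the reduction map. In both cases the boundary points correspond to the rank-deficient, hence extremal, instruments for which the small matrix is singular, so it suffices to check that these candidates are genuinely completely positive on the claimed parameter range and to fill the interior by convexity.

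The part I expect to be the main obstacle is the precise determination of which constraint binds, i.e.\ the derivation of the threshold $(d-2)/(2(d-1))$ and of the resulting piecewise $t_d$, together with the separate treatment of $d=2$. As already happens for $\ECR{(\Qo,\Po)}$ in Theorem \ref{thm:QP+II_neg}\eqref{it:QP+II_neg_1}, the qubit case degenerates: the reversing operation on a two-outcome meter adds no noise, the diagonal constraint does not activate in the same way, and $t_d=t$ throughout. Keeping track of the exact geometry of the positive semidefinite region near the crossover — where the elliptic and linear boundaries meet and the optimal instrument changes character — and verifying complete positivity of the extremal low-$s$ instruments all the way down to $s=m_1$ is where the real work lies.
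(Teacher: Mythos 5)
Your opening move (the Weyl--Heisenberg twirl) is exactly the paper's symmetrization trick, and your inventory of extremal instruments, the role of the threshold, and the convexity filling are all on target. But the core of your necessity argument rests on a claim that is false: twirling over the Weyl--Heisenberg group does \emph{not} reduce a joint instrument to ``a small number of real parameters playing the role of $a_2$ and $b$.'' By the correspondence the paper invokes (\cite[Theorem 1]{CaHeTo11}), $W$-covariant instruments are in bijection with arbitrary positive $\lc{d}$-valued vector measures $S:\Zb_d\to\lc{d}$ --- that is, $d$ positive $d\times d$ matrices with unit total trace, roughly $d^3$ real parameters. The marginal conditions $\ii^\dag(\cdot,\id)=\Qo_s$ and $\sum_x\ii(x,\cdot)=I_t$ fix only the $d$ linear functionals $\tr{\vphantom{\big(}\Qo(z)\smash{\sum_x} S(x)}$ and the $d^2$ numbers $\tr{\Po(z)S(x)}$, i.e., certain diagonals of $S$ in the Fourier-conjugate basis, leaving all off-diagonal freedom untouched. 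You are transplanting the structure of the fully covariant cloning problem, where twirling over $U\otimes U\otimes\bar U$ collapses everything onto the two-dimensional commutant spanned by $\id\otimes\id$ and the flip $F$; here the residual symmetry is only the discrete Weyl group, the commutant is large, and there is no ``small Hermitian matrix whose entries are affine in $s$ and $t$,'' hence no determinant-versus-diagonal dichotomy from which to read off the two inequalities of \eqref{eq:QI_comp_neg}.

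What the paper does instead --- and what your plan lacks --- is an extremization over this residual freedom: purify each $S(x)$ as $\xi(x)\in\Cb^d\otimes\Cb^d$ and decompose $\xi(x)=\sum_z\psi_z\otimes\xi_z(x)$; the channel constraint pins the norms $\no{\xi_z(x)}=a_2(t)\delta_{x,0}\delta_{z,0}+\tfrac1d b(t)$, and the meter constraint expresses $s$ through $\sum_x\big\|\sum_u\xi_u(x)\big\|^2$, which is then bounded by triangle inequalities over the remaining free unit vectors. The forward triangle inequality gives $s_{\rm max}(t)$, saturated by $\tilde{\ii}^{(t)}$ of \eqref{eq:tildeJ}; the reverse one gives the lower bound, saturated by $\tilde{\ii}^{(t)}_-$ of \eqref{eq:tildeJminus}, but \emph{only} for $d=2$ or $t\geq(d-2)/[2(d-1)]$. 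Below that threshold the reverse bound is no longer tight, and the paper does not obtain $s_{\rm min}$ from any linear constraint ``becoming independent of $t$'': it proves $s_{\rm min}(t)=m_1$ there by exhibiting the corner instrument $\ii_\Lcorner$ of \eqref{eq:Jcorner} at the single point $(m_1,m_2)$ --- whose complete positivity requires a separate check via the vector measure $S_\Lcorner$, whose components are multiples of projections --- and interpolating by convexity with $s_{\rm min}\big((d-2)/[2(d-1)]\big)=m_1$. So $t_d$ records where the reverse triangle inequality stops being saturable, not a crossover between two faces of a small spectrahedron. Your achievability paragraph is essentially sound (and your remark that the upper bound matches the $\ECR{(I,I)}$ boundary where $t\leq s$ is correct), but without the purification and triangle-inequality analysis --- or an honest treatment of the full vector-measure feasibility problem --- the optimality half of the theorem remains unproven.
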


The above theorem characterizes the extended compatibility region of the pair $(\Qo,I)$. To describe the boundary of this region, we observe that, for $s\geq 0$, the second inequality of \eqref{eq:QI_comp_neg} is equivalent to inequality \eqref{eq:II_comp_neg}, hence both inequalities are saturated at the same points $(s,t)$ of the ellipse \eqref{eq:poly} with $k=2$. On the other hand, for $s<0$, the first inequality of \eqref{eq:QI_comp_neg} requires different treatments for $d=2$ and $d\geq 3$. Indeed, in all dimensions $d\geq 2$ the inequality is saturated if the point $(s,t)$ lies on a specific arc segment of the ellipse
\begin{equation}\label{eq:poly_neg}
d^2 s^2 + 4 t^2 + 4 (1 + s) (1 - t) = 4 \,.
\end{equation}
Namely, this arc segment is the one which lies in the half-plane $t - 2 s \geq 1$ for $d=2$, and in the half-plane $t - (d/2)s \geq 1$ for $d\geq 3$. When $d\geq 3$, however, there are additional points saturating the first inequality of \eqref{eq:QI_comp_neg}, i.e., all points $(s,t)$ with $s=m_1$ and $t\leq (d - 2)/[2 (d - 1)]$. It follows that the point $(m_1,m_2)$ belongs to $\ECR{(\Qo,I)}$ if and only if $d\geq 3$. Moreover, the set $\ECR{(\Qo,I)}$ is symmetric along the $s=0$ line for $d=2$, while it does not possess any symmetry for $d\geq 3$. This discussion is summarized in Figure \ref{fig:comp_neg_QI}, which depicts the set $\ECR{(\Qo,I)}$ characterized by Theorem \ref{thm:QP+II_neg} and compares it with the set $\ECR{(I,I)}$.

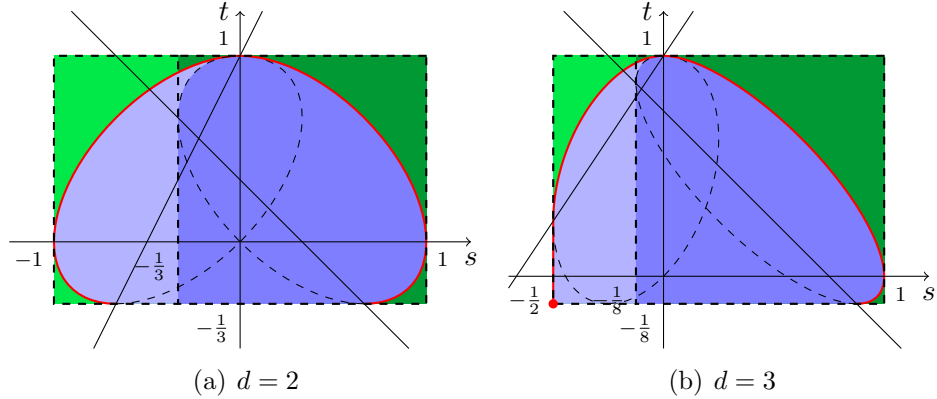
\begin{figure}[h!]
\centering
\subfigure[$d=2$\label{fig:comp_neg_QI_1}]{\begin{tikzpicture}
\pic at (0,0) {figQI=2/0/3.3};
\end{tikzpicture}}
\subfigure[$d=3$\label{fig:comp_neg_QI_2}]{\begin{tikzpicture}
\pic at (0,0) {figQI=3/0/3.3};
\end{tikzpicture}}
\caption{The extended compatibility region $\ECR{(\Qo,I)}$ given by Theorem \ref{thm:main_neg} (blue area) inside the set of all admissible points for the pair $(\Qo,I)$ (green rectangle) in dimensions $d=2,3$.
The lighter colors denote the sets $\ECR{(\Qo,I)}\setminus\ECR{(I,I)}$ and $[m_1,m_2]\times [m_2,1]$. Points of the red curve attain the equalities in \eqref{eq:QI_comp_neg} and the point with coordinates $(m_1,m_2)$ is highlighted. Of the two lines which appear in both graphs, one has equation $s+t=(d^2-3)/(d^2-1)$, while the other one is $t-2s=1$ in graph \subref{fig:comp_neg_QI_1} and $t-(d/2)s=1$ in graph \subref{fig:comp_neg_QI_2}. We also plotted the arcs of the ellipses \eqref{eq:poly} and \eqref{eq:poly_neg} which lie below these lines.\label{fig:comp_neg_QI}}
\end{figure}

As a consequence of Theorems \ref{thm:QP+II_neg} and \ref{thm:main_neg}, contrary to what happens with nonnegative noise parameters, now the inclusion $\ECR{(I,I)}\subseteq\ECR{(\Qo,I)}$ is strict in all dimensions $d\geq 2$. In particular, for $s\in [m_2,0)$, we can find $t\in (0,1)$ such that $\Qo_s$ and $I_t$ are compatible, while $I_s$ and $I_t$ are not. The next main result is immediately implied by this fact.
\begin{result}
If negative values of the noise parameters are allowed, Result \ref{res:1} is no longer true. Namely, the strategy consisting in performing a sequential measurement that is optimized for a specific choice of $\Qo$ can provide an advantage over the optimal cloning strategy.
\end{result}
This result is just what one would conjecture for all values of $s$ and $t$ (positive or negative). In view of it, the equality $\CR{(\Qo,I)} = \CR{(I,I)}$ found in Section \ref{sec:sequential} is even more surprising and unexpected.

\subsection{Optimal joint devices and Result \ref{res:2} for negative noise parameters}\label{sec:ext2}

We now describe the approximate joint devices (meters, channels or instruments) which yield the extreme points of the extended compatibility regions discussed above. We preliminary observe that the quantities $a_k(r)$ and $b(r)$ introduced in \eqref{eq:ab_bis} for $k=1,2$ and $r\in [0,1]$ can be defined also for $r\in [m_k,0)$, and they still satisfy the identity \eqref{eq:ab_ellipse} for these further values of $r$. The main difference arising for negative noise parameters is that we have $a_k(r)<0$ and $b(r)>1$ for $r<0$. Moreover, when $s,t\in [m_k,1]$, the equivalences \eqref{eq:a=b_equiv} are no longer true and they need to be replaced by
\begin{equation}\label{eq:a=b_equiv_neg}
\begin{aligned}
& |a_k(s\wedge t)| = b(s\vee t) \quad\Leftrightarrow\quad \\
& \quad\Leftrightarrow\quad
\begin{aligned}[t]
& s+t-1 = \\
& \quad = \frac{2}{d^k}\left[s\wedge t - 1 + \sqrt{(s\wedge t)^2(1-d^k) + (s\wedge t)(d^k-2) + 1}\right] \,,
\end{aligned}
\end{aligned}
\end{equation}
where $s\vee t$ denotes the maximum between $s$ and $t$. In the particular case with $s<0$ or $t<0$, the equality $|a_k(s\wedge t)| = b(s\vee t)$ amounts to $-a_k(s\wedge t) = b(s\vee t)$, and we have the implication 
\begin{equation}\label{eq:a=b_equiv_neg_neg}
\begin{aligned}
& -a_k(s\wedge t) = b(s\vee t) \quad\Rightarrow\quad a_k(s\vee t) + \frac{2}{\sqrt{d^k}} b(s\vee t) = b(s\wedge t)\,.
\end{aligned}
\end{equation}

Extending $a_k(r)$ and $b(r)$ to negative values of $r$ has the following consequences on the approximate joint devices described in \eqref{eq:optGo} and \eqref{eq:optGamma}. Firstly, the meter $\Go^{(s)}$ of \eqref{eq:optGo} is well defined also for $s\in [m_1,0)$, since the square root of $\Qo_s(x)$ is still given by \eqref{eq:Q1/2}. Then, assuming that $\Qo$ and $\Po$ are mutually unbiased, $\Go^{(s)}$ is again a joint meter for $\Qo_{1-b(s)^2} = \Qo_s$ and $\Po_{1-a_1(s)^2}$ by the same argument as for \eqref{eq:optGo_margins}. However, when an admissible point $(s,t)$ attains the equality in \eqref{eq:QP_comp_neg_2}, the equivalence \eqref{eq:a=b_equiv_neg} does not entail that $\Go^{(s)}$ is a joint meter for $\Qo_s$ and $\Po_t$ unless the additional condition $t\geq 0$ is satisfied. Indeed, $1-a_1(s)^2\geq 0$ for all $s$ (positive or negative), hence $\Po_{1-a_1(s)^2}$ cannot coincide with $\Po_t$ if $t<0$. By a similar reason, in order that \eqref{eq:optGamma} defines a joint channel $\Gamma^{(s)}$ for $I_s$ and $I_t$, it is not enough that $(s,t)$ is an admissible point which attains the equality in \eqref{eq:II_comp_neg}, and the inequality $t\geq 0$ is also required. To derive joint meters and channels that are optimal for $t<0$, it suffices to replace $a_k(s)$ and $b(s)$ with $b(t)$ and $a_k(t)$, respectively, in the formulas for $\Go^{(s)}$ and $\Gamma^{(s)}$. This substitution yields the meter $\tilde{\Go}^{(t)}$ given by
\begin{equation}\label{eq:tildeGo}
\begin{aligned}
& \tilde{\Go}^{(t)}(x,y) = \Po_t(y)^{\frac{1}{2}}\Qo(x)\Po_t(y)^{\frac{1}{2}} \\
& \quad = \frac{1}{d} b(t)^2 \Qo(x) + \frac{1}{d} a_1(t)^2 \Po(y) + \frac{1}{\sqrt{d}} a_1(t) b(t) \big(\Qo(x)\Po(y) + \Po(y)\Qo(x)\big)
\end{aligned}
\end{equation}
for all $t\in [m_1,1]$, and the channel $\tilde{\Gamma}^{(t)}$ given by
\begin{equation}\label{eq:tildeGamma}
\tilde{\Gamma}^{(t)}(\rho) = \big(b(t)\id\otimes\id + a_2(t)F\big)\left(\rhoo\otimes\tfrac{1}{d}\id\right)\big(b(t)\id\otimes\id + a_2(t)F\big)
\end{equation}
for all $t\in [m_2,1]$. For $s,t\geq 0$, conditions \eqref{eq:a=b_equiv} and \eqref{eq:a=b_equiv_neg} are equivalent, hence $\Go^{(s)} = \tilde{\Go}^{(t)}$ (respectively, $\Gamma^{(s)} = \tilde{\Gamma}^{(t)}$) for all points $(s,t)\in [0,1]\times [0,1]$ which attain the equality in \eqref{eq:QP_comp} (resp., in \eqref{eq:II_comp}).

In order to conclude our description of the optimal approximate joint meters and channels in the extended case, we finally consider noise parameters $s<0$ and $t<0$, which, as we have seen, give different behaviours in dimension $d=2$. Indeed, if $d=2$, then
\begin{equation}\label{eq:Gominus}
\Go^{(s)}_-(x,y) = \Qo_s(x)^{\frac{1}{2}}\Po_{-1}(y)\Qo_s(x)^{\frac{1}{2}} = \Qo_s(x) - \Go^{(s)}(x,y)
\end{equation}
defines a joint meter $\Go^{(s)}_-$ for $\Qo_s$ and $\Po_{-(1-a(s)^2)}$ for all $s\in [-1,1]$. In particular, $\Go^{(s)}_-$ is a joint meter for $\Qo_s$ and $\Po_t$ when the point $(s,t)$ attains the equality in \eqref{eq:QP_comp_neg_1} and $t\leq 0$. In the latter case one can further show that the equality $\Go^{(s)}_- = \tilde{\Go}^{(t)}$ holds for $s\geq 0$ by applying \eqref{eq:a=b_equiv_neg_neg} and the relation $\Qo(x)\Po(y) + \Po(y)\Qo(x) = \Qo(x) + \Po(y) - \frac{1}{2}\id$. On the other hand, if $d\geq 3$, then a joint meter $\Go_\Lcorner$ for the reverse versions $\Qo_{m_1}$ and $\Po_{m_1}$ is given by
\begin{equation}\label{eq:Gocorner}
\Go_\Lcorner(x,y) = \frac{1}{d(d-2)}\left[\id - \frac{d}{d-1}\big(\Qo(x)-\Po(y)\big)^2\right]\,.
\end{equation}
Note that the expression $[d/(d-1)]\big(\Qo(x)-\Po(y)\big)^2$ is the projection onto the linear span of the basis elements $\phii_x$ and $\psi_y$ (see \eqref{eq:QP}). When we consider channels in place of meters, the analogue of the joint meter $\Go_\Lcorner$ is the channel $\Gamma_\Lcorner$ given by
\begin{equation}\label{eq:Gammacorner}
\Gamma_\Lcorner(\rhoo) = \frac{1}{d^2-2}\bigg\{\tr{\rhoo}\id\otimes\id - 2\sum_{h=\pm 1} \frac{1}{d+h} S_h(\rhoo\otimes\id) S_h\bigg\} \,,
\end{equation}
where $S_{+1} = \frac{1}{2}(\id\otimes\id + F)$ (respectively, $S_{-1} = \frac{1}{2}(\id\otimes\id - F)$) is the projection onto the symmetric (resp., antisymmetric) subspace of $\Cb^d\otimes\Cb^d$. Indeed, we prove in Appendix \ref{app:Gamma_J} that $\Gamma_\Lcorner$ is a joint channel for two copies of the channel $I_{m_2}$.
The channel $\Gamma_\Lcorner$ is defined for all $d\geq 2$, with no difference between $d=2$ and $d\geq 3$.

In Table \ref{ta:QP+II} we provide a summary of the approximate joint devices described above and we highlight those points they correspond to in the extended compatibility regions.

\begin{table}[h!]\def\vsp{0.1cm}\def\scala{0.35}
\begin{tabular}{ || >{\centering\arraybackslash}m{0.7cm} | >{\centering\arraybackslash}m{2.2cm} >{\centering\arraybackslash}m{2.2cm} || >{\centering\arraybackslash}m{0.7cm} | >{\centering\arraybackslash}m{2.2cm} >{\centering\arraybackslash}m{2.2cm} || }
\toprule
\multicolumn{3}{||c||}{\textbf{Q \& P}} & \multicolumn{3}{c||}{\textbf{I \& I}}
\\ \cmidrule(rl){1-3} \cmidrule(rl){4-6}
& {$d=2$} & {$d\geq 3$} & & {$d=2$} & {$d\geq 3$}
\\ \midrule
\parbox[c]{\hsize}{\centering $\Go^{(s)}$ \\[\vsp] \eqref{eq:optGo}} &
\scalebox{\scala}{\begin{tikzpicture}
\pic at (0,0) {figQPII=2/1/1/4.6};
\end{tikzpicture}} &
\scalebox{\scala}{\begin{tikzpicture}
\pic at (0,0) {figQPII=3/1/1/4.6};
\end{tikzpicture}} &
\parbox[c]{\hsize}{\centering $\Gamma^{(s)}$ \\[\vsp] \eqref{eq:optGamma}} &
\scalebox{\scala}{\begin{tikzpicture}
\pic at (0,0) {figQPII=2/1/2/4.6};
\end{tikzpicture}} &
\scalebox{\scala}{\begin{tikzpicture}
\pic at (0,0) {figQPII=3/1/2/4.6};
\end{tikzpicture}}
\\ \midrule
\parbox[c]{\hsize}{\centering $\tilde{\Go}^{(t)}$ \\[\vsp] \eqref{eq:tildeGo}} &
\scalebox{\scala}{\begin{tikzpicture}
\pic at (0,0) {figQPII=2/2/1/4.6};
\end{tikzpicture}} &
\scalebox{\scala}{\begin{tikzpicture}
\pic at (0,0) {figQPII=3/2/1/4.6};
\end{tikzpicture}} &
\parbox[c]{\hsize}{\centering $\tilde{\Gamma}^{(t)}$ \\[\vsp] \eqref{eq:tildeGamma}} &
\scalebox{\scala}{\begin{tikzpicture}
\pic at (0,0) {figQPII=2/2/2/4.6};
\end{tikzpicture}} &
\scalebox{\scala}{\begin{tikzpicture}
\pic at (0,0) {figQPII=3/2/2/4.6};
\end{tikzpicture}}
\\ \midrule
\parbox[c]{\hsize}{\centering $\Go^{(s)}_-$ \\[\vsp] \eqref{eq:Gominus}} &
\scalebox{\scala}{\begin{tikzpicture}
\pic at (0,0) {figQPII=2/3/1/4.6};
\end{tikzpicture}} &
&
&
&
\\ \midrule
\parbox[c]{\hsize}{\centering $\Go_\Lcorner$ \\[\vsp] \eqref{eq:Gocorner}} &
&
\scalebox{\scala}{\begin{tikzpicture}
\pic at (0,0) {figQPII=3/3/1/4.6};
\end{tikzpicture}} &
\parbox[c]{\hsize}{\centering $\Gamma_\Lcorner$ \\[\vsp] \eqref{eq:Gammacorner}} &
\scalebox{\scala}{\begin{tikzpicture}
\pic at (0,0) {figQPII=2/3/2/4.6};
\end{tikzpicture}} &
\scalebox{\scala}{\begin{tikzpicture}
\pic at (0,0) {figQPII=3/3/2/4.6};
\end{tikzpicture}}
\\ \bottomrule
\end{tabular}\bigskip
\caption{The extreme points of the extended compatibility regions $\ECR{(\Qo,\Po)}$ (left) and $\ECR{(I,I)}$ (right) that correspond to the approximate joint meters and channels described in Section \ref{sec:ext2}. Each figure refers to the joint device reported on its left and the dimension indicated above. The number between parentheses under each joint device specifies the respective equation in the text.\label{ta:QP+II}}
\end{table}

We now pass to the description of the approximate joint instruments which yield the extreme points of $\ECR{(\Qo,I)}$. By means of the cloning device $\tilde{\Gamma}^{(t)}$, we can immediately obtain a joint instrument $\tilde{\ii}^{(t)}$ for $\Qo_s$ and $I_t$ when $(s,t)$ is any admissible point which attains the second equality in \eqref{eq:QI_comp_neg}. Indeed, it is enough to consider the following modification of the optimal instrument $\ii^{(s)}$ of \eqref{eq:optJ}
\begin{equation}\label{eq:tildeJ}
\begin{aligned}
& \tilde{\ii}^{(t)}(x,\rhoo) = \ptr{\bar{2}}{\left[(\Qo(x)\otimes\id)\tilde{\Gamma}^{(t)}(\rhoo)\right]} \\
& \quad = \frac{1}{d} \big\{ b(t)^2 \tr{(\rhoo\Qo(x))} \id + a_2(t)b(t) (\rhoo\Qo(x) + \Qo(x)\rhoo) + a_2(t)^2 \rhoo \big\} \,.
\end{aligned}
\end{equation}
In the particular case with $(s,t)\in [0,1]\times [0,1]$, the equality $\Gamma^{(s)} = \tilde{\Gamma}^{(t)}$ entails that $\ii^{(s)} = \tilde{\ii}^{(t)}$. On the other hand, when $(s,t)$ attains the first equality in \eqref{eq:QI_comp_neg}, it is easy to check that the instrument $\tilde{\ii}^{(t)}_-$ given by
\begin{equation}\label{eq:tildeJminus}
\begin{aligned}
& \tilde{\ii}^{(t)}_-(x,\rhoo) = \frac{1}{d}\bigg\{ b(t)^2 \tr{\rhoo(\id-\Qo(x))}\Qo(x) \\
& \ \ + \bigg[\bigg(a_2(t)+\frac{2}{d}b(t)\bigg)\id-b(t)\Qo(x)\bigg]\rhoo\bigg[\bigg(a_2(t)+\frac{2}{d}b(t)\bigg)\id-b(t)\Qo(x)\bigg]\bigg\}
\end{aligned}
\end{equation}
is a joint instrument for $\Qo_s$ and $I_t$ for all $t\geq t_d$. Indeed, by using \eqref{eq:ab_ellipse}, a straightforward calculation leads to
\begin{equation} 
\tilde{\ii}^{(t)\,\dag}_-(\cdot,\id) = \Qo_{-\frac{2}{d}b(t)\left(a_2(t)+\frac{2}{d}b(t)\right)}\,,\qquad \sum_x \tilde{\ii}^{(t)}_-(x,\cdot) = I_{1-b(t)^2}\,,
\end{equation}
where $1-b(t)^2 = t$ and
\begin{equation*}
-\frac{2}{d}b(t)\bigg(a_2(t)+\frac{2}{d}b(t)\bigg) = -\frac{2}{d^2} \left[ 1 - t + \sqrt{(1 - d^2) t^2 + (d^2 - 2) t + 1}\right]\,.
\end{equation*}
The instruments $\tilde{\ii}^{(t)}$ and $\tilde{\ii}^{(t)}_-$ yield all the extreme points of $\ECR{(\Qo, I)}$ when $d=2$. On the other hand, when $d\geq 3$, we still need a joint instrument $\ii_\Lcorner$ yielding the extreme point $(m_1,m_2)$. 
We can choose
\begin{equation}\label{eq:Jcorner}
\begin{aligned}
& \ii_\Lcorner(x,\rhoo) = \frac{1}{d^2-1} \bigg\{ \frac{1}{d-2}\big(\Qo(x)\rhoo+\rhoo\Qo(x)-\rhoo\big) \\
& \qquad + \frac{1}{(d-1)(d-2)}\bigg[ \sum_z\tr{\rhoo\Qo(z)}\Qo(z) - d\tr{\rhoo\Qo(x)}\Qo(x)\bigg] \\
& \qquad + \frac{d}{d-1}\tr{\rhoo\big(\id-\Qo(x)\big)}\id\bigg\}\,,
\end{aligned}
\end{equation}
which is indeed a joint instrument for $\Qo_{m_1}$ and $I_{m_2}$ (see the next section).
Table \ref{ta:QI} summarizes all the above approximate joint instruments and highlights the corresponding points of $\ECR{(\Qo,I)}$.

\begin{table}[h!]\def\vsp{0.1cm}\def\scala{0.33}
\begin{tabular}{ || >{\centering\arraybackslash}m{0.72cm} | >{\centering\arraybackslash}m{2.1cm} >{\centering\arraybackslash}m{2.1cm} || >{\centering\arraybackslash}m{0.72cm} | >{\centering\arraybackslash}m{2.1cm} >{\centering\arraybackslash}m{2.1cm} || }
\toprule
\multicolumn{6}{||c||}{\textbf{Q \& I}}
\\ \cmidrule(rl){1-6}
& {$d=2$} & {$d\geq 3$} & & {$d=2$} & {$d\geq 3$}
\\ \midrule
\parbox[c]{\hsize}{\centering $\ii^{(s)}$ \\[\vsp] \eqref{eq:optJ}} &
\scalebox{\scala}{\begin{tikzpicture}
\pic at (0,0) {figQI=2/1/3.3};
\end{tikzpicture}} &
\scalebox{\scala}{\begin{tikzpicture}
\pic at (0,0) {figQI=3/1/3.3};
\end{tikzpicture}} &
\parbox[c]{\hsize}{\centering $\tilde{\ii}^{(t)}$ \\[\vsp] \eqref{eq:tildeJ}} &
\scalebox{\scala}{\begin{tikzpicture}
\pic at (0,0) {figQI=2/2/3.3};
\end{tikzpicture}} &
\scalebox{\scala}{\begin{tikzpicture}
\pic at (0,0) {figQI=3/2/3.3};
\end{tikzpicture}}
\\ \midrule
\parbox[c]{\hsize}{\centering $\tilde{\ii}^{(t)}_-$ \\[\vsp] \eqref{eq:tildeJminus}} &
\scalebox{\scala}{\begin{tikzpicture}
\pic at (0,0) {figQI=2/3/3.3};
\end{tikzpicture}} &
\scalebox{\scala}{\begin{tikzpicture}
\pic at (0,0) {figQI=3/3/3.3};
\end{tikzpicture}} &
\parbox[c]{\hsize}{\centering $\ii_\Lcorner$ \\[\vsp] \eqref{eq:Jcorner}} &
\scalebox{\scala}{\begin{tikzpicture}
\pic at (0,0) {figQI=2/4/3.3};
\end{tikzpicture}} &
\scalebox{\scala}{\begin{tikzpicture}
\pic at (0,0) {figQI=3/4/3.3};
\end{tikzpicture}}
\\ \bottomrule
\end{tabular}\bigskip
\caption{The extreme points of the extended compatibility region $\ECR{(\Qo,I)}$ that correspond to the approximate joint instruments described in Section \ref{sec:ext2}. The table reads as already described for Table \ref{ta:QP+II}.\label{ta:QI}}
\end{table}

To conclude this section, we show that the approximate joint meters $\Go^{(s)}$, $\tilde{\Go}^{(t)}$, $\Go^{(s)}_-$ and $\Go_\Lcorner$ described above can be obtained in a sequential measurement scheme in which we first perform a fixed measurement of a (over)noisy version of $\Qo$ -- the same measurement for all $\Po$'s --  and subsequently a measurement of $\Po$. In this way, we prove that Result \ref{res:2} is still true even if negative values of the noise parameters are allowed. This fact is clear for the meter $\Go^{(s)}$, as the L\"uders instrument $\ii_L^{(s)}$ of \eqref{eq:Luders} is well defined and satisfies \eqref{eq:Luders+P} also for $s\in [m_1,0)$. For all the other meters, we can still use $\ii_L^{(s)}$ in combination with a suitably chosen state postprocessing. More precisely, conditionally to the outcome $x$ of $\Qo_s$, we can transform the posterior state $\ii_L^{(s)}(x,\rhoo)/\tr{\vphantom{\big(}\smash{\ii_L^{(s)}(x,\rhoo)}}$ by means of a channel $\Pi(\cdot\,|\,x) :\lc{d}\to\lc{d}$, thus obtaining the overall instrument $\ii_{L,\Phi}^{(s)}$ given by
\begin{equation} 
\ii_{L,\Phi}^{(s)}(x,\rhoo) = \Pi\big(\ii_L^{(s)}(x,\rhoo)\,|\,x\big)\,.
\end{equation}

By setting
\begin{equation} 
\Pi(\rhoo\,|\,x) = (\id-2\Qo(x))\rhoo(\id-2\Qo(x))
\end{equation}
for all states $\rhoo$ and using the identity \eqref{eq:ab_ellipse} and the mutual unbiasedness of $\Qo$ and $\Po$, a straightforward calculation yields
\begin{align*}
\ii_{L,\Phi}^{(s)\,\dag}(x,\Po(y)) = {}& \frac{1}{d} \bigg( a_1(s) + \frac{2}{\sqrt{d}} b(s) \bigg)^2 \Qo(x) + \frac{1}{d} b(s)^2 \Po(y) \\
& - \frac{1}{\sqrt{d}} b(s)\bigg( a_1(s) + \frac{2}{\sqrt{d}} b(s) \bigg) \big(\Qo(x)\Po(y) + \Po(y)\Qo(x)\big)
\end{align*}
for all $s\in [m_1,1]$. Given any $t\in [m_1,1]$, by picking $s\geq 0$ such that the point $(s,t)$ attains the equality in \eqref{eq:QP_comp_neg_2}, the relations \eqref{eq:a=b_equiv_neg} and \eqref{eq:a=b_equiv_neg_neg} then yield
\begin{equation} 
\ii_{L,\Phi}^{(s)\,\dag}(x,\Po(y)) = \tilde{\Go}^{(t)}(x,y)
\end{equation}
whenever $t<0$. On the other hand, we already observed that $\Go^{(s)} = \tilde{\Go}^{(t)}$ for $t\geq 0$. In both cases the meter $\tilde{\Go}^{(t)}$ is obtained in a sequential measurement scheme in which we first perform a fixed measurement of $\Qo_s$ and subsequently a measurement of $\Po$, as claimed.

In dimension $d=2$ the further relation
$$
(\id-2\Qo(x))\Po(y)(\id-2\Qo(x)) = \Po_{-1}(y)
$$
holds for all $x,y$. This implies that
\begin{equation} 
\ii_{L,\Phi}^{(s)\,\dag}(x,\Po(y)) = \Go^{(s)}_-(x,y)
\end{equation}
for all $s\in [-1,1]$, hence also the meter $\Go^{(s)}_-$ is obtained in a sequential measurement scheme as above.

To finally show that, when $d\geq 3$, the sequential measurement scheme can be implemented also for the remaining meter $\Go_\Lcorner$, we need to modify the definition of the conditional channel $\Pi(\cdot\,|\,x)$ and set
\begin{equation} 
\Pi(\rhoo\,|\,x) = \frac{1}{d-2}\sum_z (1-\delta_{x,z}) \bigg(\frac{1}{d-1}\id - \Qo(z)\bigg)\rhoo\bigg(\frac{1}{d-1}\id - \Qo(z)\bigg) \,,
\end{equation}
where $\delta_{x,z}$ is the Kronecker delta. With the latter choice of $\Pi(\cdot\,|\,x)$ a straightforward application of the mutual unbiasedness of $\Qo$ and $\Po$ yields
\begin{equation} 
\ii_{L,\Phi}^{(m_1)\,\dag}(x,\Po(y)) = \Go_\Lcorner(x,y)\,.
\end{equation}
This relation completes the proof of the final main result of the paper.
\begin{result}
Result \ref{res:2} still holds true also if negative values of the noise parameters are allowed. Namely, when $\Po$ is known to be unbiased with respect to $\Qo$, the strategy consisting in performing a sequential measurement that is optimized for those pairs $(\Qo,\Po)$ in which only $\Qo$ is fixed overcomes the cloning strategy and yields the same trade-off as measuring an optimal approximate joint meter of a specific pair $(\Qo,\Po)$.
\end{result}

\section{Proof of the main results}\label{sec:proof}

In this section we prove our main Theorems \ref{thm:main} and \ref{thm:main_neg} by means of a simple symmetrization trick and few elementary facts from the theory of Weyl-Heisenberg groups and Fourier analysis. Before doing it, we need to introduce some further material about finite dimensional Weyl operators and the related covariance properties, see e.g.~\cite{AuTo79,BaIt86,StTo84,Va95,Te99} for more details on the topic.

We start by observing that one can define a natural group structure on the outcome sets $X$ and $Y$ of the two sharp meters $\Qo$ and $\Po$ defined in \eqref{eq:QP}. Indeed, both sets can be regarded as the cyclic group $\Zb_d$, i.e., the group of integer numbers $\{0,1,\ldots,d-1\}$ in which the composition law is given by addition ${\rm mod}\,d$. If $\omega$ denotes any fixed complex $d$th root of unity and $x\in\Zb_d$, the map $z\mapsto\omega^{xz}$ is then a group homomorphism from $\Zb_d$ into the multiplicative group of unimodular complex numbers. For all $x,y\in\Zb_d$, the {\em Weyl operator} $W(x,y)\in\lc{d}$ is defined as
\begin{equation} 
W(x,y)\phii_z = \omega^{yz} \phii_{x+z}
\end{equation}
for all $z\in\Zb_d$. In this expression $\{\phii_0,\ldots,\phii_{d-1}\}$ is the orthonormal basis associated with $\Qo$. The Weyl operators are unitary and satisfy the composition rule
$$
W(x_1,y_1)W(x_2,y_2) = \omega^{x_2 y_1} W(x_1 + x_2, y_1 + y_2) \,.
$$
Therefore, the map $W:\Zb_d\times\Zb_d\to\lc{d}$ is a projective unitary representation of the direct product group $\Zb_d\times\Zb_d$. Moreover, we have the commutation relation
$$
W(x_1,y_1)W(x_2,y_2) = \omega^{x_2 y_1 - x_1 y_2} W(x_2,y_2)W(x_1,y_1)\,.
$$
For all $s\in [m_1,1]$ and $t\in [m_2,1]$, the interplay between the Weyl operators and the meters $\Qo_s$ and the channels $I_t$ defined in the previous sections is given by the following covariance and invariance relations
\begin{align}
W(x,y)\Qo_s(z)W(x,y)^* & = \Qo_s(x+z)\,, \label{eq:cov_Qs}\\
W(x,y)I_t(\rhoo)W(x,y)^* & = I_t(W(x,y)\rhoo W(x,y)^*)\,, \label{eq:cov_It}
\end{align}
which hold for all $x,y,z$ and $\rho$.

\begin{lemma}[Symmetrization trick]\label{lem:trick}
Suppose $\Qo_s$ and $I_t$ are compatible. Then, there exists a joint instrument $\ii$ for $\Qo_s$ and $I_t$ such that
\begin{equation}\label{eq:Icov_def}
W(x,y)\ii(z,\rhoo)W(x,y)^* = \ii(x+z,W(x,y)\rhoo W(x,y)^*)
\end{equation}
for all $x,y,z$ and $\rhoo$.
\end{lemma}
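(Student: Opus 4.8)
The plan is to start from an arbitrary joint instrument and average it over the Weyl--Heisenberg group so as to force the covariance relation \eqref{eq:Icov_def}. By hypothesis $\Qo_s$ and $I_t$ are compatible, so there exists at least one joint instrument $\ii_0$ satisfying the two conditions in \eqref{eq:joint_J}. For each $(a,b)\in\Zb_d\times\Zb_d$ I would form the \emph{twisted} instrument
\begin{equation*}
\ii_{a,b}(z,\rhoo) = W(a,b)^{*}\,\ii_0\big(z+a,\,W(a,b)\,\rhoo\,W(a,b)^{*}\big)\,W(a,b)\,,
\end{equation*}
and then define the symmetrized object as the average $\ii = d^{-2}\sum_{a,b}\ii_{a,b}$. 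Each $\ii_{a,b}$ is manifestly an instrument, since it is obtained from $\ii_0$ by precomposing with a unitary channel, relabelling the outcome, and conjugating the output by a fixed unitary, all of which preserve complete positivity and the normalization.

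The first substantive step is to check that every $\ii_{a,b}$ is again a joint instrument for the \emph{same} pair $\Qo_s$ and $I_t$. For the meter marginal I would compute $\tr{\ii_{a,b}(z,\rhoo)}$ using cyclicity of the trace to strip off the outer $W(a,b)^{*}\,\cdot\,W(a,b)$, then apply the covariance relation \eqref{eq:cov_Qs} in the equivalent form $W(a,b)^{*}\Qo_s(z+a)W(a,b)=\Qo_s(z)$, which yields $\ii_{a,b}^{\dag}(\cdot,\id)=\Qo_s$. For the channel marginal I would sum over the outcome $z$, reindex $z\mapsto z+a$, and invoke the invariance relation \eqref{eq:cov_It} together with unitarity to obtain $\sum_z\ii_{a,b}(z,\cdot)=I_t$. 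Because the two defining conditions \eqref{eq:joint_J} are affine in the instrument, the convex combination $\ii$ is then automatically a joint instrument for $\Qo_s$ and $I_t$ as well.

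The decisive step is verifying \eqref{eq:Icov_def} for $\ii$. The key observation is that, although $W$ is only a \emph{projective} representation, the conjugation map $X\mapsto W(a,b)\,X\,W(a,b)^{*}$ is a genuine (non-projective) representation of the abelian group $\Zb_d\times\Zb_d$, because the cocycle phase $\omega^{\,\cdots}$ in the Weyl composition rule is a scalar and cancels under conjugation. Expressing $\ii_{a,b}$ through this conjugation action, I would apply $W(x,y)\,\cdot\,W(x,y)^{*}$ to $\ii(z,\rhoo)$ and perform the substitution $(a,b)\mapsto(a,b)+(x,y)$ in the summation index. Since both the outcome shift $z\mapsto z+a$ and the conjugation action are homomorphisms out of the abelian group, this reindexing carries the left-hand side of \eqref{eq:Icov_def} exactly onto the right-hand side, with every phase factor cancelling.

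I expect the only real subtlety to lie in this phase bookkeeping: one must confirm that the $2$-cocycle entering the composition rule never survives, which is guaranteed precisely because it appears only inside conjugations and the group is abelian. The other place where care is needed is the verification that each twisted copy is still a \emph{joint} instrument, rather than merely an instrument; this is exactly where the covariance and invariance relations \eqref{eq:cov_Qs}--\eqref{eq:cov_It} are indispensable, and the rest of the argument is routine linear bookkeeping.
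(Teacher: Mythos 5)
Your proposal is correct and takes essentially the same route as the paper: both symmetrize an arbitrary joint instrument for $\Qo_s$ and $I_t$ by averaging over the Weyl--Heisenberg group (your twisted copies $\ii_{a,b}$ are, after the reindexing $(a,b)\mapsto(-a,-b)$ and cancellation of the projective phases under conjugation, exactly the summands in the paper's definition of $\ii$), with jointness verified through the covariance relations \eqref{eq:cov_Qs}--\eqref{eq:cov_It}. The only organizational difference is that you check each twisted copy is itself a joint instrument and then invoke affinity of the conditions \eqref{eq:joint_J}, whereas the paper computes the marginals of the averaged instrument directly.
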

\begin{proof}
Let $\ii'$ be any joint instrument for $\Qo_s$ and $I_t$. For all $z$ and $\rho$, we define
$$
\ii(z,\rho) = \frac{1}{d^2} \sum_{x,y} W(x,y)\ii'(z-x,W(x,y)^*\rhoo W(x,y))W(x,y)^*\,.
$$
It is easy to check that $\ii$ is an instrument which satisfies \eqref{eq:Icov_def}. Moreover, $\ii$ is a joint instrument for $\Qo_s$ and $I_t$, since
\begin{align*}
\ii^\dag(z,\id) & = \frac{1}{d^2} \sum_{x,y} W(x,y)\ii^{\prime\,\dag}(z-x,W(x,y)^*\id W(x,y))W(x,y)^* \\
& = \frac{1}{d^2} \sum_{x,y} W(x,y)\Qo_s(z-x)W(x,y)^* = \Qo_s(z)\,, \\
\sum_z\ii(z,\rhoo) & = \frac{1}{d^2} \sum_{x,y} W(x,y) \sum_z \ii'(z-x,W(x,y)^*\rhoo W(x,y))W(x,y)^* \\
& = \frac{1}{d^2} \sum_{x,y} W(x,y)I_t(W(x,y)^*\rhoo W(x,y))W(x,y)^* = I_t(\rhoo)\,,
\end{align*}
where the second equalities follow from \eqref{eq:cov_Qs} and \eqref{eq:cov_It}.
\end{proof}

Instruments satisfying \eqref{eq:Icov_def} are called {\em $W$-covariant} and they were extensively studied in \cite{CaHeTo11}. In particular, \cite[Theorem 1]{CaHeTo11} establishes a one-to-one correspondence between $W$-covariant instruments $\ii$ and {\em positive $\lc{d}$-valued vector measures on $\Zb_d$}, i.e., maps $S:\Zb_d\to\lc{d}$ such that $\sum_x \tr{S(x)} = 1$ and $S(x)\geq 0$ for all $x$. This correspondence is given by
\begin{equation}\label{eq:Icov_char_1}
\ii(z,\rho) = \sum_x W(x,0)^*\ptr{\bar{1}}{[(\id\otimes\Qo(z))L(\rhoo\otimes S(x))L^*]}W(x,0)\,,
\end{equation}
where $L:\Cb^d\otimes\Cb^d\to\Cb^d\otimes\Cb^d$ is the unitary operator
\begin{equation}\label{eq:L}
L = \sum_u \Qo(u)\otimes W(u,0)\,.
\end{equation}
By inserting \eqref{eq:L} into \eqref{eq:Icov_char_1} and using the cyclicity of ${\rm tr}_{\bar{1}}$ with respect to the second factor in the tensor product, we find the equivalent expression
\begin{equation} 
\begin{aligned}
\ii(z,\rho) & = \sum_{x,u,v} W(x,0)^*\Qo(u)\rhoo\Qo(v)W(x,0) \\
& \qquad\times\tr{W(v,0)^*\Qo(z)W(u,0)S(x)}\,.
\end{aligned}
\end{equation}
From this equation, it is easy to show that $\ii$ is a joint instrument for the meter $\Mo$ and the channel $\Phi$ which are given by
\begin{align}
\Mo(z) & = \sum_u \tr{\vphantom{\bigg[}\Qo(z-u)\smash{\sum_x} S(x)}\Qo(u)\,,\label{eq:MI} \\
\Phi(\rhoo) & = \sum_{x,u,v} W(x,0)^*\Qo(u)\rhoo\Qo(v)W(x,0) \tr{W(u-v,0)S(x)}\label{eq:PhiI}
\end{align}
for all $z$ and $\rhoo$ (actually, the derivation of $\Mo$ is already contained in \cite[Proposition 1]{CaHeTo11}). By \eqref{eq:MI}, the meter $\Mo$ is a {\em smeared} (or {\em fuzzy}) {\em version} of the sharp meter $\Qo$, as it is obtained by convolving $\Qo$ with the probability distribution $q$ given by
\begin{equation}\label{eq:q}
q(z) = \tr{\vphantom{\bigg[}\Qo(z)\smash{\sum_x} S(x)}
\end{equation}
for all $z$ (see e.g.~\cite{HeLaYl04}). 
On the other hand, formula \eqref{eq:PhiI} can still be simplified by introducing the finite Fourier transform $\ff:\Cb^d\to\Cb^d$ and the orthonormal basis $\{\psi_0,\ldots,\psi_{d-1}\}$ which is Fourier conjugate to $\{\phii_0,\ldots,\phii_{d-1}\}$. They are defined by
\begin{equation} 
\ff \phii_z = \frac{1}{\sqrt{d}} \sum_u \omega^{-uz} \phii_u\,,\qquad \psi_z = \ff^* \phii_z
\end{equation}
for all $z$. In particular, the two bases $\{\phii_0,\ldots,\phii_{d-1}\}$ and $\{\psi_0,\ldots,\psi_{d-1}\}$ are mutually unbiased. By applying the relation
\begin{equation}\label{eq:ort_rel}
\sum_u \omega^{uv} = d\delta_{v,0}\,,
\end{equation}
it follows that
$$
W(-y,x) = \omega^{xy} \ff^* W(x,y)\ff
$$
and
\begin{equation}\label{eq:QW}
\begin{aligned}
W(0,y) & = \sum_z \omega^{yz} \Qo(z)\,,\qquad \Qo(z) = \frac{1}{d} \sum_y \omega^{-yz} W(0,y)\,, \\
W(x,0) & = \sum_z \omega^{-xz} \Po(z)\,,\qquad \Po(z) = \frac{1}{d} \sum_x \omega^{xz} W(x,0)\,,
\end{aligned}
\end{equation}
where $\Po$ is the sharp meter that is Fourier conjugate to $\Qo$, i.e.,
\begin{equation} 
\Po(z) = \kb{\psi_z}{\psi_z} = \ff^*\Qo(z)\ff\,.
\end{equation}
By inserting \eqref{eq:QW} in \eqref{eq:PhiI} and using again \eqref{eq:ort_rel}, we obtain
\begin{equation} 
\Phi(\rhoo) = \sum_{x,z} W(x,z)^*\rhoo W(x,z) \tr{\Po(z)S(x)}\,.
\end{equation}
In this expression we recognize the Kraus form of a {\em Weyl-covariant channel}. In particular, $\Phi$ is the Weyl-covariant channel associated with the probability distribution $p$ given by
\begin{equation}\label{eq:p}
p(x,z) = \tr{\Po(z)S(x)}
\end{equation}
for all $x,z$ (see \cite{FH05}).

The optimal approximate joint instruments $\ii^{(s)}$, $\tilde{\ii}^{(t)}$, $\tilde{\ii}_-^{(t)}$ and $\ii_\Lcorner$ described in Sections \ref{sec:sequential} and \ref{sec:exceptional} are all instances of $W$-covariant instruments. They correspond to the following choice of the respective vector measures $S^{(s)}$, $\tilde{S}^{(t)}$, $\tilde{S}_-^{(t)}$ and $S_\Lcorner$
\begin{align*}
S^{(s)}(x) = {} & \frac{1}{d} \big(a_2(s)\id + db(s)\delta_{x,0}\Po(0)\big) \Qo(0) \big(a_2(s)\id + db(s)\delta_{x,0}\Po(0)\big)\,, \\
\tilde{S}^{(t)}(x) = {} & \frac{1}{d} \big(b(t)\id + da_2(t)\delta_{x,0}\Po(0)\big) \Qo(0) \big(b(t)\id + da_2(t)\delta_{x,0}\Po(0)\big)\,, \\
\tilde{S}_-^{(t)}(x) = {} & \frac{1}{d} \big[b(t)\id - \big(da_2(t)+2b(t)\big)\delta_{x,0}\Po(0)\big] \Qo(-x) \times \\
& \qquad\qquad\qquad\qquad \times \big[b(t)\id - \big(da_2(t)+2b(t)\big)\delta_{x,0}\Po(0)\big]\,, \\
S_\Lcorner(x) = {} & \frac{1}{d+1} \bigg\{\frac{1}{d-2}\delta_{x,0}\bigg[\id-\frac{d}{d-1}\big(\Qo(0)-\Po(0)\big)^2\bigg] \\
& \qquad\qquad\qquad\qquad + \frac{d}{(d-1)^2}(1-\delta_{x,0})\big(\id-\Qo(0)\big)\bigg\}\,.
\end{align*}
Note that all the operators $S^{(s)}(x)$, $\tilde{S}^{(t)}(x)$, $\tilde{S}_-^{(t)}(x)$ and $S_\Lcorner(x)$ are guaranteed to be positive semidefinite, since they are positive scalar multiples of projections. In particular, $\ii_\Lcorner$ is a valid instrument as a consequence of the fact that $S_\Lcorner$ is a valid vector measure.

After this preparation, we are now in position to prove our main theorems. The proof is analogous to the derivation of the compatibility region of the pair of sharp meters $(\Qo,\Po)$, as it can be found in \cite[Proposition 5]{CaHeTo12}.

\begin{proof}[Proof of Theorems \ref{thm:main} and \ref{thm:main_neg}]
The meter $\Qo_s$ is the convolution of the sharp meter $\Qo$ with the probability distribution $q_s$ given by
$$
q_s(z) = s\delta_{z,0} + \frac{1}{d}(1-s)
$$
for all $z$. In a similar way, the channel $I_t$ is the Weyl-covariant channel associated with the probability distribution $p_t$ defined as
$$
p_t(x,z) = t\delta_{x,0}\delta_{z,0} + \frac{1}{d^2}(1-t)\,.
$$
This follows from the fact that all states $\rhoo$ satisfy the equality
$$
\sum_{x,z} W(x,z)^*\rhoo W(x,z) = d\id\,,
$$
as it is easily checked by taking the matrix elements of both sides with respect to the standard basis and using \eqref{eq:ort_rel}. Comparing $q_s$ and $p_t$ with the probability distributions $q$ and $p$ of \eqref{eq:q} and \eqref{eq:p}, we see that the meter $\Qo_s$ and the channel $I_t$ admit a joint $W$-covariant instrument if and only if there exists a $\lc{d}$-valued vector measure $S$ on $\Zb_d$ which satisfies the two equations
\begin{equation}\tag{$\ast$}\label{eq:S_conditions_1}
\begin{aligned}
\tr{\vphantom{\bigg[}\Qo(z)\smash{\sum_x} S(x)} & = s\delta_{z,0} + \frac{1}{d}(1-s) \,,\\
\tr{\Po(z)S(x)} & = t\delta_{x,0}\delta_{z,0} + \frac{1}{d^2}(1-t)
\end{aligned}
\end{equation}
for all $x,z$. In turn, by Lemma \ref{lem:trick}, the existence of such $S$ is equivalent to the compatibility of $\Qo_s$ and $I_t$. In order to prove the theorems, for any fixed value of $t$, we will therefore determine the minimal and the maximal values $s_{\rm min}(t)$, $s_{\rm max}(t)$ of $s$ which allow the existence of a vector measure $S$ satisfying \eqref{eq:S_conditions_1}. Then, the  meter $\Qo_s$ and the channel $I_t$ are compatible for all $s\in [s_{\rm min}(t),s_{\rm max}(t)]$ by an easy convexity argument. To find $s_{\rm min}(t)$ and $s_{\rm max}(t)$, we fix $S$ as in \eqref{eq:S_conditions_1} and, for all $x$, we let $\xi(x)\in\Cb^d\otimes\Cb^d$ be a purification of $S(x)$, i.e., $S(x)=\ptr{\bar{1}}{\kb{\xi(x)}{\xi(x)}}$. Equations \eqref{eq:S_conditions_1} then rewrite
\begin{equation*}
\begin{aligned}
\sum_x \ip{\xi(x)}{(\Qo(z)\otimes\id)\xi(x)} & = s\delta_{z,0} + \frac{1}{d}(1-s) \,,\\
\ip{\xi(x)}{(\Po(z)\otimes\id)\xi(x)} & = t\delta_{x,0}\delta_{z,0} + \frac{1}{d^2}(1-t)\,.
\end{aligned}
\end{equation*}
By decomposing $\xi(x)$ as
$$
\xi(x) = \sum_z \psi_z\otimes\xi_z(x)\,,
$$
where $\xi_0(x),\ldots,\xi_{d-1}(x)$ are vectors of $\Cb^d$, the second equation becomes
$$
\no{\xi_z(x)}^2 = t\delta_{x,0}\delta_{z,0} + \frac{1}{d^2}(1-t) = \left(a_2(t)\delta_{x,0}\delta_{z,0} + \frac{1}{d} b(t)\right)^2
$$
for all $x,z$. Thus,
$$
\xi_z(x) = \left(a_2(t)\delta_{x,0}\delta_{z,0} + \frac{1}{d} b(t)\right) \eta_z(x)\,,
$$
where $\eta_z(x)$ is a unit vector of $\Cb^d$. By using the relation $\ip{\phii_u}{\psi_v} = \omega^{uv}/\sqrt{d}$, the first equation then yields
\begin{align*}
s\delta_{z,0} + \frac{1}{d}(1-s) & = \sum_{x,u,v} \ip{\phii_z}{\psi_u} \ip{\psi_v}{\phii_z} \ip{\xi_v(x)}{\xi_u(x)} \\
& = \frac{1}{d} \sum_x \bigg\|\sum_u \omega^{uz} \xi_u(x)\bigg\|^2 \\
& = \frac{1}{d} \sum_x \bigg\|\sum_u \omega^{uz} \left(a_2(t)\delta_{x,0}\delta_{u,0} + \frac{1}{d} b(t)\right) \eta_u(x)\bigg\|^2\,,
\end{align*}
which gives
$$
s = \frac{1}{d-1} \bigg[ \sum_x \bigg\|\sum_u \left(a_2(t)\delta_{x,0}\delta_{u,0} + \frac{1}{d} b(t)\right) \eta_u(x)\bigg\|^2 - 1 \bigg]
$$
in the particular case with $z=0$. By triangular inequality,
\begin{align*}
\bigg\|\sum_u \left(a_2(t)\delta_{x,0}\delta_{u,0} + \frac{1}{d} b(t)\right) \eta_u(x)\bigg\| & \leq a_2(t)\delta_{x,0} + b(t)
\end{align*}
and therefore
\begin{align*}
s & \leq \frac{1}{d-1} \big[ \big(a_2(t) + b(t)\big)^2 + (d-1)b(t)^2 - 1 \big] \\
& = \frac{1}{d^2}\left[(d^2 - 2) (1 - t) + 2\sqrt{(1 - d^2) t^2 + (d^2 - 2) t + 1}\right]\,.
\end{align*}
For $s$ attaining the above equality, a joint instrument for $\Qo_s$ and $I_t$ actually exists, namely, the instrument $\tilde{\ii}^{(t)}$. It follows that
$$
s_{\rm max}(t) = \frac{1}{d^2}\left[(d^2 - 2) (1 - t) + 2\sqrt{(1 - d^2) t^2 + (d^2 - 2) t + 1}\right]\,.
$$
This proves the second inequality of \eqref{eq:QI_comp_neg}. On the other hand, we have
\begin{align*}
& \sum_x \bigg\|\sum_u \left(a_2(t)\delta_{x,0}\delta_{u,0} + \frac{1}{d} b(t)\right) \eta_u(x)\bigg\|^2 \geq \\
& \qquad\qquad \geq \bigg\|\sum_u \left(a_2(t)\delta_{u,0} + \frac{1}{d} b(t)\right) \eta_u(0)\bigg\|^2 \geq \\
& \qquad\qquad \geq \bigg[\left(a_2(t) + \frac{1}{d} b(t)\right) - \frac{1}{d} b(t) \bigg\|\sum_u (1-\delta_{u,0}) \eta_u(0)\bigg\|\bigg]^2 \,,
\end{align*}
where the last relation follows from another application of the triangular inequality. By the same reason,
$$
\bigg\|\sum_u (1-\delta_{u,0}) \eta_u(0)\bigg\| \leq d-1 \,.
$$
If $t\geq (d-2)/[2(d-1)]$, we have
$$
a_2(t) + \frac{1}{d} b(t) \geq \frac{d-1}{d} b(t) \,,
$$
which implies
\begin{align*}
& \bigg[\left(a_2(t) + \frac{1}{d} b(t)\right) - \frac{1}{d} b(t) \bigg\|\sum_u (1-\delta_{u,0}) \eta_u(0)\bigg\|\bigg]^2  \\
& \qquad\qquad\qquad\qquad\qquad\qquad \geq \bigg[\left(a_2(t) + \frac{1}{d} b(t)\right) - \frac{d-1}{d} b(t) \bigg]^2 \,.
\end{align*}
In the case with $d=2$ the last inequality is true also for $t<(d-2)/[2(d-1)]$ and it is actually an equality. It follows that, for $d=2$ or $t\geq (d-2)/[2(d-1)]$,
\begin{align*}
s & \geq \frac{1}{d-1} \bigg\{ \bigg[\left(a_2(t) + \frac{1}{d} b(t)\right) - \frac{d-1}{d} b(t) \bigg]^2 - 1 \bigg] \\
& = - \frac{2}{d^2}\left[1 - t + \sqrt{(1 - d^2) t^2 + (d^2 - 2) t + 1}\right]\,.
\end{align*}
We can find a joint instrument for $\Qo_s$ and $I_t$ also when $s$ attains the last equality, namely, the instrument $\tilde{\ii}_-^{(t)}$. As a consequence, if we still assume $d=2$ or $t\geq (d-2)/[2(d-1)]$, then
$$
s_{\rm min}(t) = - \frac{2}{d^2}\left[1 - t + \sqrt{(1 - d^2) t^2 + (d^2 - 2) t + 1}\right] \,.
$$
In particular, $s_{\rm min}\big((d-2)/[2(d-1)]\big) = m_1$. When $d\geq 3$, we have also $s_{\rm min}(m_2) = m_1$ because $\Qo_{m_1}$ and $I_{m_2}$ admit the joint instrument $\ii_\Lcorner$, hence $s_{\rm min}(t) = m_1$ for all $t\in\big[m_2,(d-2)/[2(d-1)]\big]$ by the usual convexity argument. The first inequality of \eqref{eq:QI_comp_neg} then follows by combining the two cases with $d=2$ and $d\geq 3$. 
\end{proof}
With the notations of the above proof, the noise parameter $s$ attains its maximal value if and only if there exists a unit vector $\eta(x)$ such that $\eta_u(x) = \eta(x)$ for all $x,u$. 
Indeed, there is no other possibility of saturating the triangular inequality that we used to derive $s_{\rm max}(t)$. 
If $\eta_u(x) = \eta(x)$ for all $x,u$, then $\ptr{\bar{1}}{\kb{\xi(x)}{\xi(x)}} = \tilde{S}^{(t)}(x)$ for all $x$. 
We conclude that, when the point $(s,t)$ attains the second equality in \eqref{eq:QI_comp_neg}, the instrument $\tilde{\ii}^{(t)}$ is actually the unique $W$-covariant joint instrument for $\Qo_s$ and $I_t$. It is an open question whether in this case there exist other (necessarily not $W$-covariant) joint instruments. 
Remarkably, for the joint instruments $\tilde{\ii}^{(t)}_-$ and $\ii_\Lcorner$, the uniqueness property is unclear also if we restrict to $W$-covariant joint instruments.

\section{Concluding remarks}\label{sec:conc}

We have compared three different ways to perform an approximate joint measurement of two incompatible meters: by directly measuring an approximate joint meter, by implementing a sequential measurement scheme, and by making use of approximate cloning. In this chain of scenarios it is allowed to postpone the choice of an increasing number meters: none, one, and two, respectively. One would quite naturally expect that the larger is the level of choice, the worse becomes the quality of the approximation. However, we have shown that the question is more intricate. Indeed, we proved that the first two scenarios are equivalent if we restrict to target meters that are sharp and mutually unbiased, while the last two scenarios are equivalent if we only assume that the target meters are sharp. Remarkably, of these equivalences, only the first one extends to the case of overnoisy measurements, i.e., noisy measurements in which the intensity of the noise exceeds unity. All our results rely on a detailed analysis of the compatibility regions involved in the three scenarios, whose structure reveals a striking similarity for devices of different types (pairs of meters, pairs of channels or mixed meter-channel pairs). We have found that this similarity is broken only by overnoisy devices. Moreover, the qubit system emerges as a very special case when the noise intensity can exceed unity.

The target meters that we considered throughout this paper were always sharp and the noise model always consisted of the uniform noise. It is natural to ask whether our results extend also to unsharp meters, or, if not, for which class of unsharp meters they still hold true. Moreover, when dealing with unsharp meters, other noise models are available besides the uniform one \cite{DeFaKa19}. It would be interesting to see whether the equivalences of the above scenarios depend or not on the considered noise model. Finally, further perspective may emerge by considering our results within the framework of compatibility witnesses. Indeed, the compatibility region of two mutually unbiased sharp meters was derived in \cite{CaHeTo19} by constructing a suitable family of incompatibility witnesses and characterizing the incompatible pair of meters that are detected by each witness in the family. For the compatibility region of two copies of the identity channel, a similar approach was used in \cite{CaHeMiTo19} only in the particular case of equally noisy devices. On the other hand, no connection between compatibility regions and incompatibility witnesses has still been explored in the mixed meter-channel case. Establishing such connection should most likely give more insight on the uniqueness of the optimal joint devices described in Section \ref{sec:ext2}, with implications for the possibility of determining them only from their action on (quantum or classical) subsystems.

\section{Acknowledgment}

C.C. acknowledges that the research activities  have been carried out in the framework of the INFN Research Project QGSKY.

T.H. acknowledges the financial support from the Business Finland project BEQAH.

A.T.~has been supported by the MUR grant Dipartimento di Eccellenza 2023–2027 of Dipartimento di Matematica, Politecnico di Milano, and by the INdAM -- GNAMPA project Probabilit\`a quantistica e applicazioni (CUP E53C23001670001).

\newpage

\appendix

\section{Channels preserving uniform noise}\label{app:noise}

The present appendix contains a detailed proof of the next two results that we used in Sections \ref{sec:joint} and \ref{sec:sequential}.
\begin{proposition}\label{prop:app_noise_1}
Suppose $\Gamma:\lc{d}\to\elle{\Cb^d\otimes\Cb^d}$ is a channel. Then, the following two conditions are equivalent:
\begin{enumerate}[(i)]
\item given any two sharp meters $\Qo$ and $\Po$, the meter $\Go^\Gamma_{\Qo,\Po}$ defined by the relation $\Go^\Gamma_{\Qo,\Po}(x,y) = \Gamma^\dag(\Qo(x)\otimes\Po(y))$ is an approximate joint meter for $\Qo$ and $\Po$;\label{it:app_noise_1_1}
\item the compositions $\ptr{\bar{1}}\circ\Gamma$ and $\ptr{\bar{2}}\circ\Gamma$ are depolarizing channels.
\end{enumerate}
\end{proposition}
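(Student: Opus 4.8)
The plan is to first reduce condition~(i) to a statement about the two marginal channels of $\Gamma$, and then to show that a channel whose adjoint maps every rank-one projection to a uniform-noise version of itself must be depolarizing. Writing $\Lambda_1 = {\rm tr}_{\bar 1}\circ\Gamma$ and $\Lambda_2 = {\rm tr}_{\bar 2}\circ\Gamma$, a direct duality computation gives $\sum_y \Go^\Gamma_{\Qo,\Po}(x,y) = \Gamma^\dag(\Qo(x)\otimes\id) = \Lambda_1^\dag(\Qo(x))$ and, symmetrically, $\sum_x \Go^\Gamma_{\Qo,\Po}(x,y) = \Lambda_2^\dag(\Po(y))$; the key elementary identity here is $\tr{\rhoo\,\Gamma^\dag(A\otimes\id)} = \tr{{\rm tr}_{\bar 1}[\Gamma(\rhoo)]\,A}$, so that $\Gamma^\dag(A\otimes\id) = \Lambda_1^\dag(A)$. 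Hence the two marginals of $\Go^\Gamma_{\Qo,\Po}$ are obtained by applying $\Lambda_1^\dag$ and $\Lambda_2^\dag$ to $\Qo$ and $\Po$, and $\Go^\Gamma_{\Qo,\Po}$ is an approximate joint meter of $\Qo$ and $\Po$ exactly when these marginals are the uniform-noise versions $\Qo_s$ and $\Po_t$. Since $\Qo$ and $\Po$ range independently over all sharp meters and every rank-one projection belongs to some orthonormal basis, condition~(i) is equivalent to the requirement that, for every rank-one projection $P$, there be scalars $s_P,t_P$ with $\Lambda_1^\dag(P) = s_P P + (1-s_P)\tfrac1d\id$ and $\Lambda_2^\dag(P) = t_P P + (1-t_P)\tfrac1d\id$. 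The implication (ii)$\Rightarrow$(i) is then immediate, because $I_r^\dag(A) = rA + (1-r)\tfrac1d\tr{A}\id$ is self-adjoint and sends each projection $P$ (having $\tr{P}=1$) to exactly $rP + (1-r)\tfrac1d\id$.

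For the converse I would fix one marginal, say $\Lambda \equiv \Lambda_1$, and prove that the \emph{a priori} $P$-dependent scalar $s_P$ is in fact constant. Since $\Lambda$ is trace preserving, $\Lambda^\dag$ is unital, so the linear map $\tilde\Lambda := \Lambda^\dag - \tfrac1d\tr{\cdot}\,\id$ annihilates $\id$ and satisfies $\tilde\Lambda(v_P) = \tilde\Lambda(P) = s_P\,v_P$ for the traceless Hermitian operator $v_P := P - \tfrac1d\id$, which is nonzero whenever $d\geq 2$. In other words, every $v_P$ is an eigenvector of the single fixed Hermiticity-preserving operator $\tilde\Lambda$, with real eigenvalue $s_P$.

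The heart of the argument, and the step I expect to be the main obstacle, is upgrading ``each $v_P$ is an eigenvector'' to ``$\tilde\Lambda$ acts as a scalar''. The naive idea of finding rank-one projections $P,P',P''$ with $v_{P''}$ a nontrivial real combination of $v_P,v_{P'}$ succeeds for $d=2$ (where the $v_P$ fill a full sphere in the three-dimensional space of traceless Hermitian operators, so any two-dimensional span meets it in a circle), but it \emph{fails} for $d\geq 3$: projections supported on a fixed two-dimensional subspace trace out a sphere that is not centered at the origin, and the linear span of $v_P,v_{P'}$ meets that sphere only in the chord $\{v_P,v_{P'}\}$. I would therefore argue instead by connectedness. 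The eigenvalue $s_P = \tr{v_P\,\tilde\Lambda(v_P)}/\tr{v_P^2}$ depends continuously on $P$, and it takes values in the finite set of eigenvalues of the fixed operator $\tilde\Lambda$; since the manifold of rank-one projections $\mathbb{CP}^{d-1}$ is connected, any continuous map from it into a finite set is constant, whence $s_P\equiv r$ for some fixed $r$.

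Once $s_P$ is known to be constant, the identity $\Lambda^\dag(P) = rP + (1-r)\tfrac1d\id$ holds for every rank-one projection $P$; because such projections span $\lc{d}$ (via the spectral theorem) and both sides are linear in their argument, this extends to $\Lambda^\dag(A) = rA + (1-r)\tfrac1d\tr{A}\id = I_r^\dag(A)$ for all $A$, i.e.\ $\Lambda_1 = I_r$ is depolarizing. Running the identical argument for $\Lambda_2$ yields $\Lambda_2 = I_{r'}$, establishing (i)$\Rightarrow$(ii) and completing the equivalence. The only delicate points to verify carefully are the continuity and reality of $s_P$ (which follow from $v_P\neq 0$ and from $\tilde\Lambda$ preserving Hermiticity) and the connectedness of $\mathbb{CP}^{d-1}$, both of which are standard.
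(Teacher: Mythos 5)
Your proposal is correct, and its outer skeleton coincides with the paper's: the same duality computation reduces condition (i) to the statement that each marginal channel's adjoint sends every rank-one projection $P$ to $s_P P + (1-s_P)\tfrac{1}{d}\id$ with an a priori $P$-dependent scalar, and (ii)$\Rightarrow$(i) is immediate in both treatments. Where you genuinely diverge is the constancy step, which the paper isolates as Proposition \ref{prop:lin} and proves with the purely algebraic Lemma \ref{lem:lin} (a linear map for which every nonzero vector is an eigenvector is scalar). Notably, the paper does \emph{not} apply that lemma directly to the operators $v_P = P - \tfrac{1}{d}\id$ -- which would run into exactly the obstruction you describe -- but to the single operator $\Phi^\dag(P_{\ca V})$ restricted to each two-dimensional subspace $\ca V$, deducing $s(\xi^\parallel) = s(\xi^\perp)$ and hence constancy when $d\geq 3$, with a separate argument for $d=2$ applying the lemma on the real space $\lc{2}_0$ of traceless Hermitian operators via $A_\xi = \no{\xi}^2\bigl(\kb{\xi^\parallel}{\xi^\parallel} - \kb{\xi^\perp}{\xi^\perp}\bigr)$. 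Your alternative -- observing that $\tilde\Lambda(v_P) = s_P v_P$ forces $s_P$ into the finite spectrum of the fixed Hermiticity-preserving map $\tilde\Lambda$, that $s_P = \tr{v_P\tilde\Lambda(v_P)}/\tr{v_P^2}$ is continuous (indeed $\tr{v_P^2} = 1-\tfrac{1}{d}$ is constant), and that the manifold of rank-one projections is connected -- is sound and buys uniformity in $d$: no case split at $d=2$. The paper's route buys freedom from topology, using only linear algebra. Your diagnosis of why the naive two-point span argument fails for $d\geq 3$ is accurate (the span of $v_P, v_{P'}$ meets the relevant translated Bloch sphere only in those two points), and it is precisely the failure the paper's detour through $\Phi^\dag(P_{\ca V})$ circumvents; one could also rescue a purely algebraic argument by noting the relation $\sum_x v_{P_x} = 0$ over any orthonormal basis, which together with linear independence of $d-1$ of the $v_{P_x}$ gives constancy within each basis and then, for $d\geq 3$, across bases sharing a common projection -- but your connectedness argument is at least as clean. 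One cosmetic remark: your claimed \emph{equivalence} of (i) with the per-projection condition is not quite literal, since (i) demands a noise parameter constant across the outcomes of each fixed basis while the per-projection condition does not; however, you only use the implication (i)$\Rightarrow$(per-projection), which is valid, and the converse follows a posteriori from (ii)$\Rightarrow$(i), so there is no logical gap.
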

\begin{proposition}\label{prop:app_noise_2}
Suppose $\Qo$ is a sharp meter and $\ii$ is an instrument. Moreover, assume the meter $\ii^\dag(\cdot,\id)$ is a noisy version of $\Qo$. Then, the following two conditions are equivalent:
\begin{enumerate}[(i)]
\item given any sharp meter $\Po$, the meter $\Go^\ii_\Po$ defined by the relation $\Go^\ii_\Po(x,y) = \ii^\dag(x,\Po(y))$ is an approximate joint meter for $\Qo$ and $\Po$;\label{it:app_noise_2_1}
\item the sum $\sum_x\ii(x,\cdot)$ is a depolarizing channel.
\end{enumerate}
\end{proposition}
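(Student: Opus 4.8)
The plan is to reduce everything to a rigidity property of the total channel $\Phi = \sum_x\ii(x,\cdot)$, which is itself a channel by the definition of instrument, so its adjoint $\Phi^\dag$ is unital: $\Phi^\dag(\id)=\id$. First I would compute the two marginals of $\Go^\ii_\Po$. Summing over the second index gives $\sum_y\Go^\ii_\Po(x,y)=\ii^\dag(x,\sum_y\Po(y))=\ii^\dag(x,\id)$, which is the prescribed noisy version of $\Qo$ by hypothesis, \emph{irrespective} of $\Po$; hence the first marginal never imposes any constraint, and condition (i) is entirely a statement about the second marginal $\sum_x\Go^\ii_\Po(x,y)=\Phi^\dag(\Po(y))$. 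With this reduction the implication (ii)$\Rightarrow$(i) is immediate: if $\Phi=I_t$ then $\Phi^\dag(\Po(y))=t\Po(y)+(1-t)\tfrac1d\id=\Po_t(y)$ because $\tr{\Po(y)}=1$, so $\Go^\ii_\Po$ is a joint meter for the noisy pair $(\Qo_s,\Po_t)$.

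For the converse I would first restate (i) as a condition on $\Phi^\dag$ alone. Since every rank-one projection $P$ arises as an effect $\Po(y)$ of some sharp meter, condition (i) forces, for each such $P$, the existence of a scalar $t_P$ with $\Phi^\dag(P)=t_P P+(1-t_P)\tfrac1d\id$. Subtracting $\tfrac1d\id$ and using unitality, this is the eigenvalue relation $\Phi^\dag(P_0)=t_P P_0$ for the traceless part $P_0=P-\tfrac1d\id$, which is nonzero for $d\geq 2$. The whole difficulty now collapses to showing that $t_P$ does not depend on $P$.

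This is the main obstacle, and I would dispose of it with two complementary observations. On one hand, $t_P$ is an eigenvalue of the \emph{fixed} linear operator $\Phi^\dag$ on $\lc{d}$, so it ranges over the finite set $\mathrm{spec}(\Phi^\dag)$. On the other hand, pairing the defining relation with $P$ yields the explicit formula $t_P=(d\,\tr{P\Phi^\dag(P)}-1)/(d-1)$, which is a continuous (indeed polynomial) function of $P$ on the connected set of rank-one projections, i.e.\ on $\mathbb{CP}^{d-1}$. A continuous function valued in a finite set on a connected domain is constant, whence $t_P\equiv t$.

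Finally I would upgrade this pointwise identity to the global one. The traceless parts $\{P_0\}$ span the traceless Hermitian operators: if some traceless Hermitian $H$ were Hilbert--Schmidt orthogonal to all of them, then $\tr{HP}=\ip{\psi}{H\psi}=0$ for every unit vector $\psi$, forcing $H=0$. Hence, by real- and then complex-linearity, $\Phi^\dag(A_0)=t A_0$ for every traceless $A_0$. Combining this with $\Phi^\dag(\id)=\id$ and the decomposition $A=\tfrac1d\tr{A}\,\id+A_0$ gives $\Phi^\dag(A)=tA+(1-t)\tfrac1d\tr{A}\id=I_t^\dag(A)$, that is $\Phi=I_t$, which is exactly condition (ii).
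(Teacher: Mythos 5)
Your proof is correct, and its crux is genuinely different from the paper's. Both arguments perform the same initial reduction: the first marginal of $\Go^\ii_\Po$ equals $\ii^\dag(\cdot,\id)$ regardless of $\Po$, so condition (i) amounts to $\Phi^\dag(P)=t_P P+(1-t_P)\tfrac{1}{d}\id$ for every rank-one projection $P$, where $\Phi=\sum_x\ii(x,\cdot)$; this is exactly the reduction the paper makes in the remarks preceding Proposition \ref{prop:lin}. The two proofs diverge on the key point, namely showing that $t_P$ does not depend on $P$. The paper establishes this via the algebraic rigidity Lemma \ref{lem:lin} (a linear map for which every nonzero vector is an eigenvector is scalar), applied first to $\Phi^\dag(P)$ acting on two-dimensional subspaces and then, in the special case $d=2$, a second time on the real space of traceless Hermitian operators through the spectral theorem; that route requires a separate treatment of $d=2$ versus $d\geq 3$. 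You instead observe that $\Phi^\dag(P-\tfrac{1}{d}\id)=t_P\,(P-\tfrac{1}{d}\id)$ exhibits $t_P$ as an eigenvalue of the fixed operator $\Phi^\dag$ on $\lc{d}$, hence valued in a finite set, while the explicit formula $t_P=(d\,\tr{P\Phi^\dag(P)}-1)/(d-1)$ shows $t_P$ varies continuously over the connected manifold of rank-one projections; a continuous map from a connected set into a finite set is constant. This topological argument is uniform in $d\geq 2$ and eliminates the paper's case analysis entirely, at the modest cost of invoking connectedness of $\mathbb{CP}^{d-1}$ where the paper stays within elementary linear algebra (Lemma \ref{lem:lin} is also stated over $\Rb$ or $\Cb$ for arbitrary linear spaces, so it is marginally more portable). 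Your spanning step (traceless parts of rank-one projections span the traceless Hermitians, then complexify) and the reassembly $\Phi^\dag(A)=tA+(1-t)\tfrac{1}{d}\tr{A}\,\id$ are sound; only two micro-steps are left tacit and deserve a line each: $\Phi=I_t$ follows from $\Phi^\dag=I_t$ because $I_t$ is self-adjoint with respect to the Hilbert--Schmidt pairing, and the constant $t$ lies in $[0,1]$ --- so that $\Phi$ is a depolarizing channel in the convention of \eqref{eq:Ir} --- because each $t_P$ arises as the noise parameter of the noisy version $\Po_{t_P}$ in the definition of an approximate joint meter.
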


By setting $\Phi^\Gamma_i = \ptr{\bar{i}}\circ\Gamma$, condition \eqref{it:app_noise_1_1} of Proposition \ref{prop:app_noise_1} is equivalent to require that $(\Phi^\Gamma_1)^\dag\circ\Qo = \Qo_{s(\Qo)}$ and $(\Phi^\Gamma_2)^\dag\circ\Po = \Po_{t(\Po)}$ for all sharp meters $\Qo$ and $\Po$, where the noise parameters $s(\Qo),t(\Po)\in [0,1]$ may depend on $\Qo$ and $\Po$. Indeed, this is directly implied by the relations $\sum_y \Go^\Gamma_{\Qo,\Po}(x,y) = (\Phi^\Gamma_1)^\dag(\Qo(x))$ and $\sum_x \Go^\Gamma_{\Qo,\Po}(x,y) = (\Phi^\Gamma_2)^\dag(\Po(y))$. In a similar way, if we set $\Phi^\ii = \sum_x\ii(x,\cdot)$, as a consequence of the relation $\sum_x \Go^\ii_\Po(x,y) = (\Phi^\ii)^\dag(\Po(y))$, condition \eqref{it:app_noise_2_1} of Proposition \ref{prop:app_noise_2} is equivalent to require that $(\Phi^\ii)^\dag\circ\Po = \Po_{t(\Po)}$ for all sharp meters $\Po$, where the noise parameter $t(\Po)\in [0,1]$ may again be $\Po$-dependent. In order to remove the dependence on sharp meters from the noise parameters, we then need the next general result, which implies Propositions \ref{prop:app_noise_1} and \ref{prop:app_noise_2} as immediate corollaries.

\begin{proposition}\label{prop:lin}
Suppose the linear map $\Phi:\lc{d}\to\lc{d}$ is such that its adjoint $\Phi^\dag$ satisfies the relation $\Phi^\dag(\kb{\xi}{\xi}) = s(\xi) \kb{\xi}{\xi} + (1-s(\xi))\tfrac{1}{d}\id$ for all unit vectors $\xi\in\Cb^d$, where $s(\xi)\in\Rb$ possibly depends on $\xi$. Then, $\Phi = s_0 I + (1-s_0)\tfrac{1}{d}\id {\rm tr}$ for some fixed $s_0\in\Rb$.
\end{proposition}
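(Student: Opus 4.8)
The plan is to prove the equivalent statement for the adjoint $T := \Phi^\dag$: that the hypothesis $T(\kb{\xi}{\xi}) = s(\xi)\kb{\xi}{\xi} + (1-s(\xi))\tfrac{1}{d}\id$ (for all unit $\xi$) forces $s(\xi)$ to be a constant $s_0$. The conclusion for $\Phi$ then follows immediately, since the rank-one projections span $\lc{d}$, so linearity upgrades the hypothesis to $T(A)=s_0 A + (1-s_0)\tfrac1d\tr{A}\id$ for all $A$, and taking the adjoint of this returns $\Phi = s_0 I + (1-s_0)\tfrac1d\id\,{\rm tr}$.

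First I would record the consequences of linearity that set up the rigidity. Taking traces in the hypothesis gives $\tr{T(\kb{\xi}{\xi})}=1$ for every unit $\xi$; as the projections span $\lc{d}$, the map $T$ is trace preserving and Hermiticity preserving, hence restricts to one fixed linear map $R$ on the space $\mathcal{V}_0$ of traceless operators. Writing $E_\xi := \kb{\xi}{\xi}-\tfrac1d\id\in\mathcal V_0$ and the fixed vector $c := T(\tfrac1d\id)-\tfrac1d\id\in\mathcal V_0$, the hypothesis rearranges into the single identity
\begin{equation}\tag{$\star$}
R(E_\xi)=s(\xi)\,E_\xi - c \qquad \text{for every unit vector } \xi.
\end{equation}
The whole point is that the left-hand side is governed by one fixed operator $R$ and one fixed shift $c$, which is a strong constraint on the a priori arbitrary scaling $s$.

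Then I would treat $d\ge 3$ and $d=2$ separately. For $d\ge3$, summing $(\star)$ over an orthonormal basis $\{e_i\}$ and using $\sum_i E_{e_i}=0$ (so $\sum_i R(E_{e_i})=R(0)=0$) yields $d\,c=\sum_i s(e_i)\kb{e_i}{e_i}-\tfrac1d(\sum_i s(e_i))\id$, which must be the same fixed operator for every basis. Comparing it for a basis and for the pair $e_1',e_2'$ obtained by rotating $e_1,e_2$ inside their two-plane, the difference is supported on $\mathrm{span}(e_1,e_2)$ yet equals a multiple of $\id$; evaluating on a third basis vector (this is exactly where $d\ge3$ enters) forces that multiple to vanish, leaving $s(e_1)\kb{e_1}{e_1}+s(e_2)\kb{e_2}{e_2}=s(e_1')\kb{e_1'}{e_1'}+s(e_2')\kb{e_2'}{e_2'}$ for all rotation angles. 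Reading off the off-diagonal entry forces $s$ to agree on the two rotated vectors, and the diagonal entries then give $s(e_1)=s(e_2)$ for every orthonormal pair. Since any two unit vectors admit a common orthogonal unit vector when $d\ge3$, this makes $s$ constant, and substituting back into the displayed formula for $d\,c$ gives $c=0$.

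For $d=2$ the third-vector step is unavailable, so I would pass to the Bloch picture: $E_\xi=\tfrac12\,\vec n\cdot\vec\sigma$ with $\vec n$ on the unit sphere of $\Rb^3$, and $(\star)$ reads $\tilde R\vec n+\vec c=s(\vec n)\vec n$ for a fixed real $3\times3$ matrix $\tilde R$ and vector $\vec c$. Comparing the identity at $\vec n$ and at $-\vec n$ (the latter being the Bloch vector of the orthogonal state $\xi^\perp$) gives $2\vec c=(s(\vec n)-s(-\vec n))\vec n$; evaluating at two linearly independent directions forces $\vec c=0$ and $s(\vec n)=s(-\vec n)$. With $\vec c=0$ the relation becomes $\tilde R\vec n=s(\vec n)\vec n$ for every $\vec n$, i.e. every vector of $\Rb^3$ is an eigenvector of $\tilde R$, so $\tilde R$ is a scalar and $s$ is constant. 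I expect the genuine obstacle to lie precisely here: the qubit case, where the extra symmetry of the Bloch sphere must be exploited directly, together with the careful extraction of $s(e_1)=s(e_2)$ from the operator identity in the $d\ge3$ case. In both regimes the underlying mechanism is the same, namely that a single fixed linear map $R$ cannot support a non-constant scaling $s$.
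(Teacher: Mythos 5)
Your proposal is correct, and it reaches the conclusion by a genuinely different route from the paper's. The paper's engine is one elementary lemma (Lemma~\ref{lem:lin}: a linear map for which every nonzero vector is an eigenvector must be scalar), applied twice: first to $\Phi^\dag(P)$ for $P$ the projection onto the two-plane spanned by $\xi_1,\xi_2$ --- since $\Phi^\dag(P)\xi=c(\xi)\xi$ on that plane, constancy of $c$ gives $s(\xi^\parallel)=s(\xi^\perp)$, and for $d\geq3$ the explicit value $c_0=\tfrac1d\left[2+(d-2)s(\xi^\parallel)\right]$ then pins $s$ down --- and second, for $d=2$, to $\Phi^\dag$ restricted to the real space $\lc{2}_0$ of traceless Hermitian operators. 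You instead package the hypothesis as the affine identity $R(E_\xi)=s(\xi)E_\xi-c$ on traceless operators and extract constancy by two different devices: for $d\geq3$, summing over orthonormal bases (so that $\sum_i E_{e_i}=0$ fixes $dc$) and comparing a basis with its rotation inside a two-plane, where evaluating on a third basis vector kills the multiple of $\id$ and matching matrix entries yields $s(e_1)=s(e_2)$ for every orthonormal pair; for $d=2$, the Bloch parametrization with the antipodal comparison $2\vec c=(s(\vec n)-s(-\vec n))\vec n$, after which $\tilde R\vec n=s(\vec n)\vec n$ for all $\vec n$ is exactly the paper's lemma over $\Rb^3$. So the $d=2$ endgames coincide in mechanism, while your $d\geq3$ argument trades the paper's slick projection trick for a more computational but equally elementary comparison of rotated bases; what your route buys is that it isolates precisely where $d\geq3$ enters (the third basis vector) and makes the vanishing of the shift $c$ explicit, whereas the paper's version is shorter because a single lemma serves both dimensions. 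Two details to spell out when writing this up: the off-diagonal matching only forces $s(e_1')=s(e_2')$ when $\sin\theta\cos\theta\neq0$, so it is cleanest to fix $\theta=\pi/4$ (one such angle already gives $s(e_1)=s(e_2)$); and you should note that $s(\xi)$ depends only on the projection $\kb{\xi}{\xi}$ --- immediate from your identity $(\star)$ since $E_\xi\neq0$ --- which is what legitimizes writing $s(\vec n)$ in the Bloch picture.
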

The proof follows from the simple lemma below.
\begin{lemma}\label{lem:lin}
Suppose $\ca{V}$ is a linear space over the field of scalars $\Fb=\Rb$ or $\Fb=\Cb$, and let $T:\ca{V}\to\ca{V}$ be a linear map such that $T\xi = f(\xi)\xi$ for all nonzero $\xi\in\ca{V}$, where $f(\xi)\in\Fb$ possibly depends on $\xi$. Then, the map $f:\ca{V}\setminus\{0\}\to\Fb$ is constant.
\end{lemma}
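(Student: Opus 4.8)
The plan is to prove that the scalar function $f$ assigns the same value to any two nonzero vectors $\xi,\eta\in\ca{V}$, which is precisely the statement that $f$ is constant on $\ca{V}\setminus\{0\}$. I would split the argument according to whether $\xi$ and $\eta$ are linearly dependent or linearly independent.

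First, suppose $\xi$ and $\eta$ are linearly dependent, so that $\eta = c\xi$ for some nonzero scalar $c\in\Fb$. Applying the hypothesis to both vectors and using the linearity of $T$, I would compute $f(\eta)\eta = T\eta = T(c\xi) = c\,T\xi = c\,f(\xi)\xi = f(\xi)\eta$. Since $\eta\neq 0$, this forces $f(\eta)=f(\xi)$.

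Next, suppose $\xi$ and $\eta$ are linearly independent. Then $\xi+\eta$ is nonzero, so the hypothesis gives $T(\xi+\eta)=f(\xi+\eta)(\xi+\eta)$, while linearity yields $T(\xi+\eta)=f(\xi)\xi+f(\eta)\eta$. Subtracting the two expressions, I obtain $(f(\xi+\eta)-f(\xi))\xi + (f(\xi+\eta)-f(\eta))\eta = 0$, and the linear independence of $\xi$ and $\eta$ forces both coefficients to vanish. Hence $f(\xi)=f(\xi+\eta)=f(\eta)$.

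Combining the two cases, $f(\xi)=f(\eta)$ for every pair of nonzero vectors, so $f$ is constant as claimed. I do not expect any genuine obstacle, since the argument is entirely elementary; the only point requiring a little care is not to overlook the linearly dependent case — in particular the situation $\dim\ca{V}=1$, in which no two nonzero vectors are independent and the sum trick never applies — but this case is dispatched immediately by the homogeneity computation above.
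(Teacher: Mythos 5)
Your proof is correct and follows essentially the same route as the paper: the same case split into linearly dependent and linearly independent pairs, with the identity $T(\xi+\eta)=f(\xi)\xi+f(\eta)\eta=f(\xi+\eta)(\xi+\eta)$ and linear independence forcing $f(\xi)=f(\eta)$ in the second case. The only difference is cosmetic -- you spell out the homogeneity computation for the dependent case, which the paper dismisses as trivial.
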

\begin{proof}
For any two $\xi_1,\xi_2\in\ca{V}\setminus\{0\}$, either $\xi_1\in\Fb\xi_2$, in which case the equality $f(\xi_1) = f(\xi_2)$ is trivial, or $\xi_1$ and $\xi_2$ are linearly independent. In the latter case the equality $f(\xi_1) = f(\xi_2) = f(\xi_1+\xi_2)$ immediately follows from the relation
$$
f(\xi_1) \xi_1 + f(\xi_2) \xi_2 = T(\xi_1) + T(\xi_2) = T(\xi_1+\xi_2) = f(\xi_1+\xi_2) (\xi_1 + \xi_2)\,.
$$
\end{proof}
\begin{proof}[Proof of Proposition \ref{prop:lin}]
It is enough to prove that $s(\xi) = s_0$ for some $s_0\in\Rb$ and all unit vectors $\xi\in\Cb^d$. Indeed, if this is the case, the resulting relation $\Phi^\dag(A) = s_0 A + (1-s_0)\tfrac{1}{d}\id\tr{A}$ holding for $A=\kb{\xi}{\xi}$ extends to all $A\in\lc{d}$ because $\lc{d}$ is linearly spanned by rank-$1$ operators, and then $\Phi = \Phi^\dag = s_0 I + (1-s_0)\tfrac{1}{d}\id {\rm tr}$. We therefore proceed by picking any two unit vectors $\xi_1,\xi_2\in\Cb^d$ and proving that $s(\xi_1) = s(\xi_2)$. Since the latter equality is trivial if $\kb{\xi_1}{\xi_1} = \kb{\xi_2}{\xi_2}$, we assume that $\kb{\xi_1}{\xi_1} \neq \kb{\xi_2}{\xi_2}$. Then, let $P$ be the projection of $\Cb^d$ onto the $2$-dimensional linear subspace $\ca{V}$ spanned by $\xi_1$ and $\xi_2$. For any $\xi\in\ca{V}\setminus\{0\}$, define the unit vector $\xi^\parallel = \xi / \no{\xi}$ and fix any unit vector $\xi^\perp$ of $\ca{V}$ such that $\ip{\xi}{\xi^\perp} = 0$. By observing that
\begin{align*}
\Phi^\dag(P) & = \Phi^\dag(\kb{\xi^\parallel}{\xi^\parallel} + \kb{\xi^\perp}{\xi^\perp}) \\
& = s(\xi^\parallel) \kb{\xi^\parallel}{\xi^\parallel} + s(\xi^\perp) \kb{\xi^\perp}{\xi^\perp} + ( 2 - s(\xi^\parallel) - s(\xi^\perp) )\tfrac{1}{d}\id
\end{align*}
and denoting
$$
c(\xi) = \left[s(\xi^\parallel) + \tfrac{1}{d} (2-s(\xi^\parallel)-s(\xi^\perp)\right] \,,
$$
we find the relation
$$
\Phi^\dag(P) \xi = c(\xi) \xi \,.
$$
Lemma \ref{lem:lin} then implies that $c(\xi) = c_0$ for some $c_0\in\Rb$ and all $\xi\in\ca{V}\setminus\{0\}$. In particolar, $c(\xi^\parallel) = c(\xi^\perp)$ and then $s(\xi^\parallel) = s(\xi^\perp)$, which yields
$$
c_0 = c(\xi) = \tfrac{1}{d} \left[ 2 + (d-2) s(\xi^\parallel) \right]
$$
for all $\xi\in\ca{V}\setminus\{0\}$. If $d \geq 3$, the last relation applied to $\xi=\xi_1$ and $\xi=\xi_2$ implies that $s(\xi_1) = s(\xi_2)$, as claimed. If $d=2$, we define the real linear space
$$
\lc{2}_0 = \{A\in\lc{2}\mid A^\ast = A \text{ and } \tr{A} = 0\}
$$
and we observe that any vector $\xi\in\Cb^2\setminus\{0\}$ induces an operator $A_\xi\in\lc{2}_0$ defined as
$$
A_\xi = \no{\xi}^2 (\kb{\xi^\parallel}{\xi^\parallel} - \kb{\xi^\perp}{\xi^\perp})\,.
$$
This operator satisfies the equality
\begin{align*}
\Phi^\dag(A_\xi) = {} & \no{\xi}^2 \left[ s(\xi^\parallel) \kb{\xi^\parallel}{\xi^\parallel} + (1-s(\xi^\parallel)) \tfrac{1}{2} \id + \right. \\
{} & \left. \qquad\quad - s(\xi^\perp) \kb{\xi^\perp}{\xi^\perp} - (1-s(\xi^\perp)) \tfrac{1}{2} \id \right] = s(\xi^\parallel) A_\xi
\end{align*}
since $s(\xi^\parallel) = s(\xi^\perp)$. As a consequence of spectral theorem, every nonzero $A\in\lc{2}_0$ is such that $A = A_\xi$ for some $\xi\in\Cb^2\setminus\{0\}$. Then, the mapping $\xi\mapsto s(\xi^\parallel)$ is constant by Lemma \ref{lem:lin}, and in particular $s(\xi_1) = s(\xi_2)$ also in this case.
\end{proof}

\section{The extremal channel $\Gamma_\Lcorner$}\label{app:Gamma_J}

In this appendix we prove that the map $\Gamma_\Lcorner$ defined in \eqref{eq:Gammacorner} is a joint channel for two copies of the channel $I_{m_2}$. In this way, we complete the argument of \cite[Eq.~(24) and Fig.~B.1]{Hashagen17} by showing that the point $(m_2,m_2)$ does actually belong to the set of all single clone fidelities that are attainable within universal quantum cloning.

We only prove that $\Gamma_\Lcorner$ is CP, as the equalities $\ptr{\bar{1}}{\Gamma_\Lcorner} = \ptr{\bar{2}}{\Gamma_\Lcorner} = I_{m_2}$ can be verified by routine calculations. To do it, we introduce the {\em Choi operator} of the map $\Gamma_\Lcorner$ with respect to the orthonormal basis $\{\phii_0,\ldots,\phii_{d-1}\}$, i.e., the following element $\hat{\Gamma}_\Lcorner\in\elle{\Cb^d\otimes\Cb^d\otimes\Cb^d}$
$$
\hat{\Gamma}_\Lcorner = \frac{1}{d}\sum_{x,y}\kb{\phii_x}{\phii_y}\otimes\Gamma_\Lcorner(\kb{\phii_x}{\phii_y})\,.
$$
Denoting by $\omega$ the the maximally entangled state of $\Cb^d\otimes\Cb^d$, i.e.,
$$
\omega = \frac{1}{\sqrt{d}}\sum_x\phii_x\otimes\phii_x\,,
$$
we have $\hat{\Gamma}_\Lcorner = (I\otimes\Gamma_\Lcorner)(\kb{\omega}{\omega})$ and then
$$
\hat{\Gamma}_\Lcorner = \frac{1}{d(d^2-2)}\bigg(\id\otimes\id\otimes\id - \sum_{h=\pm 1} V_hV_h^*\bigg)\,,
$$
where $V_h:\Cb^d\to\Cb^d\otimes\Cb^d\otimes\Cb^d$ is the linear map defined by
$$
V_h\xi = \sqrt{\frac{2d}{d+h}}(\id\otimes S_h)(\omega\otimes\xi)
$$
for all $\xi\in\Cb^d$. An easy calculation yields
$$
\ip{(\id\otimes S_h)(\omega\otimes\phii_x)}{(\id\otimes S_h)(\omega\otimes\phii_y)} = \frac{d+h}{2d}\delta_{x,y}
$$
for all $x,y$. This implies that the linear map $V_h$ is an isometry, hence $V_hV_h^*$ is a projection. Since the projections $S_{+1}$ and $S_{-1}$ have mutually orthogonal ranges, the same fact is true for the projections $V_{+1}V_{+1}^*$ and $V_{-1}V_{-1}^*$. It follows that the Choi operator $\hat{\Gamma}_\Lcorner$ is positive semidefinite, hence $\Gamma_\Lcorner$ is CP.

\end{document}